\documentclass[prd,aps,amsfonts,superscriptaddress,nofootinbib,longbibliography,notitlepage,eqsecnum]{revtex4-1}
\usepackage{graphicx}
\usepackage{xcolor}
\usepackage{caption}
\usepackage{rotating}
\usepackage{dsfont}
\usepackage{amsmath,amssymb,graphics,amsthm,isomath}
\usepackage{subcaption}
\usepackage{physics }
\usepackage{circuitikz}
\usepackage[colorlinks=true, urlcolor=violet, linkcolor=blue, citecolor=red, hyperindex=true, linktocpage=true]{hyperref}
\usepackage[capitalise,compress]{cleveref}
\usepackage{comment}

\usepackage{mathrsfs}

\newtheorem{thm}{Theorem}
\newtheorem{cor}[thm]{Corollary}
\newtheorem{lem}[thm]{Lemma}

\newtheorem{obs}[thm]{Observation}

\theoremstyle{definition}
\newtheorem{defn}[thm]{Definition}
\newtheorem{nota}[thm]{Notation}

\makeatletter
\renewcommand{\p@subsection}{}
\renewcommand{\p@subsubsection}{}
\makeatother

\usepackage{xcolor}
\usepackage{mathtools}
\usepackage{tikz}

\usetikzlibrary{automata,positioning,backgrounds}
\usetikzlibrary{decorations.markings}

\newcommand{\mycomment}[1]{}

\usepackage{dsfont}
\setcounter{MaxMatrixCols}{40}

\begin{document}
\author{Andrew Osborne}
\affiliation{Department of Electrical and Computer Engineering, Princeton University, Princeton, NJ 08544, USA}
\author{Ciro Salcedo}
\affiliation{Princeton Quantum Initiative, Princeton University, Princeton, NJ 08544, USA}
\author{Andrew A. Houck}
\affiliation{Department of Electrical and Computer Engineering, Princeton University, Princeton, NJ 08544, USA}
\affiliation{Princeton Quantum Initiative, Princeton University, Princeton, NJ 08544, USA}
\title{Glued tree lattices with only compact localized states}
\begin{abstract}  
    Flat band physics is a central theme in modern condensed matter physics. 
    By constructing a tight--binding single particle system that has vanishing momentum dispersion in one or more bands, and subsequently including more particles and interactions, it is possible to study physics in strongly interacting regimes. 
    Inspired by the glued trees that first arose in one of the few known examples of quantum supremacy, 
    we define and analyze two infinite families of tight binding single particle Bose--Hubbard models that have only flat bands, and only compact localized states despite having any nonnegative number of translation symmetries. 
    The first class of model that we introduce is constructed by replacing a sufficiently large fraction of the edges in a generic countable graph with glued trees modified to have complex hoppings. 
    The second class arises from thinking of complex weighted glued trees as rhombi that can then be used to tile two dimensional space, giving rise to the familiar dice lattice and infinitely many generalizations thereof, of which some are Euclidian while others are hyperbolic. 
\end{abstract}
\maketitle
\tableofcontents
\section{Introduction}
A principal goal of traditional solid state physics \cite{Ashcroft} is to understand the properties of large collections atoms arranged in a regular and periodic fashion. 
Crystalline material properties are determined, at least in part, by the manner in which constituent particles are arranged. Spatial symmetries of lattices provide simplifications that are powerful tools both formally and practically. 
When a crystal lattice has some positive number of translation symmetries, its eigenvalues and eigenvectors may be described in terms of the same number of conjugate momenta. This is the (informal) statement of the lauded theorem of Bloch, a central pillar of condensed matter physics. 

There is, however, more to say about materials than what follows from lattice structure. 
When interactions dominate dynamics, exotic phases of matter may emerge. 
There has been a great deal \cite{PhysRevLett.106.236803, PhysRevLett.106.236804,Jotzu_2014} of recent interest in strongly interacting phases of matter and their relation to lattice structure. 
One of the challenges associated with experimental realizations of strongly interacting phases of matter is that Coloumb's law and the charge of an electron are fixed by nature; it is not possible to simply demand that physical particles interact more strongly than they used to.
Instead, one must contrive scenarios where ordinary interactions are strong compared to some other scale.  

When an eigenenergy band is \textit{flat}, it is independent of any of the momenta generated by translation symmetries of a lattice. From a condensed matter perspective, this is of interest because any nonzero interactions will dominate dynamics. The strongly interacting dynamics enabled by flat band physics are extensively studied in various forms including magic angle twisted bilayer materials \cite{Andrei_2020,Cao_2018,doi:10.1073/pnas.1108174108}, kagome materials \cite{PhysRevResearch.4.023063, linegraphs,Wang_2023,Ye_2024}, and pyrochlore \cite{Huang_2024,PhysRevB.109.165147} materials. There are also recent studies devoted to simulating flat band physics using superconducting circuits \cite{oliver,rhombic_exp,hyplatqed} and networks of resonators in circuit QED \cite{linegraphs,hyperblochmath}.

In this paper, we present an several infinite families of lattices that have only compact localized eigenstates. 
It is possible to construct such lattices in any number of Euclidian spatial dimensions 
(and also noneuclidian lattices) that have any number $n > 2$ of flat bands, and only flat bands 
at the single particle tight--binding level. 
We construct these lattices using two different approaches.
In the first approach, we prove that a lattice with only compact localized states can be realized by replacing a sufficiently  large number of edges on a countable graph with a complex--weighted glued tree, thereby generalizing the 
rhombic lattice\footnote{This lattice has also been called the Rhombi--chain lattice\cite{rizzi}}\cite{PhysRevLett.88.227005,rizzi,kremer2018nonquantizedsquareroottopologicalinsulators}, which has been studied for some time. 
A second approach arises from viewing glued trees as rhombi that partition regular polygons that are then arranged into tessellations of space; the dice lattice \cite{dicehaldane,debnath2024magnonsdicelatticetopological,majorana,PhysRevB.101.235406} is only example of a known lattice that arises from this construction. 
The mechanism that yields this very special eigenstate structure is the destructive interference of paths of equal length across plaquettes, or Aharonov--Bohm caging.

\subsection{An outline of the paper and summary of results}

In an effort to appeal to mathematical audiences, we have elected for a presentation tending toward the rigorous. 
On the other hand, we aim to be pedagogical and self contained, so we include explicitly a number of definitions and intermediate steps that could reasonably be omitted. 

In this paper, our ultimate goal is to introduce and study many families of Hamiltonian models with eigenstates that, despite having any number translation symmetries, are purely and perfectly local. 
In a few words, the mechanism that we exploit to this end is Aharonov--Bohm caging, whereby a particle is confined to some region because the paths that leave the region must destructively interfere.
A set of $n > 1$ complex numbers $e^{\mathrm i \theta_i}$ of unit norm are said to ``destructively interfere" if 
\begin{equation}\label{eqn:destruct}
    \sum_{i = 1}^n e^{\mathrm i \theta_i} = 0. 
\end{equation}
The simplest example of a lattice model that has only compact localized eigenstates caused by destructive interference is 
\begin{equation}\label{eqn:trivial}
    H = \delta \sum_{i \in \mathbb Z} |i \rangle \langle i +1 |+ \text{h.c.}
\end{equation}
with $\delta = \sum_j e^{\mathrm i \theta_j}$. 
If $\delta = 0$, then $H = 0$, and any vector is a null vector of $H$. 
Of course, this example is too simple to be of any real interest.
Nonetheless, the mechanism whereby eigenstates are confined in space by (\ref{eqn:trivial}) is transparent: 
particles cannot travel any distance whatsoever and thus have no dispersion. 
The simplest nontrivial extension of this idea is to
add $n$ disjoint paths between site $i$ and site $i+1$, each with its own new vertex. 
This construction gives rise to a quasi--one dimensional lattice which we will study extensively in the main sections of this paper. 

The motif that will allow us to create lattices that centrally feature destructive interference is the so--called glued tree \cite{childs}, which arises as an example of constructive interference separating classical and quantum computation. 
Heuristically, glued trees are a family of graph defined by the manner in which they are constructed. The term ``glued tree" refers to a graph made from a pair of trees that have been glued together at the leaves. 
We find that chains of glued trees, or lattices arising from tilings of space with regular polygons made of glued trees can be constructed to have only compact localized eigenstates. 
By understanding the connection between intrinsically one--dimensional glued tree lattices like the rhombic lattice and higher dimensional lattices like the dice lattice \cite{dicehaldane,majorana,PhysRevB.101.235406,debnath2024magnonsdicelatticetopological}, we are able to generalize both. 

We begin in section \ref{sec:warmup} with a simple and well--known example of a model with only compact localized eigenstates. 
Then, we introduce central definitions in Section \ref{sec:gluedtrees}. 
In particular, we make a precise definition of a glued tree, along with discussions relating complex--weighted adjacency matrices to 
tight--binding Bose--Hubbard models, and we argue that such matrices are a natural object of study both physically and mathematically. 

In section \ref{sec:lattices}, we begin studying infinite lattices constructed from the motifs defined in Section \ref{sec:gluedtrees}. 
We solve the adjacency matrix eigenvalue problem explicitly for many glued trees. For the glued trees that do not seem to admit a simple eigenvalue formula, we provide instead an algorithm for finding eigenvalues that runs in a time logarithmic in the number of vertices. 
This result can be achieved either by manipulating a characteristic polynomial or by explicitly constructing eigenstates. 
In section \ref{sec:abstates}, we prove that destructive interference can result in the perfect confinement of particles to finite regions in a lattice, and that this can happen at a great variety of different configurations, including points where ``magnetic flux" takes on some arbitrarily small value. Again, this result can be achieved either by studying a characteristic polynomial or by manipulating the resolvent of the adjacency matrix. 
Then, we go on to prove our main theorem in section \ref{sec:confine} which says, intuitively, that the eigenstates of a Hamiltonian must all be compact localized if a particle initialized to an arbitrary site on a lattice is perfectly confined to a finite radius. 

Finally, we introduce many lattices that can be shown by Theorem \ref{thm:main} to have only compact localized eigenstates in section \ref{sec:examples}, beginning with a recipe for constructing any number of such lattices from any countable graph. 
Among the examples we introduce are several infinite families of model that are Euclidian or hyperbolic two dimensional generalizations of the dice lattice with the additional property that no net magnetic field is required to localize eigenstates. 
The models of this kind that we introduce can be easily extended to any $\{p,3\}$ or $\{2n, 2q\}$ tiling of space with $p \geq 6$, $n \geq 2$, and sufficiently large $q$. 
The arguments in the main text are largely unreliant on Bloch's theorem, allowing for simple theorems that apply to hyperbolic and Eulcidian (as well as aperiodic) graphs alike. 

The appendices contain alternate proofs to (some slightly less general versions of) several theorems in the main text.

\subsection{A warm--up example}\label{sec:warmup}

\begin{figure}
    \centering
    \begin{subfigure}{0.35\linewidth}
    \includegraphics[height=1.10\linewidth]{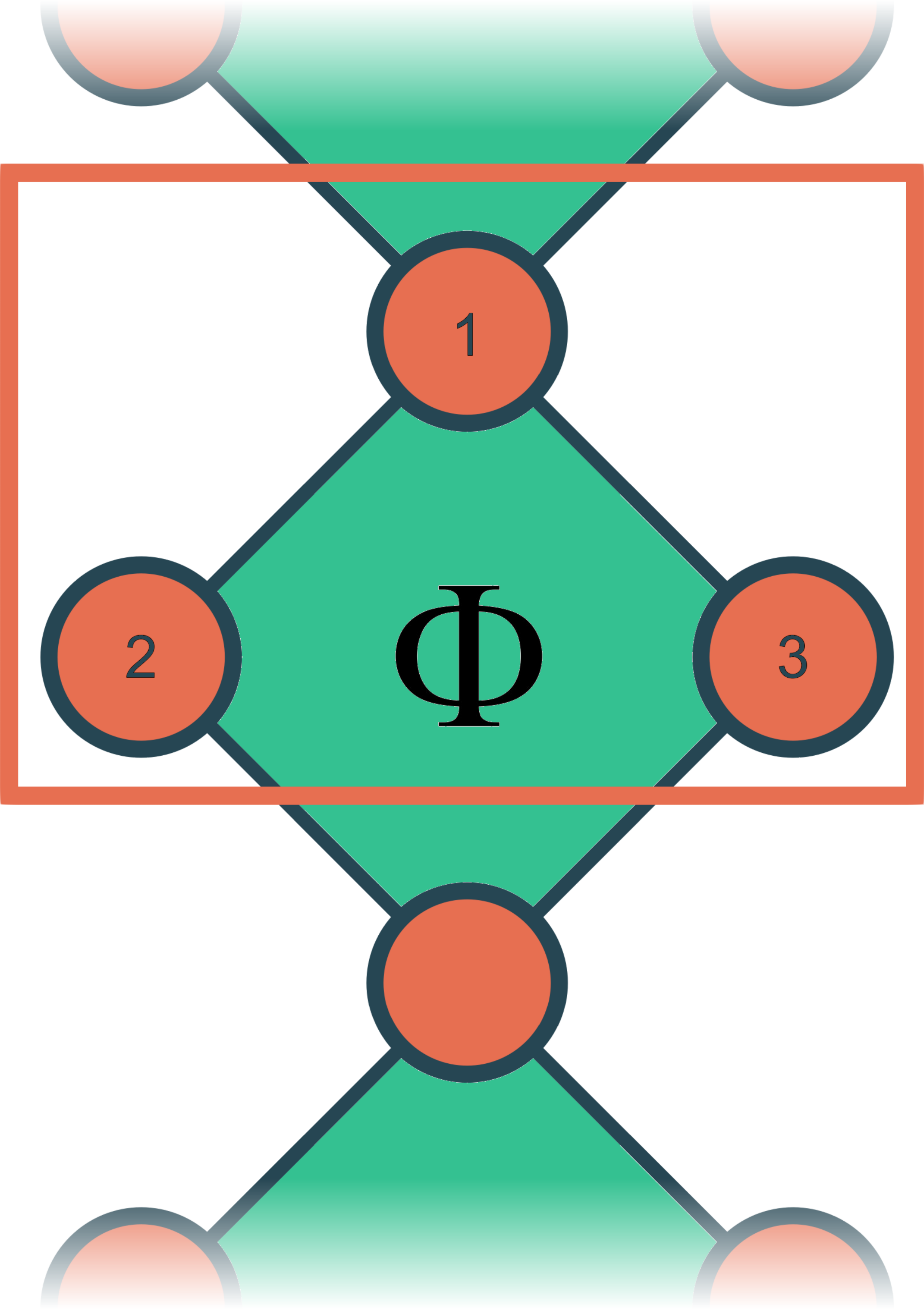}
    \end{subfigure}
    \begin{subfigure}{0.60\linewidth}
    \includegraphics[height=.65\linewidth]{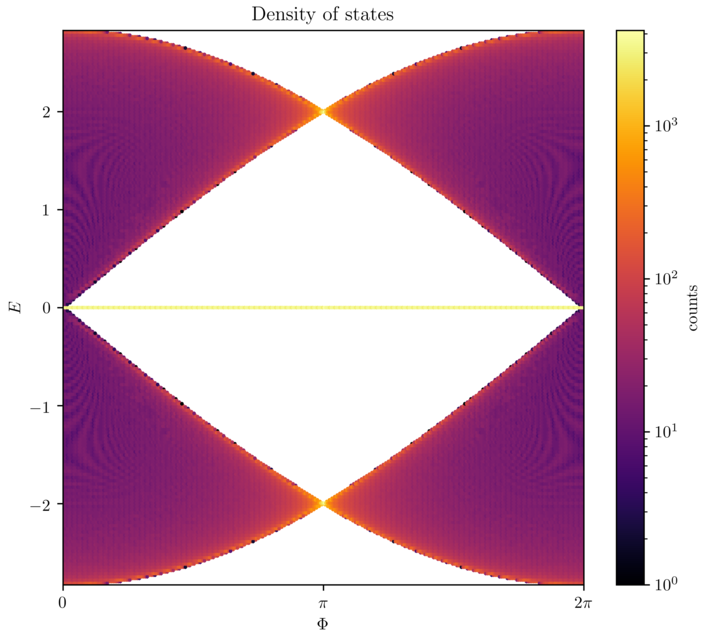}
    \end{subfigure}
    \caption{\textbf{Left: } The rhombic lattice, a quasi--one dimensional lattice that has all flat bands at $\Phi = \pi$. An orange box encloses a single unit cell. The single particle rhombic lattice Hamiltonian eigenenergies are the eigenvalues of a matrix in $\mathbb C^{3 \times 3}$.
    \textbf{Right: }  The density of states of the rhombic lattice at various values of $\Phi$. The fact that the colored points are finite in number at the vertical slice at $\Phi = \pi$ means that the bands of the rhombic lattice are flat at $\Phi = \pi$. Another fact that can be inferred from the figure above is that the spectrum at $\Phi = 0$ is gapless, and that the largest spectral gap at any value of $\Phi$ occurs at $\Phi = \pi$. The horizontal line at $E = 0$ means that there is always a flat band at zero energy. 
    }
    \label{fig:rhombic}
\end{figure}

Consider the graph drawn in Figure \ref{fig:rhombic}. The lattice depicted is the rhombic lattice \cite{rhombic_exp,kremer2018nonquantizedsquareroottopologicalinsulators,PhysRevLett.88.227005}, which is of some theoretical and experimental interest.  Circles represents  sites, edges represent complex hopping terms of unit norm, and plaquettes are shaded to indicate that there is a flux $\Phi$ piercing such faces. 
The phases associated with hopping terms are chosen to satisfy the fluxes indicated by shading. 
The Bose--Hubbard Hamiltonian for this lattice is given without loss of generality by 
\begin{equation}\label{eqn:rhombicham}
    H = \sum_{n = -\infty}^\infty  \left(e^{\mathrm i \phi_{01} } a^\dagger_{3 n} a_{3n + 1} +  e^{\mathrm i \phi_{02}} a^\dagger_{3n} a_{3 n + 2}  + e^{\mathrm i \phi_{23}} a^\dagger_{3n + 2} a_{3n + 3} +e^{\mathrm i \phi_{13}}a^{\dagger}_{3n + 1} a_{3n+3} \right) + \text{h.c.}
\end{equation}
with $\phi_{ij} = -\phi_{ji}$, and 
\begin{equation}\label{eqn:plaflux}
    \phi_{01} + \phi_{13} + \phi_{32} + \phi_{20} = \Phi.
\end{equation}
The fact that (\ref{eqn:plaflux}) is underconstrained corresponds to a gauge freedom, which is unimportant for our purposes. 
We are interested in the properties of this Hamiltonian restricted to the subspace of states $|\psi\rangle$ such that 
\begin{equation}
    \sum_{i = -\infty}^\infty N_i |\psi\rangle = |\psi\rangle, 
\end{equation}
where $N_i = a_i^\dagger a_i $ is the number operator. Such states will be called single particle states, and they of course form a submanifold of the full Hilbert space of states on which (\ref{eqn:rhombicham}) may act. 
The Hamiltonian (\ref{eqn:rhombicham}) restricted to the single particle submanifold of states is called the \textit{single particle} Hamiltonian, and it is most simply recognized as the adjacency matrix of the underlying lattice, together with nonzero phases along each edge. 
Explicitly 
\begin{equation}\label{eqn:single}
    H_{\text{single particle}} = 
    \sum_{\langle i,j \rangle} e^{\mathrm i \phi_{ij}} |i \rangle \langle j | + e^{\mathrm i \phi_{ji}} |j\rangle \langle i |. 
\end{equation}
Fourier transforming or equivalently applying Bloch's theorem (\ref{eqn:single}) yields the so--called Bloch Hamiltonian: 
\begin{equation}
    H(k) = \begin{pmatrix}
        0 & e^{\mathrm i \phi_{01}} + e^{\mathrm i \phi_{31}}e^{\mathrm i k} & e^{\mathrm i \phi_{02}} + e^{\mathrm i \phi_{32}}e^{\mathrm i k} \\ 
        e^{\mathrm i \phi_{10}} + e^{\mathrm i \phi_{13}} e^{-\mathrm i k} & 0 & 0 \\
        e^{\mathrm i \phi_{20}} + e^{\mathrm i \phi_{23} } e^{-\mathrm i k} & 0 & 0 
    \end{pmatrix}.
\end{equation}
After choosing a gauge where $\phi_{01} = \Phi$ and all other $\phi_{ij}$ are zero, we find 
\begin{equation}
    H(k) = 
    \begin{pmatrix}
        0 & e^{\mathrm i \Phi} + e^{\mathrm i k} & 1 + e^{\mathrm i k} \\
        e^{-\mathrm i \Phi} + e^{-\mathrm i k} & 0 & 0 \\
        1 + e^{-\mathrm i k} & 0 & 0 
    \end{pmatrix}.
\end{equation}
Very straightforwardly, we find energy eigenvalues to be 
\begin{equation}
    E_i(k) = \tau_i  \sqrt{2} \sqrt{2 + \cos(k) + \cos(k - \Phi)}
\end{equation}
with $\tau_i \in \{1,0,-1\}$.
Of course, if $\Phi = \pi$, $E_i(k)$ is completely independent of $k$, regardless of the value of $\tau_i$. 

From a physics perspective, it is rather intuitive that the rhombic lattice at $\pi$ flux yields all flat bands, since the paths across a plaquette destructively interfere \'a la  Aharanov--Bohm effect, but despite this intuition, mathematical generalizations are nontrivial. 
Nonetheless, our goal in this paper is to produce the broadest family of lattices with the property that a correctly chosen plaquette flux makes  \emph{all} energy eigenvalues of the relevant single particle Hamiltonian independent of $k$ as we can.

\section{Graph theory and glued trees}\label{sec:gluedtrees}
We expect that many of the following definitions are commonly known, but we include them so that this paper is self contained nonetheless. 
We reserve a bold definition heading for definitions that are either nonstandard or of particular importance.
A \textit{graph} is a tuple $(E,V)$ with $V \subset \mathbb Z$, and $E \subset V \times V$. 
All of the graphs of interest in this paper may safely be regarded as undirected.
The elements of $V$ are called \textit{vertices}, and the \textit{size} of a graph is the number of vertices in its vertex set. 
Requiring that the vertices of graphs are integers is no loss of generality since we only consider graphs with a countable number of vertices. 
A \textit{path} between a pair of vertices is a sequence of edges that starts on  one vertex and ends on the other. 
The \textit{distance} between a pair of vertices is the smallest number of edges can be traversed in order to travel from one vertex to the other. 
The \textit{diameter} of a graph is the largest distance between any pair of vertices in that graph. 
The \textit{degree} of a vertex in a graph is the number of edges connected to that vertex. 
\begin{figure}
    \includegraphics[width=.8\linewidth]{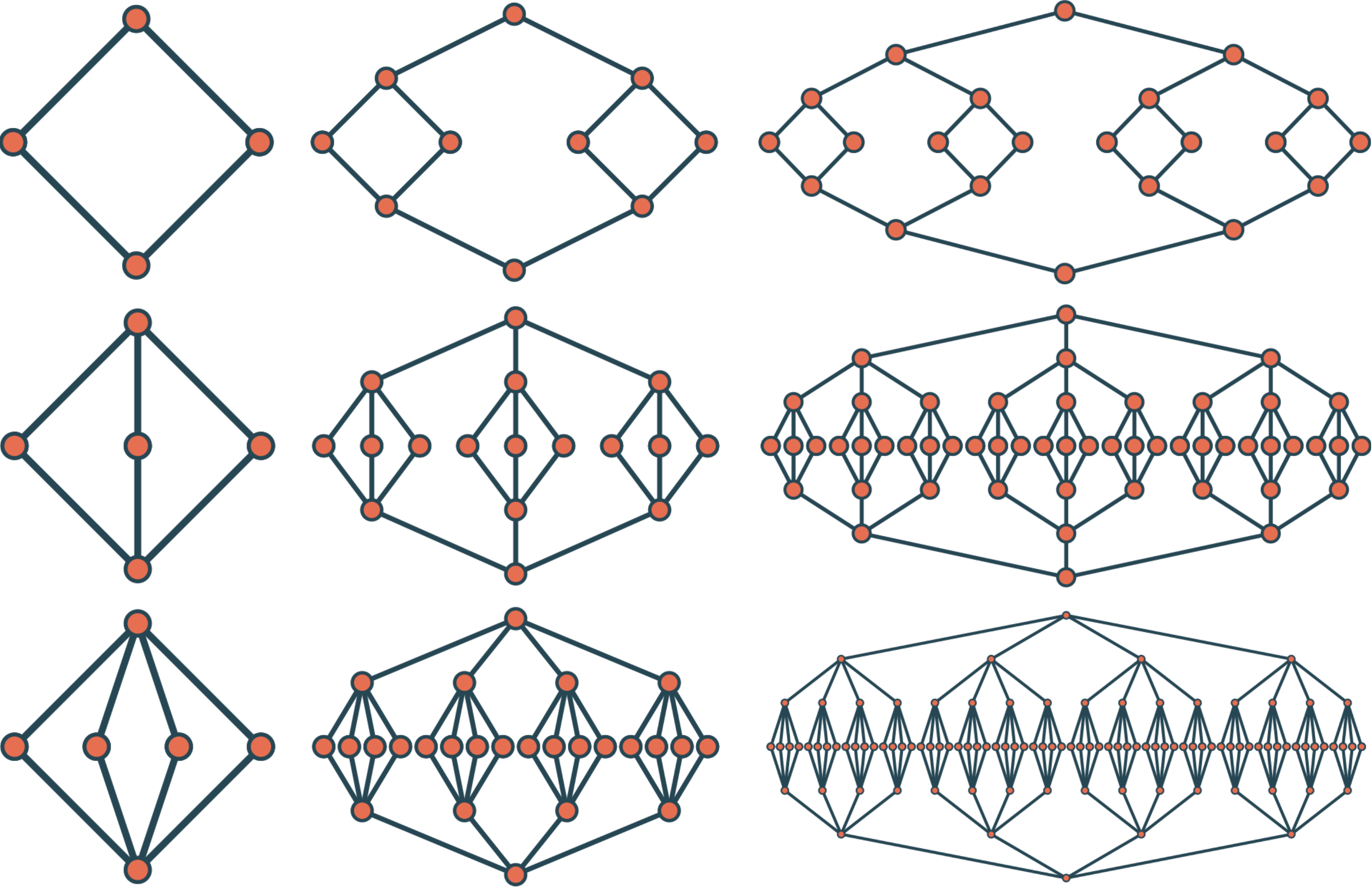}
    \caption{Various examples of $p$--nary trees. The $p$--nary trees with $p = 2,3,4$ are shown in of each row starting with shrubs and ending with depth to $d = 3$}\label{fig:tree_examples}
\end{figure}
The \textit{adjacency matrix} of a graph $(E,V)$ is a linear operator $A$ on $\ell^2(\mathbb Z)$ such that 
\begin{equation}
    \langle u | A | v \rangle = \begin{cases}
        1 & \{u,v\} \in E \\ 
        0 & \{u,v\} \not\in E.
    \end{cases}
\end{equation}
For graphs with a finite number of vertices, we implicitly restrict the domain of adjacency matrices to be of a finite dimension. 
When we refer to the spectrum of a graph, we refer to the eigenvalue spectrum of the graph's adjacency matrix. 
A graph is determined by its adjacency matrix and vice--versa. We will use $\mathcal A(G)$ to mean the adjacency matrix of the graph $G$, and likewise $\mathcal G(A)$ to mean the graph whose adjacency matrix is $A$. 

\begin{defn}[$p$--shrub]
    Fix $p > 0$ to be an integer.
    Let 
    \begin{equation}
        V = \{0,1,2,\dots,p+1\},
    \end{equation}
    and 
    \begin{equation}
        E   = \{ \{0,i\}\,:\, 1 \leq i \leq p\} \cup \{(i,p+1) \,:\, 1 \leq i \leq p\}
    \end{equation}
    is called the $\mathbf p$\textbf{--shrub}. We will use the symbol $S_p$ to denote the $p$--shrub. 
\end{defn}
The $p$--shrub is also precicely the $(2,p)$ complete bipartite graph, $K_{2,p}$. 
The adjacency matrix of $S_p$ is given by 
\begin{equation}
    \mathcal A(S_p) = \begin{pmatrix}
        0 & 1 & 1 & \dots & 1& 1& 0 \\ 
        1 & 0 & 0 & \dots & 0 & 0 & 1 \\ 
        1      & 0 & 0 & \dots & 0 & 0 & 1 \\ 
        \vdots & \vdots  & \vdots  &  \ddots     & \vdots  &  \vdots & \vdots \\ 
        1      & 0 & 0 & \dots & 0 & 0 & 1 \\ 
        1      & 0 & 0 & \dots & 0 & 0 & 1 \\ 
        0 & 1 & 1 & \dots & 1& 1& 0\\ 
    \end{pmatrix}.
\end{equation}

\begin{nota}
    Let $B \in \mathbb C^{n \times n}$ be a square matrix. We write $|f_B\rangle$ to mean the element of $\mathbb R^{n}$ whose first entry is equal to one and all others are equal to zero. 
    Likewise $|l_B\rangle$ is the vector in the same space whose last entry is equal to one and all others are equal to zero. 
\end{nota}
Our reason for choosing this notation is that $B$ and $|f_B\rangle$ are compatible, for any square matrix $B$. 
\begin{nota}
    Fix some positive integer $n$. We write $\mathbb I_n$ to mean the $n$ by $n$ identity matrix, and we write $|\mathds 1_n\rangle$ to be the vector in $\mathbb R^n$ whose entries are all equal to one. 
\end{nota}

\begin{defn}[Growth] \label{defn:growth}
    Fix $d$ to be a positive integer, and let $X = \{x_1, x_2 , \dots x_d\} $ be a sequence of positive integers such that $x_d$ is greater than one. 
    Let $Y_1, Y_2, \dots, Y_d$ satisfy the partitioned matrix recurrence relation. 
    \begin{equation}\label{eqn:growth}
        \begin{aligned}
            Y_{i} &= \begin{pmatrix}
                0 & \langle \mathds 1_{x_i} | \otimes \langle f_{Y_{i-1}} |   & 0 \\
                |\mathds 1_{x_i} \rangle \otimes | f_{Y_{i-1}}\rangle  & \mathbb I_{x_i} \otimes Y_{i-1} & |\mathds 1_{x_i}\rangle \otimes | l_{Y_{i-1}}\rangle \\ 
                0 & \langle \mathds 1 _{x_i} | \otimes \langle l_{Y_{i-1}} | & 0 
            \end{pmatrix} \\
            Y_1 &= \mathcal A (S_{x_1})
        \end{aligned} 
    \end{equation}
    where $\otimes$ is the Kronecker product.
    The recurrence relation (\ref{eqn:growth})  is called the \textbf{growth relation} determined by $X$. We say that $Y_d$ is the matrix \textbf{grown} by $X$, and we will write $A^X$ to denote the matrix grown by $X$. Explicitly, here $A^X := Y_d$. 
\end{defn}
We forbid $x_d = 1$ with no loss of generality, as we will see in Section \ref{sec:examples}.

\begin{defn}[Glued tree]\label{defn:gluedtree}
   Let $G$ be a graph. If there exists a sequence of numbers $X = \{x_1, x_2, \dots, x_d\}$  of length $d > 0$ such that $x_d > 1$, and $G = \mathcal G(A^X)$, we say that $G$ is a \textbf{glued tree}. We use the term \textit{depth} of $G = \mathcal G( A^X)$ to mean $|X|$. 
\end{defn}

Glued trees are precisely the graphs that may be grown from shrubs. 
Every glued tree may be produced by the following algorithm
\footnote{The procedure below is not precisely the same as Definition \ref{defn:growth} since we haven't been careful about which vertices should be regarded as ``roots". Doing so is possible, but we regard this possibility as unnecessarily technical. }, given an appropriate sequence $X$:
\begin{enumerate}
    \item Define $Y_1$ to be $S_{x_1}$, the $x_1$--shrub 
    \item Make $x_i$ copies of $Y_{i-1}$, together with an additional pair of vertices that are initially isolated
    \item Choose one vertex of degree $x_{i-1}$ ``roots" in each copy of $Y_{i-1}$, and connect them to a single isolated vertex
    \item Connect the remaining isolated vertices to the ``roots" of degree $x_{i-1}$ that were not chosen in the previous step
    \item Repeat until $X$ is exhausted.
\end{enumerate}

We have chosen the name ``glued tree" in part because this is the name that was chosen in \cite{childs}, where these graphs arose as a construction central  to one of the few known examples of an exponential separation between classical and quantum computation. 
Another reason for this name is that glued trees can always be constructed by building a tree, making a copy of the tree, and glueing the vertices of the two copies together. 
We also find it amusing to think of trees being things that may be grown from shrubs.  
Figure \ref{fig:tree_examples} contains various examples of glued trees. 

\begin{defn}[$p$--nary]\label{defn:pnary}
    Let $G = \mathcal G(A^X)$ be a glued tree. We say that $G$ is $\mathbf p$\textbf{--nary} if the elements of $X = \{x_1, x_2, \dots, x_d\}$ satisfy 
    \begin{equation}
        p = x_1 = x_2 = \dots = x_d.
    \end{equation}
\end{defn}
\begin{obs}
    Every glued tree of depth one is $p$--nary. 
\end{obs}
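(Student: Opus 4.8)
The plan is to simply unwind the definitions; the statement is an immediate consequence of the fact that a length-one sequence has a single entry, so no real work is required. First I would observe that, by Definition \ref{defn:gluedtree}, saying that $G$ is a glued tree of depth one means precisely that there is a sequence $X = \{x_1\}$ with $x_1 > 1$ such that $G = \mathcal G(A^X)$. For such an $X$ the growth relation (\ref{eqn:growth}) never iterates: it terminates at its base case, so $A^X = Y_1 = \mathcal A(S_{x_1})$ and hence $G = S_{x_1}$ is the $x_1$-shrub.

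Then I would apply Definition \ref{defn:pnary} to this same sequence $X$: since $d = |X| = 1$, the chain of equalities $p = x_1 = x_2 = \dots = x_d$ collapses to the single requirement $p = x_1$, which is satisfied by taking $p := x_1$. Therefore $G$ is $x_1$-nary, and in particular it is $p$-nary for (the) appropriate $p$, which is what the observation asserts.

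I do not expect any genuine obstacle. The one point that deserves a moment's care is that both "depth" and "$p$-nary" are defined relative to a presentation $G = \mathcal G(A^X)$, so the argument must use the \emph{same} witnessing sequence $X$ in both definitions — which it does. No appeal to the detailed structure of shrubs, and no spectral input, is needed; the proof is purely a matter of reading off the $d=1$ case of Definitions \ref{defn:growth} and \ref{defn:pnary}.
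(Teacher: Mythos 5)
Your proposal is correct and follows exactly the reasoning the paper intends: the paper itself treats this observation as an immediate unwinding of the definitions, remarking only that a depth-one glued tree is simply a shrub, which is the $d=1$ base case $A^X = Y_1 = \mathcal A(S_{x_1})$ together with the collapse of the $p$-nary condition to $p = x_1$. Your added care about using the same witnessing sequence $X$ in both definitions is a sensible precision, but there is no substantive difference from the paper.
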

In other words, a glued tree of depth one is simply a shrub. Equivalently, every $p$--shrub is a depth one $p$--nary glued tree. 

\begin{figure}
    \includegraphics[width=1\linewidth]{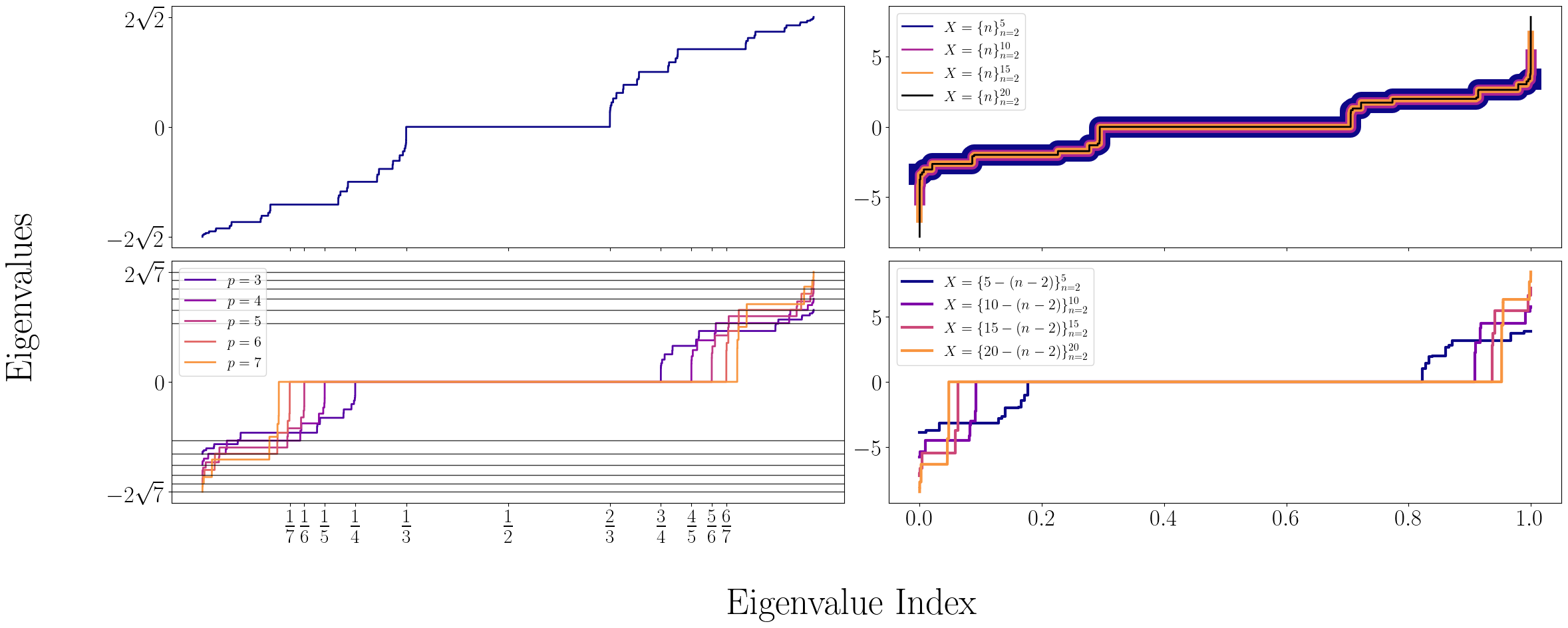}
    \caption{Eigenvalues of various glued trees sorted in increasing order. Horizontal axes are scaled by the total number of eigenvalues. 
    Notice that approximately $\frac{p-1}{p+1}$ of the total eigenvalues of the $p$--nary glued trees are zero.
    \textbf{Top Left:} The eigenvalues of the binary ($2$--nary) glued tree of depth 2000. We remark that the curve depicted bears a striking resemblance to the Cantor ternary function. \textbf{Bottom Left:} Eigenvalue spectra of $p$-nary trees of depth 2000 for various values of $p$. The horizontal lines  denote the largest eigenvalues of each tree, which approach $2 \sqrt p$. \textbf{Top Right:}Note that $\{n\}_{n = 2}^5 = \{2,3,4,5\}$, which is a notation defining a sequence. The eigenvalue spectra of upward cascading trees. \textbf{Bottom Right:} Eigenvalue spectra of downward cascading trees.  
    Note that the degeneracy of the zero eigenvalue appears to be fixed by $x_1$ in the sense that approximately $\frac{x_1 - 1}{x_1 + 1}$ of eigenvalues are zero.
    }
    \label{fig:eigvals}
\end{figure}
\begin{thm}\label{thm:fluxless}
    Let $X = \{x_1, x_2, x_3, \dots, x_d\}$ be a sequence of positive integers with $x_i > 1$ for all $i = 1,2,\dots,d$. 
    For $i \geq 1$, let  
\begin{equation}
H_i = \begin{pmatrix}
0 & x_i \langle f_{H_{i-1}}| & 0 \\
x_i | f_{H_{i-1}}\rangle & H_{i-1} & x_i|l_{H_{i-1}}\rangle \\
0 & x_i\langle l_{H_{i-1}}| & 0
\end{pmatrix}
\end{equation}
and $H_0 = 0$. Finally let $N_i = \prod_{j = 1}^i x_j$.

There exists a unitary $U$ such that 
    \begin{equation}
        U^\dagger A^X U = H_{d} \oplus  \bigoplus_{i = 0}^{d-1} (\mathbb I_{(x_{i+1} - 1)N_d / N_{i+1} } \otimes H_{i}).
    \end{equation}
    In particular, the eigenvalue spectrum of $A^X$ contains many copies of the smaller tridiagonal matrices $H_i$ for $i = 0,1,\dots,d$.

\end{thm}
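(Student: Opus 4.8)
The plan is to induct on the depth $d$, using the block structure of the growth relation (\ref{eqn:growth}) directly. The key observation is that $A^X = Y_d$ is built from $x_d$ identical copies of $Y_{d-1}$ (appearing as $\mathbb I_{x_d} \otimes Y_{d-1}$), together with two extra vertices (the new ``top'' and ``bottom'' roots) that couple symmetrically to the $|f\rangle$ and $|l\rangle$ vectors of each copy. The symmetry we exploit is the permutation action of $S_{x_d}$ on the $x_d$ copies: decomposing $\mathbb C^{x_d}$ into the trivial representation (spanned by $\frac{1}{\sqrt{x_d}}|\mathds 1_{x_d}\rangle$) and its orthogonal complement (dimension $x_d - 1$), we get a change of basis $U_d$ that block-diagonalizes $Y_d$. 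On the symmetric subspace, the two boundary vertices couple to $\frac{1}{\sqrt{x_d}}|\mathds 1_{x_d}\rangle \otimes |f_{Y_{d-1}}\rangle$ etc.\ with strength $\sqrt{x_d}$, which reproduces exactly the $\begin{pmatrix} 0 & x_d\langle f | & 0 \\ x_d|f\rangle & \,\cdot\, & x_d|l\rangle \\ 0 & x_d\langle l| & 0 \end{pmatrix}$ pattern once we renormalize — i.e.\ we get $H_d$ with $H_{d-1}$ replaced by whatever $Y_{d-1}$ is. On each of the $x_d - 1$ copies of the complementary subspace, the boundary vertices decouple entirely (since $\langle \mathds 1_{x_d}|v\rangle = 0$ for $v$ in the complement), leaving just $\mathbb I_{x_d - 1} \otimes Y_{d-1}$.

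So the first step is to make precise this single-layer decomposition: $\exists\, U_d$ unitary with
\begin{equation}
U_d^\dagger Y_d U_d = \widetilde H_d(Y_{d-1}) \,\oplus\, \bigl(\mathbb I_{x_d - 1} \otimes Y_{d-1}\bigr),
\end{equation}
where $\widetilde H_d(M)$ denotes the $(|M|+2)\times(|M|+2)$ matrix obtained by bordering $M$ with two vertices coupled via $x_d|f_M\rangle$, $x_d|l_M\rangle$ (note the coupling becomes $\sqrt{x_d}\cdot\sqrt{x_d} = x_d$ after absorbing normalization into the off-diagonal — one must check the bookkeeping here carefully, as the factor could come out as $\sqrt{x_d}$ depending on how the isometry is normalized; this is the one genuinely delicate point). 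Crucially $\widetilde H_d$ does not disturb the internal structure of $Y_{d-1}$: $\widetilde H_d(Y_{d-1})$ has $Y_{d-1}$ sitting in its interior block.

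The second step is to recurse. Apply the same single-layer decomposition inside the surviving $Y_{d-1}$ blocks — both the one inside $\widetilde H_d(Y_{d-1})$ and the $x_d - 1$ standalone copies. Because $\widetilde H_d(\cdot)$ acts only on the border, conjugating by $\mathbb I_2 \oplus U_{d-1} \oplus \cdots$ turns $\widetilde H_d(Y_{d-1})$ into $\widetilde H_d\bigl(\widetilde H_{d-1}(Y_{d-2})\bigr) \oplus \bigl(\mathbb I_{x_{d-1}-1} \otimes Y_{d-2}\bigr)$, and the standalone copies each split the same way. Iterating down to $Y_1 = \mathcal A(S_{x_1}) = \widetilde H_1(H_0)$ (here $H_0 = 0$, and the shrub is exactly two vertices bordering the empty $0\times 0$ matrix... more precisely $S_{x_1}$ decomposes as $\widetilde H_1(0_{1\times 1})$ on its symmetric part plus $(x_1-1)$ zeros, matching $H_1 \oplus \mathbb I_{x_1-1}\otimes H_0$), one sees that $\widetilde H_d(\widetilde H_{d-1}(\cdots \widetilde H_1(H_0)\cdots)) = H_d$ — this composition of borderings is precisely the recurrence defining $H_i$ in the statement. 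Collecting the leftover pieces at each level of the recursion: level $i$ (for $i = 1, \dots, d$) contributes copies of $\mathbb I_{x_i - 1} \otimes Y_{i-1}$, and by induction $Y_{i-1}$ itself decomposes into $H_{i-1} \oplus (\text{copies of smaller } H_j)$; a careful count of multiplicities — each surviving $Y_{i-1}$ block spawns $x_i - 1$ copies, and there are $N_d/N_{i+1}$ such blocks floating around after peeling off layers $d, d-1, \dots, i+1$ — gives exactly the stated multiplicity $(x_{i+1}-1)N_d/N_{i+1}$ for $H_i$.

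The main obstacle I anticipate is purely combinatorial: getting the multiplicity count right, i.e.\ verifying that after repeatedly applying the single-layer split the copies of each $H_i$ accumulate to exactly $(x_{i+1}-1)N_d/N_{i+1}$ and that the dimensions sum correctly ($|A^X| = 2 + x_d|Y_{d-1}|$ on one side; $(|H_d|) + \sum_{i=0}^{d-1}(x_{i+1}-1)(N_d/N_{i+1})|H_i|$ on the other, with $|H_i| = 2i + 1$... actually $|H_i| = |Y_i|/(\text{something})$, so this sum identity is itself worth stating as a lemma and checking). The conceptual content — that layer-by-layer $S_{x}$-symmetry reduction produces the tridiagonal $H_i$'s — is straightforward linear algebra once the normalization convention for the isometry onto the symmetric subspace is fixed; I would fix it at the outset so the coupling constant reads $x_i$ (not $\sqrt{x_i}$) in agreement with the statement, and flag that this is why the $H_i$ have growing off-diagonal weights rather than unit weights.
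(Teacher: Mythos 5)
Your approach is sound and, when carried through, does establish the decomposition with the correct multiplicities; it is, however, organized differently from either of the paper's two proofs. The paper proves this result in Appendix \ref{app:chrpolyrec} by a recurrence for the characteristic polynomial (via the matrix determinant lemma and adjugates, reducing everything to continuants of tridiagonal matrices), and in Appendix \ref{app:chrpolystat} by running Gram--Schmidt on the Krylov space of $|f_{A^X}\rangle$, which produces the ``symmetric tower'' $H_d$ directly as a Lanczos tridiagonalization over distance shells, and then handles the orthogonal complement somewhat informally (``repeat until no new nice subspaces appear''). Your layer-by-layer isotypic decomposition of $\mathbb C^{x_i}$ into the trivial representation and its complement is closest in spirit to the second route but makes the complement bookkeeping explicit and mechanical: the recursion $Y_j \cong \widetilde H_j(Y_{j-1}) \oplus (\mathbb I_{x_j-1}\otimes Y_{j-1})$ together with $c_i^{(j)} = x_j\, c_i^{(j-1)}$ for $i<j-1$ and $c_{j-1}^{(j)} = x_j - 1$ gives the multiplicities $(x_{i+1}-1)N_d/N_{i+1}$ cleanly, which is arguably an improvement over the paper's presentation. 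Your one unstated ingredient --- that the roots $|f_{Y_{j-1}}\rangle, |l_{Y_{j-1}}\rangle$ lie entirely in the symmetric block (they are the two permutation-fixed vertices of the growth step), so the outer bordering commutes with the inner reduction --- is true and worth writing down, but is not a gap.

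The point you flagged as delicate is the one place you go wrong, and you should resolve it the other way. A unitary conjugation preserves the spectrum, so you cannot ``absorb normalization into the off-diagonal'' to turn $\sqrt{x_d}$ into $x_d$: the new root couples to the unit vector $\tfrac{1}{\sqrt{x_d}}|\mathds 1_{x_d}\rangle\otimes|f_{Y_{d-1}}\rangle$ with matrix element $\sqrt{x_d}$, full stop. Checking against the shrub confirms this: $K_{2,p}$ has extremal eigenvalues $\pm\sqrt{2p}$, which is the spectrum of the $3\times 3$ bordered matrix with couplings $\sqrt{p}$, not $p$. The conflict is therefore with the theorem statement itself, which appears to contain a typo ($x_i$ in place of $\sqrt{x_i}$ in the definition of $H_i$); this is consistent with the paper's own Appendix \ref{app:chrpolystat}, which derives $A^X|\varphi_i\rangle = \sqrt{x_i}|\varphi_{i-1}\rangle + \sqrt{x_{i+1}}|\varphi_{i+1}\rangle$, with Theorem \ref{thm:fluxaf}, whose $F_i$ carry weights $\sqrt{x_m}$, and with the claim in Figure \ref{fig:eigvals} that the largest eigenvalues approach $2\sqrt{p}$. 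So do not contort the argument to produce $x_i$; prove the statement with $\sqrt{x_i}$ and note the discrepancy.
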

\begin{proof}
    This result can be acquired either by directly exploiting recursive structure in the characteristic polynomials of glued trees at various depths, or by an explicit construction of eigenstates. For the former approach, see Appendix \ref{app:chrpolyrec}. For the latter, see instead Appendix \ref{app:chrpolystat}. 
\end{proof}
Theorem \ref{thm:fluxless} was well within the reach of \cite{childs}, and would certainly have been produced there if the spectra of glued trees were of any interest in and of themselves. 
Although $A^{X}$ is a very large matrix in principle (for $p$--nary trees, $A^X$ is approximately $2 p^d$ by $2 p^d$ ), one can compute the full spectrum of $A^X$ by diagonalizing $d$ tridiagonal matrices that are no larger than $2 d + 1$ by $2 d + 1$. Hence, the spectrum of $A^X$ may be computed in time $t\sim d^3$.

In the case of the $p$--nary tree of depth $d$, the eigenvalues of the $H_i$ matrices are known to be $2 \sqrt{p} \cos( r \pi)$ where $r$ is a positive rational number less than one. 
In cases where $\mathcal G(A^X)$ is not $p$--nary, a simple formula for the eigenvalues of the $H_i$ matrices is not known to us, but the characteristic polygonial of a tridiagonal matrix does satisfy a simple recurrence relation. 
\begin{obs}\label{obs:avgdeg}
    The average vertex degree of a glued tree is less than $4$.
\end{obs}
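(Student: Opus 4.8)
The plan is to count vertices and edges in a glued tree $G = \mathcal G(A^X)$ with $X = \{x_1, \dots, x_d\}$, and show that the handshake lemma forces the average degree $2|E|/|V|$ strictly below $4$. First I would track how the growth relation (\ref{eqn:growth}) updates the vertex and edge counts. Writing $v_i$ and $e_i$ for the number of vertices and edges of $Y_i$, the base case $Y_1 = \mathcal A(S_{x_1})$ gives $v_1 = x_1 + 2$ and $e_1 = 2x_1$. The recurrence takes $x_{i+1}$ disjoint copies of $Y_i$ and adds two new vertices, each joined to one designated ``root'' of degree-type vertex in every copy; hence $v_{i+1} = x_{i+1} v_i + 2$ and $e_{i+1} = x_{i+1} e_i + 2 x_{i+1}$. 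Unwinding these with $N_i = \prod_{j\le i} x_j$ gives closed forms $v_d = N_d\big(x_1 + 2 + 2\sum_{i=2}^{d} N_{i-1}^{-1} x_i^{-1} N_i\big)$-type expressions; more cleanly, $e_{i+1} + 2 = x_{i+1}(e_i + 2) + 2$ is not quite homogeneous, so I would instead note $e_i = 2 v_i - 4 + 2\cdot(\text{number of glue-vertex pairs beyond the first})$ — or simply compare $e_i$ and $2 v_i$ directly by induction.

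The cleanest route: prove by induction on $i$ that $e_i \le 2 v_i - 4$, i.e.\ $2 e_i \le 4 v_i - 8 < 4 v_i$, which is exactly the claim once specialized to $i = d$. The base case is $2 x_1 = e_1$ versus $2 v_1 - 4 = 2 x_1$, so in fact $e_1 = 2v_1 - 4$ with equality. For the inductive step, $e_{i+1} = x_{i+1} e_i + 2 x_{i+1} \le x_{i+1}(2 v_i - 4) + 2 x_{i+1} = 2 x_{i+1} v_i - 2 x_{i+1} = 2(x_{i+1} v_i + 2) - 4 - 2 x_{i+1} + 4 - 4 = 2 v_{i+1} - 4 - (2x_{i+1} - 4)$. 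Since $x_{i+1} \ge 2$, the correction $2 x_{i+1} - 4 \ge 0$, so $e_{i+1} \le 2 v_{i+1} - 4$, closing the induction. Therefore the average degree of $G$ is $2 e_d / v_d \le (4 v_d - 8)/v_d = 4 - 8/v_d < 4$.

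I do not anticipate a genuine obstacle here; the only thing to be careful about is bookkeeping in the growth step — specifically making sure the ``two new isolated vertices, each attached to one root per copy'' description matches the block structure in (\ref{eqn:growth}) (the top-left and bottom-right $0$ blocks are the two new vertices, the off-diagonal $|\mathds 1_{x_{i+1}}\rangle \otimes |f_{Y_i}\rangle$ and $|\mathds 1_{x_{i+1}}\rangle \otimes |l_{Y_i}\rangle$ entries contribute $x_{i+1}$ new edges each, for $2x_{i+1}$ total new edges). Once that identification is confirmed, the recurrences $v_{i+1} = x_{i+1} v_i + 2$ and $e_{i+1} = x_{i+1} e_i + 2 x_{i+1}$ are immediate and the inequality $e_i \le 2v_i - 4$ propagates exactly because every $x_i \ge 2$. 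It is worth remarking in passing that equality $e_d = 2v_d - 4$ holds iff $d = 1$, i.e.\ iff $G$ is a shrub; for $d \ge 2$ the bound is strict by a margin that grows with the $x_i$.
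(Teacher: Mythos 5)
Your proof is correct, and it takes a genuinely different route from the paper's. The paper argues structurally: a glued tree is two copies of a tree identified along their leaves, the average degree of a tree is below $2$, and gluing at the leaf set $V_t^1$ inflates this to at most $4/(2 - |V_t^1|/|V_t|) < 4$. You instead read the vertex and edge counts directly off the growth relation, $v_{i+1} = x_{i+1} v_i + 2$ and $e_{i+1} = x_{i+1} e_i + 2x_{i+1}$, and propagate $e_i \le 2v_i - 4$ by induction. Your identification of the block structure (two new vertices in the corner blocks, $2x_{i+1}$ new edges from the off-diagonal blocks) is exactly right, and the base case $e_1 = 2v_1 - 4$ for the shrub is correct. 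One small arithmetic slip: in the inductive step the deficit is $2x_{i+1}$, not $2x_{i+1} - 4$ (one has $x_{i+1}e_i + 2x_{i+1} \le 2x_{i+1}v_i - 2x_{i+1} = (2v_{i+1} - 4) - 2x_{i+1}$), but since $2x_{i+1} > 0$ the induction closes regardless, and in fact it closes even for intermediate $x_i = 1$, which the definition of growth permits. What your approach buys is a sharper, fully explicit bound $2e_d/v_d \le 4 - 8/v_d$ together with the precise equality case ($d=1$, i.e.\ shrubs), whereas the paper's argument is looser in its bookkeeping but applies more broadly to any graph formed by gluing two copies of an arbitrary tree at their leaves, including objects outside the paper's class of glued trees.
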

\begin{proof}
    A tree with $|V|$ vertices can have at most $|V|-1$ edges since the trees do not contain loops, by definition. It follows immediately that the average vertex degree of a tree is less than two. Let $T = (V_t, E_t)$ be a tree and let $G = (V,E)$ be the graph constructed by making a copy of $T$, say $T'$ and identifying every vertex of degree one in $T$ with its copy in $T'$. This construction produces every glued tree as well as objects outside of the set of glued trees as we have defined them. 
    The average vertex degree in $T$ is given by 
    \begin{equation}
        \frac{1}{|V_t|}\sum_{v \in V_t} \deg(v) < 2. 
    \end{equation}
    Let $V_t^1 \subset V$ be the set of vertices of degree one in $T$. Then, 
    \begin{equation}
        \frac{1}{|V_t|} \sum_{v \not\in V_t^1} \deg(v) + \frac{|V_t^1|}{|V_t|} < 2.
    \end{equation}
    The average vertex degree in $G$, is 
    \begin{equation}
        \frac{1}{2 |V_t| - |V_0|} \deg(v) + 2 \frac{|V_t^1|}{2|V_t| - |V_t^1|} < \frac{4}{2 - \frac{|V_t^1|}{|V_t|}} < 4. 
    \end{equation}
\end{proof}
To see that Observation \ref{obs:avgdeg} is tight, observe that the $p$--shrub has average vertex degree $\frac{4 p}{p + 2} \sim 4$ when $p$ is large. 
For $p$--nary trees of depth $d$, average vertex degree is 
\begin{equation}
    4\frac{p(p^d - 1)}{p^d(1 + p) - 2} \lesssim 4 \frac{p}{p+1}
\end{equation}
as $d$ becomes large.

We now specify a particular embedding of glued trees that will be of value when we introduce complex weights (hopping phases) to the problem. 
Let $G$ be a graph. Assign to every vertex in $G$ a point in $\mathbb R^2$, say $r(v)$. For every edge in $G$, define a smooth differentiable function $f_{(u,v)} : [0,1] \rightarrow \mathbb R^2$ such that 
$f_{(u,v)}(0) = r(u)$ and $f_{(u,v)}(1) = r(v)$. If $f_e(t) = f_{e'}(t')$ implies that $e = e'$ and $t = t'$ or $t,t' \in \{0,1\}$, we say that the collection of functions $f_e$ together with the function $r$ is a \textit{planar embedding} of $G$. We consider a pair of planar embeddings to be equivalent if they are related by some smooth deformation of the functions that define the embedding. 
The details of the maps that make up an embedding are unimportant to us.
We also do not distinguish between embeddings of graph related by relabeling edges and vertices.
In particular, two planar embeddings of a graph are distinct only if their duals are not graph isomorphic. 
\begin{defn}[Canonical embedding]\label{defn:canonical_embedding}
   Let $G$ be a glued tree of depth $d$. We say that a planar embedding of $G$ is a \textbf{canonical embedding} if the length of the cycle bounding the exterior face is $4d$.  
\end{defn}
We will henceforth suppose that glued trees are canonically embedded. In particular, a \textit{plaquette} is one of the bounded faces in a canonical embedding of a glued tree. 

\begin{defn}[Weighted adjacency matrix]\label{defn:weightedadjacency}
   Let $G$ be a graph and $A = \mathcal A(G)$ be its adjacency matrix. Let $P$ be a (not necessarily real) $|V|$ by $|V|$ matrix.
   Define the matrix $A_P$ with entries 
   \begin{equation}
       \langle u | A_P |v\rangle =  \langle u| P|v\rangle \langle u | A | v\rangle.
   \end{equation}
   $A_P$ is called a \textbf{weighted adjacency matrix} with weights $P$.
\end{defn}
Weighted adjacency matrices in general are great in number, and every example of a weighted adjacency matrix may be regarded as ``too special" to yield much mathematical interest as objects of study in the spectral graph theory community. Nonetheless, they do arise in any number of practical problems. 
This is a sensible opinion since every finite square matrix can be recast as some manner of weighted adjacency matrix. 
Weighted adjacency matrices are even more difficult to study in general when the entries of $P$ are allowed to be complex. 

In keeping with our physical motivation of studying Bose--Hubbard models, it is natural to require that $P^\dagger = P$, since quantum Hamiltonians must be Hermitian. 
Another natural restriction is to require that the entries of $P$ are pure phases. In part, this is motivated by an interest in studying interference effects on lattices, which give rise to interesting phenomenology. 
To that effect, we also demand that $| \langle u | P | v\rangle|  = 1$. 
It turns out that this set of restrictions leaves only $|E| - |V| + 1$ independent parameters needed to specify $P$ (provided that the graph in question is connected and planar). 

What remains under consideration is Hermitian matrices $H = H^\dagger$ with entries of magnitude one. Such matrices are interpreted as a Hamiltonian describing the hopping of a boson (or indeed a fermion at the single particle level) coupled to some magnetic or other gauge field. Mathematically, one might say that $H$ generates walks on some graph where particles are allowed to interfere like waves.   
One of the conventional motivations for spectral graph theory is to understand properties of combinatorial 
or probabilistic problems. We hope to convince the reader that interfering walks are a natural and sensible motivation for the study of the eigenvalues 
of complex weighted adjacency matrices.

It is no accident that $|E| - |V| + 1$ is also the number of internal faces in some planar embedding of a graph.  
Suppose 
\begin{equation}
    \langle u| P |v \rangle = e^{\mathrm i \phi_{uv}},
\end{equation}
with $\phi_{uv}$ real for all $u$ and $v$. 
This is no loss of generality given the restrictions we have imposed. 
$P = P^\dagger$ means that $\phi_{vu} = - \phi_{uv}$. 
Let $U = |w \rangle \langle w |(1 -  e^{\mathrm i \Gamma}) + \mathbb I_{|V|}$. Clearly $U^\dagger U = \mathbb I_{|V|}$. Hence the spectrum of $U^\dagger H U $ is equal to the spectrum of $H$. However,
\begin{equation}
    \langle u | U^\dagger H U|v\rangle = \sum_{u,v} e^{\mathrm i \phi'_{uv}} \langle u | A | v\rangle
\end{equation}
with 
\begin{equation}\label{eqn:phiredefn}
    \phi_{uv}' = \phi_{uv} + \Gamma (\delta_{wv} - \delta_{wu}).
\end{equation}
In this way, phases can be freely moved from edge to edge, but the product of phases around some loop is unchanged. 
Thus, redefinitions of $\phi$ of the form (\ref{eqn:phiredefn}) leave the eigenvalue spectrum of a complex weighted adjacency matrix unchanged. Such transformations are called \textit{gauge transformations} by the physics community. It follows straightforwardly, then, that the eigenvalue spectrum of a hopping Hamiltonian depends only upon the value of $\phi$ summed about a linearly independent\footnote{By ``linearly independent", we refer to linear independence in the vector space where addition of edges is carried out in $\mathbb Z_2$. } set of loops, of which there are $|E| - |V| + 1$ in number. $\phi$ acts effectively as a boundary map between two cells and one cells in a particular embedding of a graph, and it follows from the fact that the 2 sphere has no boundary that the phase around any particular loop is determined by the phases around the other loops. 
This paper restricts its attention further to complex weighted adjacency matrices for which the value of $\phi$ summed around each internal face of an embedding is the same. 
By making this additional demand, we restrict our attention to adjacency matrices where the effects of interference are fully determined by a single number. 
We make this discussion precise with the following definition. 
\begin{defn}\label{defn:flux}
    Let $G$ be a canonically embedded glued tree. Let $\phi$ be an antisymmetric $|V|$ by $|V|$ matrix with elements $\langle u | \phi | v\rangle = \phi_{uv}$. 
    Suppose that the sequence $l = \{v_1, v_2, \dots , v_n,v_1\}$ is a sequence of vertices traversed while tracing around a loop that bounds some interior face of $G$. Define the \textit{flux} through the face to be 
    \begin{equation}
        \Phi_l = \phi_{v_1 v_2} + \phi_{v_2 v_3} + \dots + \phi_{v_n v_1}. 
    \end{equation}
    If the flux through every internal face of $G$ is the same, we write the common value as $\Phi$, and we write 
    \begin{equation}
        \mathcal A(G)(\Phi) = \sum_{u,v} e^{\mathrm i \phi_{uv}} |u \rangle \langle u | \mathcal A(G) |v\rangle \langle v|. 
    \end{equation}
    $\mathcal A(G)(\Phi)$ is called a \textbf{canonical complex adjacency matrix} or CCAM for short. 
\end{defn}
To reiterate on our notation, if $G$ is a glued tree and  $A = \mathcal A(G)$ is its adjacency matrix, then $A(\Phi) = \mathcal A(G)(\Phi)$ is its CCAM. 
The word ``flux" is meant to evoke the role of the magnetic vector potential in Aharanov--Bohm interference. 
Although we only define CCAMs for canonically embedded glued trees, a similar construction is possible for any embedding of a graph on a surface with any genus. 
Nonplanar graphs have a cyclomatic number that differs from the number of faces in an embedding. 
We remark that the designation of a matrix as a CCAM for a graph is unique only up to gauge transformations, which we regard as unimportant.
We will suppress further discussion of various gauge transformations where possible.

We are not able to explicitly produce eigenvalue spectra for CCAMs in general. We are nevertheless able to produce the following: 
\begin{thm}\label{thm:fluxaf}
    Let $X = \{x_1, x_2, \dots, x_d\}$ be a sequence of positive integers with all $x_i > 1$ (for simplicity). 
    $A^X(\Phi)$is then a CCAM. With $\Phi_0 = \frac{2 \pi}{x_1}$, there exists a unitary $U$ such that 
    \begin{equation}
    U^\dagger A^X(\Phi_0)U   = \mathbb I_2 \otimes \left(F_{d} \oplus \bigoplus_{i = 2}^{d} (\mathbb I _{(x_i - 1) N_d/N_{i}} \otimes F_{i-1} ) \right) \oplus (\mathbb{I}_{(x_1 -2)N_d} \otimes F_0) 
    \end{equation}
    with 
    \begin{equation}
        F_i = \sum_{m = 1}^{i} \sqrt{x_{m}}\left( |m-1 \rangle \langle m| + |m \rangle \langle m-1 | \right)
    \end{equation}
    for $n \geq 1$, $F_0 = 0$, and $N_i = \prod_{j=2}^{i} x_j$. 

\end{thm}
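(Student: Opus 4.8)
The plan is to induct on the depth $d$, following the same philosophy as Theorem~\ref{thm:fluxless} but inserting a discrete Fourier transform to handle the hopping phases. The conceptual content is that, at the special flux $\Phi_0 = 2\pi/x_1$, the coloring of $A^X(\Phi_0)$ is pure gauge at \emph{every} level of the growth construction except the first; once this is established the statement follows from analyzing a single shrub at flux $\Phi_0$ and then repeating the bookkeeping of Theorem~\ref{thm:fluxless} with a modified seed. For the base case I would work with $\mathcal A(S_{x_1})(\Phi_0)$ directly: choosing a gauge in which all of the flux sits on the edges incident to the ``top'' vertex, that vertex couples to the $x_1$ degree-two vertices through the vector $\left(e^{\mathrm i \Phi_0}, e^{2\mathrm i \Phi_0}, \dots, e^{x_1\mathrm i \Phi_0}\right)$ while the ``bottom'' vertex couples to them through $|\mathds 1_{x_1}\rangle$. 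Since $x_1 > 1$, at $\Phi_0 = 2\pi/x_1$ these are a nonconstant Fourier mode and the constant Fourier mode on $\mathbb{Z}_{x_1}$, so applying the $\mathbb{Z}_{x_1}$-discrete Fourier transform to the degree-two vertices splits $\mathcal A(S_{x_1})(\Phi_0)$ into two decoupled $2\times 2$ blocks, each equal to $F_1$, together with $x_1 - 2$ decoupled zero modes $F_0$. This is the $d=1$ case, and I would record which basis vector of each $F_1$ block is the image of the top, resp.\ bottom, vertex.

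The key step is the claim that, for $i \geq 2$, the depth-$i$ CCAM at $\Phi_0$ is obtained from the depth-$(i-1)$ CCAM at $\Phi_0$ by the \emph{uncolored} growth step of Definition~\ref{defn:growth}. In the recursion~(\ref{eqn:growth}), passing to depth $i$ adjoins two vertices, each coupled to the first (resp.\ last) vertex of $x_i$ gauge-identical copies of the depth-$(i-1)$ CCAM; imposing flux $\Phi_0$ on the $x_i - 1$ new plaquettes forces the coupling of the new top vertex to the $j$-th copy to equal $e^{\mathrm i j\theta_{i-1}}$, up to a $j$-independent constant, in a convenient gauge, with $\theta_{i-1} = (1+\mathcal F_{i-1})\Phi_0$ and $\mathcal F_{i-1} = |E|-|V|+1$ the number of interior faces of the depth-$(i-1)$ canonically embedded glued tree. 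An Euler-characteristic recursion ($\mathcal F_i = x_i\mathcal F_{i-1} + x_i - 1$, $\mathcal F_1 = x_1 - 1$) gives $\mathcal F_{i-1} = x_1 x_2\cdots x_{i-1} - 1$, so $\theta_{i-1} = x_1 x_2\cdots x_{i-1}\cdot\tfrac{2\pi}{x_1} = 2\pi N_{i-1}\in 2\pi\mathbb{Z}$; the coupling is therefore uniform, and after rescaling the two new basis vectors by phases it is precisely the zero-momentum Fourier mode on $\mathbb{Z}_{x_i}$.

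With this in hand the inductive step is the same accounting as in Theorem~\ref{thm:fluxless}. Assuming the depth-$(i-1)$ decomposition, with the distinguished top and bottom vertices sitting at the far endpoints of the two copies of $F_{i-1}$ (an invariant that persists because these vertices are always the most recently adjoined ones and hence remain honest basis vectors at the ends of their chains), I would take $x_i$ copies, adjoin the two new vertices with the uniform couplings just identified, and Fourier-transform the copy index over $\mathbb{Z}_{x_i}$. In the zero-momentum sector the two copies of $F_{i-1}$ each gain one site with hopping $\sqrt{x_i}$ and become copies of $F_i$; each of the $x_i-1$ nonzero-momentum sectors yields two untouched copies of $F_{i-1}$; and every other block has its multiplicity multiplied by $x_i$. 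Rewriting the multiplicities using $N_i = x_i N_{i-1}$ reproduces the depth-$i$ formula, and composing all of the gauge transformations and Fourier transforms used at levels $1,\dots,d$ produces the unitary $U$.

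I expect the main obstacle to be the planar-embedding bookkeeping in the middle step: making precise that the canonical embedding really presents the depth-$i$ glued tree as $x_i$ side-by-side copies of the depth-$(i-1)$ tree with exactly $x_i - 1$ new interior faces, fixing the orientation convention so that $\theta_{i-1}$ lands in $2\pi\mathbb{Z}$ rather than being off by a unit of $\tfrac{2\pi}{x_1}$, and checking that the ``distinguished vertex is a chain endpoint'' invariant survives each Fourier transform. Everything else is the accounting that already underlies Theorem~\ref{thm:fluxless}. Alternatively, a recursion for the characteristic polynomial of $A^X(\Phi_0)$ in the spirit of Appendix~\ref{app:chrpolyrec} would recover the spectrum as a multiset, though not the explicit block unitary.
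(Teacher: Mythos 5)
Your proposal is correct, but it is not the route the paper actually takes. The paper's proof of this theorem lives in Appendix \ref{app:chrpolyrec}: one derives adjugate/characteristic-polynomial recurrences via the matrix determinant lemma (Lemma \ref{lemma:huge} and Lemma \ref{lemma:big}), observes that at $\Phi_0 = 2\pi/x_1$ the off-diagonal resolvent element $\chi_i$ vanishes for all $i$ because $\langle\omega_1|(|\omega_1\rangle)^* = 0$, and then the coupled recurrence collapses to the continuant of a tridiagonal matrix, giving $C(A^X(\Phi_0);\lambda)$ as the advertised product; unitary equivalence then follows because Hermitian matrices with equal characteristic polynomials are unitarily equivalent. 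You instead build $U$ explicitly by induction: a $\mathbb Z_{x_1}$ discrete Fourier transform on the shrub's middle layer separates the top vertex (which couples to the first nontrivial Fourier mode at $\Phi_0$) from the bottom vertex (which couples to the constant mode), and your Euler-characteristic computation $(1+\mathcal F_{i-1})\Phi_0 = 2\pi N_{i-1}\in 2\pi\mathbb Z$ correctly shows that for $i\ge 2$ the newly adjoined edges carry uniform, gaugeable phases, so the higher levels reduce to the uncolored accounting of Theorem \ref{thm:fluxless}; I have checked that the multiplicities ($2$ copies of $F_d$, $2(x_i-1)N_d/N_i$ copies of $F_{i-1}$, $(x_1-2)N_d$ copies of $F_0$) and the endpoint invariant work out, and that the orientation convention you flag does land $\theta_{i-1}$ in $2\pi\mathbb Z$ (the two adjacent copies are traversed in opposite senses along a new face's boundary, so the copy's total interior flux $\mathcal F_{i-1}\Phi_0$ adds to the face flux $\Phi_0$ with a plus sign). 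What your route buys is notable: it is essentially the eigenstate-construction strategy of Appendix \ref{app:chrpolystat}, which the paper carries out only at $\Phi=0$ and explicitly declares ``not effective'' for complex weights; your DFT refinement shows it does work at the special flux $\Phi_0$, and it produces the compact localized eigenvectors and the block unitary explicitly rather than inferring their existence from a spectral count. The paper's recurrence, by contrast, requires no embedding or gauge bookkeeping at all and recycles directly into the Bloch-Hamiltonian flatness argument of Appendix \ref{app:chrpolyafb}. The one place your write-up should be tightened is the phrase ``pure gauge at every level except the first'': only the phases on the edges adjoined at levels $i\ge 2$ are removable, while the new plaquettes at those levels still carry flux $\Phi_0$ (inherited entirely from the interiors of the adjacent copies); your argument uses only the former, so this is a wording issue rather than a gap.
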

\begin{proof}
    Once again, this result may be acquired by either solving a recurrence relation or by explicitly producing eigenstates. See Appendices \ref{app:chrpolyrec}, \ref{app:chrpolyafb}, and \ref{app:chrpolystat}.
\end{proof}
We remark that $\Phi = \frac{2 \pi}{x_1}$ is a very special value of $\Phi$, for reasons expounded upon in the appendices as well as later in Section \ref{sec:lattices}. 
We do not expect that solutions of this nature can be found at generic values of $\Phi$ on the grounds that to do so would be to solve the eigenvalue problem for $A^X(\Phi)$ at any value of $\Phi$. However, we also do not give a formal argument that this is impossible. It would be interesting if some such solution were found.

\section{Glued tree lattices}\label{sec:lattices}
We now move on from discussion of finite glued trees and onto infinite graphs constructed by using many glued trees. 
The simplest of these is, unsurprisingly, a one--dimensional chain of glued trees.
The earlier parts of this section is meant to be a rigorous discussion, where we prove our main theorem. 
The latter half is instead a lattice exotica, giving many examples of lattices with all flat bands and only compact localized states, of as great a variety as we can manage to fit in a paper of reasonable length. 

\subsection{Compact localized states from Aharonov--Bohm interference}\label{sec:abstates}
\subsubsection{A useful choice of gauge}\label{sec:gauge}
    As much as possible, we suppress discussion of particular gauge choices since they are unimportant to both eigenvalue spectra and to other physically important properties of the various models we consider. 
    However, all explicit calculations must be carried out in \textit{some} gauge. We present here a gauge that is particularly amenable to the calculations carried out in Appendices \ref{app:chrpolyrec} and \ref{app:chrpolystat} as well as for the remainder of section \ref{sec:lattices}.

    \begin{defn}\label{defn:canonicalphases}
    Let $\Phi$ and $X = \{x_1, x_2, x_3, \dots, x_d\}$  be a real number and a set positive integers respectively. Let $\{\omega_i\}_{i = 1}^d$ be the sequence that satisfies 
    \begin{equation}
        \omega_i =\frac{\Phi}{4} (x_i - 1) \prod_{j = 1}^{i -1} x_j 
    \end{equation}
    Furthermore, let 
    \begin{equation}
        |\omega_j\rangle = \sum_{i = 1}^{x_{j}} \text{exp}\left[ \mathrm i \left(1 - \frac{2(i-1)}{x_{j}-1}\right) \omega_j\right] | i \rangle.
    \end{equation}
    $|\omega_d\rangle$ is called a \textbf{canonical phase vector}.   
\end{defn}
\begin{lem}\label{lem:zero}
    Let $\Phi$ and $X = \{x_1, x_2, x_3, \dots, x_d\}$  be a real number and a set of positive integers respectively. Let $|\omega_i\rangle$ be canonical phase vectors for $i = 1, 2, \dots, d$. 
    \begin{equation}
        \langle \omega_i | \left(|\omega_i\rangle \right)^* = 0
    \end{equation}
    if and only if  there exists some integer $ 1 \leq N \leq x_i $ such that $\Phi = N\frac{2 \pi}{\prod_{j = 1}^{i}x_j} $. 
\end{lem}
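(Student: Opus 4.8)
The plan is to evaluate $\langle\omega_i|\,(|\omega_i\rangle)^{*}$ in closed form, recognize it as a finite geometric series, and then invoke the elementary fact that $1+z+\cdots+z^{x_i-1}$ vanishes exactly when $z$ is a nontrivial $x_i$-th root of unity; substituting the definition of $\omega_i$ then converts this into the stated arithmetic condition on $\Phi$, and both directions of the ``if and only if'' fall out of the same chain of equivalences.

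First I would write $|\omega_i\rangle=\sum_{k=1}^{x_i}e^{\mathrm i\alpha_k}|k\rangle$ with $\alpha_k=\bigl(1-\tfrac{2(k-1)}{x_i-1}\bigr)\omega_i$, so that by definition of the bra and of complex conjugation the pairing in the statement is $\langle\omega_i|\,(|\omega_i\rangle)^{*}=\sum_{k=1}^{x_i}e^{-2\mathrm i\alpha_k}$. The exponents $-2\alpha_k$ are an arithmetic progression in $k$ with common difference $\tfrac{4\omega_i}{x_i-1}$, so pulling out the $k$-independent factor $e^{-2\mathrm i\omega_i}$ and setting $z=\exp\!\bigl(\mathrm i\tfrac{4\omega_i}{x_i-1}\bigr)$ gives
\begin{equation}
\langle\omega_i|\,(|\omega_i\rangle)^{*}=e^{-2\mathrm i\omega_i}\sum_{m=0}^{x_i-1}z^{m}.
\end{equation}
Now I would plug in $\omega_i=\tfrac{\Phi}{4}(x_i-1)\prod_{j=1}^{i-1}x_j$; the factor $x_i-1$ cancels, leaving $z=\exp\!\bigl(\mathrm i\,\Phi\prod_{j=1}^{i-1}x_j\bigr)$ and hence $z^{x_i}=\exp\!\bigl(\mathrm i\,\Phi\prod_{j=1}^{i}x_j\bigr)$. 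Since the prefactor has unit modulus, the pairing vanishes iff $\sum_{m=0}^{x_i-1}z^m=0$; using $\sum_{m=0}^{x_i-1}z^m=\tfrac{z^{x_i}-1}{z-1}$ when $z\neq1$ and $\sum_{m=0}^{x_i-1}z^m=x_i\neq0$ when $z=1$, this happens iff $z^{x_i}=1$ and $z\neq1$, i.e. iff $z$ is a nontrivial $x_i$-th root of unity. Translating back, $z^{x_i}=1$ says $\Phi\prod_{j=1}^{i}x_j\in2\pi\mathbb Z$, that is $\Phi=N\tfrac{2\pi}{\prod_{j=1}^{i}x_j}$ for an integer $N$, while $z\neq1$ discards exactly the $N$ divisible by $x_i$; choosing the representative with $1\le N\le x_i$ (consistent with the period $N\mapsto N+x_i$ coming from $\Phi\mapsto\Phi+\tfrac{2\pi}{\prod_{j=1}^{i-1}x_j}$) recovers the range asserted in the lemma.

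I do not expect a genuine obstacle here; the only points that need care are checking that the exponents $\alpha_k$ really do step uniformly in $k$ so that the sum is honestly geometric, and treating the degenerate case $z=1$ separately, since there the geometric-series formula does not apply but the sum is manifestly $x_i\neq0$ — this is precisely what forces the extra condition $z\neq1$ and therefore the exclusion of the $N$ that are multiples of $x_i$ from the displayed range.
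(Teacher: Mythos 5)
Your proof is correct and follows essentially the same route as the paper's: both reduce $\langle\omega_i|\,(|\omega_i\rangle)^{*}$ to a finite geometric series with ratio $z=\exp\bigl(\mathrm i\,\Phi\prod_{j=1}^{i-1}x_j\bigr)$ and invoke the root-of-unity vanishing criterion. If anything you are slightly more careful than the paper at the endpoint: your explicit $z\neq1$ case correctly excludes $N$ divisible by $x_i$ (the paper's own proof likewise concludes $0<N<x_i$, so the lemma's stated range $1\le N\le x_i$ is off by the single value $N=x_i$).
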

\begin{proof}
First note that 
\begin{equation}\label{eqn:geom}
    \sum_{l = 1}^m e^{\mathrm i \theta l} = 0
\end{equation}
if and only if there exists an integer  $0 < N < m$  such that $\theta = 2 \pi N / m$. 
This is a fact with a nice geometrical interpretation: the center of mass of a collection of equivalent particles evenly spaced along a circular arc about the origin lies precisely at the origin.
    With (\ref{eqn:geom}) in mind, observe that 
    \begin{equation}\label{eqn:vanish}
    \begin{aligned}
        \langle \omega_i | \left(|\omega_i \rangle \right) ^* &= \sum_{j = 1}^{x_i}\text{exp}\left[2 \mathrm i \left(1 - \frac{2(j-1)}{x_i - 1}\right)\omega_i \right]  = e^{2 \mathrm i \omega_i + 4\omega_i/(x_i - 1) } \sum_{j = 1}^{x_i} \text{exp}\left[ \mathrm i j \theta \right]
        \end{aligned}
    \end{equation}
    with 
    \begin{equation}
        \theta = \frac{4}{x_i - 1}\omega_i = \Phi \prod_{l = 1}^{i - 1} x_l.
    \end{equation}
    If (\ref{eqn:vanish}) is to vanish, then it must be the case that there exists an integer $ 0 < N < x_i$ such that 
    \begin{equation}
        \Phi \prod_{l = 1}^{i-1}x_l = \frac{2 \pi}{x_i} N.
    \end{equation}
    Hence, the lemma holds. 
\end{proof}
\begin{defn}[Canonical gauge]\label{defn:canonicalgauge}
    Fix $n$ to be a positive integer, and let $X = \{x_1, x_2 , \dots x_d\} $ be a sequence of positive integers such that $x_d$ is greater than one. 
    Let $Y_1, Y_2, \dots, Y_d$ satisfy the partitioned matrix recurrence relation. 
    \begin{equation}\label{eqn:cgrowth}
        \begin{aligned}
            Y_{i} &= \begin{pmatrix}
                0 & \langle \omega_i  |^* \otimes \langle f_{Y_{i-1}} |   & 0 \\
                |\omega_i \rangle^* \otimes | f_{Y_{i-1}}\rangle  & \mathbb I_{x_i} \otimes Y_{i-1} & |\omega_i \rangle \otimes | l_{Y_{i-1}}\rangle \\ 
                0 & \langle \omega_i | \otimes \langle l_{Y_{i-1}} | & 0 
            \end{pmatrix} \\
            Y_1 &= \begin{pmatrix}
                0 &\langle \omega_1|^* & 0 \\
                |\omega_1\rangle^* & 0 & |\omega_1\rangle \\
                0 & \langle \omega_1| & 0 
            \end{pmatrix}
        \end{aligned} 
    \end{equation}
    where $\otimes$ is the Kronecker product.
    We say that $Y_d$ gives the \textbf{canonical gauge} of $A^X$, and we write $A^X(\Phi) := Y_d$. 
\end{defn}
It remains to show Definition \ref{defn:canonicalgauge} constitutes a choice of gauge with the restrictions demanded in the main text (namely that every loop in a canonical embedding of $\mathcal G(A^X)$ winds the same phase). To see that this is the case, observe that 
\begin{equation}
     \left( 2 \langle j | \omega_i \rangle - 2 \langle j + 1| \omega_i \rangle  + 4 \sum_{l = 1}^{i-1} \omega_l  \right) = \Phi.
\end{equation}
Hence, the phase wound around every internal face in a canonical embedding of $\mathcal G(A^X)$ is precisely $\Phi$. 
The recursive structure provided by (\ref{eqn:cgrowth}) is extremely useful for proving various facts about glued trees and their spectra. 

\subsubsection{Glued trees cannot be crossed at some values of $\Phi$}\label{sec:confine}
In this section, we will need to make use of various standard objects. Firstly, we use $C(M;\lambda) := \det(M - \lambda)$ to denote the \textit{characteristic polynomial} of a matrix, and $R(M;\lambda) := (M - \lambda)^{-1}$ to denote the \textit{resolvent} of a matrix. The \textit{adjugate} of a matrix is the transpose of its cofactor matrix. If  $M$ is invertible, then $M^{\mathrm A} = \det(M) M^{-1}$. When $\lambda$ is in the resolvent set of $M$, we have that $(M - \lambda)^{\mathrm A} = C(M;\lambda) R(M;\lambda)$. 
\begin{defn}[Flat values]\label{defn:flat}
    Let $X = \{x_1, x_2, \dots, x_d\}$ be a set of positive integers. Define 
    \begin{equation}
        F^X = \left\{ \frac{2 \pi}{\prod_{i = 1}^d x_i} z \,:\, z = 1,2,\dots, \prod_{i =1}^d x_i\right\}.
    \end{equation}
    We say that $F^X$ is the set of \textbf{flat values} of $X$. 
\end{defn}
The flat values associated with a given set of integers $X$ depends only upon the product of the values in $X$. 
Explicitly, 
let $X = \{x_1, x_2, \dots, x_d\}$ be a set of positive integers with \emph{all} $x_i > 1$. $X$ encodes an ordered factorization of $M = \prod_{i = 1}^d x_i$. Let $N(M)$ be the number of sets of integers greater than one whose product is equal to $M$. It is well known that the Dirichlet generating function of ordered integer factorizations \cite{CHOR2000123} is given by 
\begin{equation}\label{eqn:generating}
    \frac{1}{2 - \zeta(s)} = \sum_{M = 2}^\infty \frac{N(M)}{M^s},
\end{equation}
where $\zeta(s)$ is the Reimann zeta function. 
Hence, the number of glued trees with flat values $F^X$ at fixed $\prod_{x \in X} x$ is \emph{also} generated by $\frac{1}{2 - \zeta(s)}$. 

The following lemmas and theorems should reveal the intuition for the name that we have chosen for this special subset of $2 \pi \mathbb Q$. 
\begin{lem}\label{lem:travel}
    Let $X = \{x_1, x_2, \dots, x_d\}$ be a set of positive integers with $x_d > 1$. Suppose $A^X(\Phi)$ is constructed in the canonical gauge according to the procedure given in Definition \ref{defn:canonicalgauge}. 
    If $\Phi \in F^X$, then 
    \begin{equation}\label{eqn:obj}
        0 = \langle f_{A^X(\Phi)} | R\left(A^X(\Phi);\lambda\right)| l_{A^X(\Phi)}\rangle 
    \end{equation}
    where the presence of the identity matrix of the appropriate size is understood. 
\end{lem}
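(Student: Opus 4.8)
The plan is to work throughout in the canonical gauge of Definition~\ref{defn:canonicalgauge}, exploit the block recursion (\ref{eqn:cgrowth}), and reduce the claim to a one-step recursion for the single matrix element in question. Write $Y_i := A^{\{x_1,\dots,x_i\}}(\Phi)$ for the depth-$i$ truncation and abbreviate $g_i(\lambda) := \langle f_{Y_i}| R(Y_i;\lambda)| l_{Y_i}\rangle$, the ``root-to-root'' Green's function; the lemma asserts $g_d \equiv 0$ when $\Phi\in F^X$. I would instead prove the slightly more flexible statement by induction on $d$: for \emph{every} sequence $X$ and every $\Phi$ in the flat set $F^X$ of Definition~\ref{defn:flat}, one has $g_{|X|}\equiv 0$; the extra generality is needed because the inductive step passes to the truncated sequence $X'=\{x_1,\dots,x_{d-1}\}$.

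The key computational step is a Schur complement. Partition $Y_d-\lambda$ into the two-dimensional block spanned by the outer roots $\{f_{Y_d},l_{Y_d}\}$, the inner block $\mathbb{I}_{x_d}\otimes(Y_{d-1}-\lambda)$ carrying the $x_d$ copies of $Y_{d-1}$, and the couplings between them read off from (\ref{eqn:cgrowth}). The $\{f,l\}$-block of $R(Y_d;\lambda)$ is then the inverse of the Schur complement $\mathcal P-\mathcal B\,(\mathbb{I}_{x_d}\otimes R(Y_{d-1};\lambda))\,\mathcal C$. Since the inner block is block-diagonal across the copies, and $f_{Y_d}$ (resp.\ $l_{Y_d}$) couples only to the $f$-roots (resp.\ $l$-roots) of those copies with amplitudes equal to the components $p_k:=\exp[\mathrm{i}(1-2(k-1)/(x_d-1))\omega_d]$ of $|\omega_d\rangle$, the off-diagonal entry of $\mathcal B(\mathbb{I}\otimes R)\mathcal C$ factorizes cleanly as $\big(\sum_{k=1}^{x_d}p_k^{2}\big)\,g_{d-1}(\lambda)$ (note this is $\sum_k p_k^2$, not $\sum_k|p_k|^2=x_d$, because of the $*$'s in (\ref{eqn:cgrowth})). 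Inverting the resulting $2\times2$ matrix gives $g_d(\lambda)=\big(\sum_k p_k^2\big)\,g_{d-1}(\lambda)\big/D(\lambda)$ with $D$ finite and nonzero for generic $\lambda$. Finally $\sum_k p_k^2=\overline{\langle\omega_d|(|\omega_d\rangle)^*}$, so by Lemma~\ref{lem:zero} this prefactor vanishes exactly when $\Phi$ is an integer multiple of $2\pi/\prod_{j=1}^d x_j$ that is not a multiple of $2\pi/\prod_{j=1}^{d-1}x_j$.

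With this recursion the induction closes. Base case $d=1$: $Y_1$ is the phased shrub given directly in (\ref{eqn:cgrowth}) (its inner layer is $x_1$ isolated vertices, contributing $-\lambda\,\mathbb{I}_{x_1}$), so $g_1\propto\sum_k p_k^2$, which vanishes on $F^{\{x_1\}}$ by Lemma~\ref{lem:zero}. Inductive step: write $\Phi=2\pi z/\prod_{i=1}^d x_i$ with $1\le z\le\prod_{i=1}^d x_i$. If $x_d\nmid z$ the prefactor $\sum_k p_k^2$ vanishes and we are done at once; if $x_d\mid z$, set $z'=z/x_d$, so $\Phi=2\pi z'/\prod_{i=1}^{d-1}x_i\in F^{X'}$, and since the copies appearing in (\ref{eqn:cgrowth}) are literally $A^{X'}(\Phi)=Y_{d-1}$ (the $|\omega_i\rangle$ depend only on $\Phi$ and $x_1,\dots,x_i$, cf.\ Definition~\ref{defn:canonicalphases}), the inductive hypothesis gives $g_{d-1}=0$, hence $g_d=0$. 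I expect the obstacles to be bookkeeping rather than conceptual: tracking the conjugations in (\ref{eqn:cgrowth}) so the prefactor really comes out as $\sum_k p_k^2$, and checking that the Schur-complement denominator is $C(Y_d;\lambda)$ divided by $C(\mathbb{I}_{x_d}\otimes Y_{d-1};\lambda)$ so no spurious pole or cancellation occurs. To avoid invertibility hypotheses on $\lambda$ altogether, one can run the same computation on the adjugate, i.e.\ on $C(Y_d;\lambda)\,g_d(\lambda)=\langle f_{Y_d}|(Y_d-\lambda)^{\mathrm A}|l_{Y_d}\rangle$, expanding the cofactor via the block structure of (\ref{eqn:growth}); this is the route of Appendices~\ref{app:chrpolyrec} and~\ref{app:chrpolystat}.
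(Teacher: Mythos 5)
Your proposal is correct and follows essentially the same route as the paper: peel off one layer of the canonical-gauge block recursion (\ref{eqn:cgrowth}) to show the root-to-root matrix element picks up a factor $(\langle\omega_d|)^*|\omega_d\rangle=\sum_k p_k^2$ times the depth-$(d-1)$ element, then invoke Lemma \ref{lem:zero}. The only differences are presentational — you use a Schur complement on the resolvent and close the argument by induction with a case split on whether $x_d\mid z$, whereas the paper runs the identical one-layer peeling on the adjugate via the matrix determinant lemma, collects the full product $\prod_i(\langle\omega_i|)^*|\omega_i\rangle$, and passes back to the resolvent by analytic continuation (a dodge of the invertibility issue that you yourself note at the end).
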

\begin{proof}
    In practice, it is easier to exploit the recursive structure provided by Definition \ref{defn:canonicalgauge} using the adjugate than using the resolvent. Consider (\ref{eqn:cgrowth}), which provides 
    \begin{equation}
        A^X(\Phi) = Y_d 
        = \begin{pmatrix}
                0 & \langle \omega_d  |^* \otimes \langle f_{Y_{d-1}} |   & 0 \\
                |\omega_d \rangle^* \otimes | f_{Y_{d-1}}\rangle  & \mathbb I_{x_d} \otimes Y_{d-1} & |\omega_i \rangle \otimes | l_{Y_{d-1}}\rangle \\ 
                0 & \langle \omega_d | \otimes \langle l_{Y_{d-1}} | & 0 
            \end{pmatrix}.
    \end{equation}
    Suppose $\mathrm{Im}\lambda\neq 0$. Since all of the $Y_i$ matrices are Hermitian, we are in no danger of encountering any of their eigenvalues. 
    By using the matrix determinant lemma, it follows that 
    \begin{equation}
        \langle f_{Y_d} | (Y_d - \lambda)^{\mathrm A} | l_{Y_d}\rangle = (\langle \omega_d | )^* |\omega_d\rangle  \langle l_{Y_{d-1}}| (Y_{d-1} - \lambda)^{\mathrm A} | f_{Y_{d-1}}\rangle
    \end{equation}
    since a matrix element of the adjugate is just a cofactor of the matrix. Repeatedly applying the matrix determinant lemma in this manner shows that 
    \begin{equation}\label{eqn:intermedkill}
        \prod_{i = 1}^d (\langle \omega_i|)^* |\omega_i\rangle = 0
    \end{equation}
    implies that 
    \begin{equation}
        \langle f_{A^X(\Phi)} | \left(A^X(\Phi)-\lambda\right)^{\mathrm A}| l_{A^X(\Phi)}\rangle  =0.
    \end{equation}
    By Lemma \ref{lem:zero}, this happens precisely when  $\Phi \in F^X$. Notice that (\ref{eqn:intermedkill}) holds even in the resolvent set, where $C(A^X(\Phi);\lambda)$ is nonzero. Thus (\ref{eqn:obj}) holds by analytic continuation. 
\end{proof}
The technique of using the recursive structure of the canonical gauge to calculate matrix elements of the resolvent of $A^X(\Phi)$ is the means by which one is able to produce formulae for the spectrum of glued trees at $\Phi = 0$, and at $\Phi = 2 \pi/x_d$. We remind the reader once again that a thorough set of calculations along these lines can be found in Appendices \ref{app:chrpolyrec} and \ref{app:chrpolyafb}. 

Lemma \ref{lem:travel} can be restated in a more intuitive way:
\begin{cor}[Aharonov--Bohm interference]
    Adopt the hypothesis of Lemma \ref{lem:travel} and let $\Phi$ be one of the special points ordained therein. 
    Then,
    \begin{equation}
        \langle l_{A^X(\Phi)} | (A^X(\Phi))^m | f_{A^X(\Phi)}\rangle = 0
    \end{equation}
    for any positive integer $m$.
\end{cor}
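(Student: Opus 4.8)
The plan is to deduce the corollary from Lemma \ref{lem:travel} by expanding the resolvent in a Neumann (geometric) series and matching Laurent coefficients in $\lambda$. Write $A := A^X(\Phi)$, a finite Hermitian matrix since $\mathcal G(A^X)$ is a finite graph, and abbreviate $|f\rangle := |f_A\rangle$ and $|l\rangle := |l_A\rangle$. Lemma \ref{lem:travel} gives $\langle f | R(A;\lambda) | l\rangle = 0$; as its proof shows, the underlying cofactor identity $\langle f | (A-\lambda)^{\mathrm A} | l\rangle = 0$ is a polynomial identity in $\lambda$, so this vanishing actually holds for \emph{every} $\lambda$ in the resolvent set of $A$, not merely for $\mathrm{Im}\,\lambda \neq 0$.

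First I would restrict to $|\lambda| > \|A\|$, where the resolvent converges in the series
\begin{equation}
    R(A;\lambda) = (A-\lambda)^{-1} = -\sum_{m=0}^\infty \frac{A^m}{\lambda^{m+1}}.
\end{equation}
Taking the $(f,l)$ matrix element and invoking Lemma \ref{lem:travel},
\begin{equation}
    0 = \langle f | R(A;\lambda) | l\rangle = -\sum_{m=0}^\infty \frac{\langle f | A^m | l\rangle}{\lambda^{m+1}}
\end{equation}
for all sufficiently large $|\lambda|$. Since a Laurent series that vanishes on an annulus has identically zero coefficients, I conclude $\langle f | A^m | l\rangle = 0$ for every integer $m \geq 0$. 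Finally, Hermiticity of $A$, hence of $A^m$, gives $\langle l | A^m | f\rangle = \overline{\langle f | A^m | l\rangle} = 0$, which is the stated claim for every positive integer $m$.

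I do not expect any serious obstacle: essentially all of the content sits in Lemma \ref{lem:travel}, and the only point that deserves a careful word is the passage from ``the $(f,l)$ resolvent matrix element vanishes'' to ``every matrix power $\langle f | A^m | l\rangle$ vanishes'', which is precisely uniqueness of Laurent coefficients applied to the Neumann series above (equivalently, $\langle f | R(A;\lambda)|l\rangle$ is a rational function vanishing on an open set, hence identically zero, hence has vanishing expansion at $\lambda = \infty$). It may also be worth remarking, as motivation for the name, that $\langle l | A^m | f\rangle$ is the phase-weighted signed count of all length-$m$ walks from the first leaf to the last leaf of the complex-weighted glued tree, so the corollary says these weighted walk counts cancel completely whenever $\Phi \in F^X$ — the Aharonov--Bohm caging that forbids a particle from ever traversing the tree.
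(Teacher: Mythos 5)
Your proof is correct and fills in exactly the argument the paper leaves implicit when it presents the corollary as a mere "restatement" of Lemma \ref{lem:travel}: expand the resolvent as $-\sum_{m\ge 0} A^m/\lambda^{m+1}$ for $|\lambda|>\|A\|$, use uniqueness of Laurent coefficients to kill each $\langle f|A^m|l\rangle$, and use Hermiticity of $A$ to pass to $\langle l|A^m|f\rangle$. No gaps.
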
 
Effectively, we have proven that $A^X(\Phi)$ is impossible to cross when $\Phi \in F^X$. We will use this result to prove our main theorem. 

Now, we state and prove our main result.
\begin{thm}\label{thm:main}
    Let $G = (V,E)$ be a graph, and 
   let $A_P$ (which we regard as a linear map on $\ell^2(\mathbb Z)$) be a Hermitian complex weighted adjacency matrix on $G$. Further suppose that the maximum vertex degree of the underlying graph is finite. Write $\text{dist}(a,b)$ to denote the graph distance between a pair of vertices. If there exists a positive integer $N$ such that 
   \begin{equation}
       \langle a | A_P^m | b\rangle = 0
   \end{equation}
   for every integer $m$ so long as $\text{dist}(a,b) \geq N$, then every eigenstate of $A_P$ can be expressed as a linear combination of finitely local eigenstates of $A_P$. 
\end{thm}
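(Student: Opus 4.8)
The plan is to use the hypothesis to show that for every vertex $a$ the cyclic (Krylov) subspace generated by $|a\rangle$ under $A_P$ is \emph{finite dimensional}, and then to diagonalize $A_P$ there. First I would note that a finite maximum vertex degree $\Delta$ makes every metric ball $B_N(a):=\{b\in V:\text{dist}(a,b)<N\}$ finite, with $|B_N(a)|\le 1+\Delta+\cdots+\Delta^{N-1}$ uniformly in $a$; since each column of $A_P$ is finitely supported, $A_P$ maps the space $c_{00}(V)$ of finitely supported vectors into itself, and a Schur-type bound $\|A_P\|\le\Delta$ shows $A_P$ is a bounded self-adjoint operator on $\ell^2(V)$, so the spectral theorem is available.

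The key observation is that the hypothesis, together with Hermiticity (so that $\langle b|A_P^m|a\rangle=\overline{\langle a|A_P^m|b\rangle}=0$ whenever $\text{dist}(a,b)\ge N$), says precisely that $A_P^m|a\rangle$ is supported in $B_N(a)$ for every $m\ge 0$. Hence the cyclic subspace $\mathcal{H}_a:=\text{span}\{A_P^m|a\rangle:m\ge 0\}$ is contained in $\ell^2(B_N(a))$, so $\dim\mathcal{H}_a\le|B_N(a)|<\infty$, and it is $A_P$-invariant. Because $A_P$ is self-adjoint, $\mathcal{H}_a^{\perp}$ is $A_P$-invariant as well, so $A_P$ restricts to a Hermitian operator on the finite-dimensional space $\mathcal{H}_a$; diagonalizing that restriction yields an orthonormal basis of $\mathcal{H}_a$ made of eigenvectors of $A_P$, each supported in the finite set $B_N(a)$, i.e.\ compact localized (finitely local) eigenstates. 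Since $|a\rangle\in\mathcal{H}_a$, every $|a\rangle$ — and therefore every vector in $c_{00}(V)$ — is a finite linear combination of compact localized eigenstates of $A_P$.

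To conclude, let $\mathcal{L}$ be the closed span of all compact localized eigenstates of $A_P$. The previous paragraph gives $c_{00}(V)\subseteq\mathcal{L}$, and $c_{00}(V)$ is dense in $\ell^2(V)$, so $\mathcal{L}=\ell^2(V)$. Eigenvectors of the self-adjoint operator $A_P$ attached to distinct eigenvalues are orthogonal, so $\mathcal{L}$ is the orthogonal direct sum over eigenvalues $\lambda$ of the spaces $\overline{E_\lambda^{\mathrm{fin}}}$, where $E_\lambda^{\mathrm{fin}}$ denotes the span of the compact localized eigenstates with eigenvalue $\lambda$. Given any eigenstate $\psi$ with $A_P\psi=\mu\psi$, expanding $\psi\in\mathcal{L}=\ell^2(V)$ along this orthogonal decomposition and invoking uniqueness of the eigenspace decomposition of $\psi$ forces $\psi\in\overline{E_\mu^{\mathrm{fin}}}$; thus $\psi$ is an ($\ell^2$-convergent) linear combination of finitely local eigenstates having its eigenvalue, and a \emph{finite} such combination when $\psi$ is itself finitely supported. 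As a byproduct, $A_P$ has pure point spectrum and possesses an orthonormal eigenbasis of compact localized states.

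I expect the only genuine difficulty — the "main obstacle" — to be the bookkeeping forced by the infinite-dimensional ambient space: one must be careful that the cyclic subspaces are truly finite dimensional (which is exactly where the uniform ball bound coming from bounded degree is used), that the restriction of $A_P$ to each cyclic subspace is genuinely Hermitian so the finitely supported vectors extracted really are eigenvectors of $A_P$ on all of $\ell^2(V)$, and that "linear combination" is read in the closed-span sense for eigenstates that are not finitely supported. None of this needs computation, but each point should be stated cleanly.
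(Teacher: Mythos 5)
Your proof is correct and follows essentially the same route as the paper's: both rest on the observation that the Krylov subspace generated by a single site is a finite-dimensional, $A_P$-invariant subspace supported in a ball of radius $N$, on which the Hermitian restriction can be diagonalized to produce compact localized eigenstates. If anything, your handling of the closed-span and orthogonal-decomposition bookkeeping in the infinite-dimensional ambient space is more careful than the paper's informal ``choose another vertex and repeat'' conclusion.
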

\begin{proof}
    Adopt the hypothesis of the theorem. 
    Write 
    \begin{equation}
        |\varphi_n \rangle = A_P^n | b\rangle.
    \end{equation}
    Finitely many, say $z$, of these states are linearly independent, because the degree of vertices in 
    $G$ is bounded above, and there are a finite number of vertices of distance less than $N$ from $b$. 
    Use Gram--Schmidt orthogonalization on the linearly independent set of $|\varphi_n \rangle$ vectors, such that 
    \begin{equation}
        |\psi_n\rangle = \sum_{m = 0}^z M_{nm} |\varphi_m\rangle
    \end{equation}
    with $M$ an invertible matrix and $\langle \psi_n| \psi_m \rangle = \delta_{nm}$. 
    Then, take $|\lambda\rangle$ such that $A_P |\lambda \rangle = \lambda |\lambda \rangle $, and $\langle b | \lambda \rangle \neq 0$. 
    Furthermore, note that $\langle b| A_P^\alpha = \langle b| H^\alpha \sum_{n = 0}^z |\psi_n \rangle \langle \psi_n |$, since the  $|\psi\rangle$ vectors are explicitly constructed from the states that can arise from repeated applications of $H$ to $|b\rangle$. 
    Hence, 
    \begin{equation}
        |\tilde{\lambda}\rangle := \sum_{n = 0}^z  |\psi_n \rangle \langle \psi_n | \lambda \rangle
    \end{equation}
    is a compact localized eigenstate of $H$ with eigenvalue $\lambda$. Choose some other vertex $v$ where $|\lambda\rangle$ has support but $\langle b| H^\alpha |v\rangle = 0$ for all $\alpha$ and repeat this procedure. In this way, it is possible to express $|\lambda\rangle$ as a linear combination of compact  localized states.
\end{proof}
Theorem \ref{thm:main} can be stated rather simply: If all particles are confined to a neighborhood by a Hamiltonian, then  all eigenstates can be written as a sum of other, local eigenstates. 
Proving that eigenstates can be chosen to have finite support is actually stronger than the  condition of having flat bands, since a notion of band structure requires, at a minimum, a translation symmetry, while the property of having finite support is well defined regardless of symmetry. 
The values of $\Phi$ for which all eigenstates are compact localized are precisely the flat points, which is the reason for the nomenclature. 
For Euclidian lattices, all eigenstates are compact localized if and only if all bands of the Bloch Hamiltonian are flat \cite{email1,email2,email3}. Since we are interested in noneuclidian lattices, however, we prefer to show that eigenstates are compact localized directly. 
Furthermore, in one dimensional models, the condition of having only flat bands is equivalent to the statement that eigenstate projectors are also localized \cite{email1}.

We now proceed to define infinitely many lattices that satisfy Theorem \ref{thm:main} and some that have only compact localized states despite failing to satisfy Lemma \ref{lem:travel}. 
\subsection{Infinite graphs with compact localized states}\label{sec:examples}
\subsubsection{Replacing the edges in an infinite graph with glued trees}
\begin{cor}[of Lemma \ref{lem:travel} and Theorem \ref{thm:main}]\label{cor:main}
    Let $G = (V,E)$ be a (countable but not necessarily finite) graph and $E' \subset E$ be a set of its edges. 
    We now regard $\mathcal A(G)$ as a linear map on $\ell^2(V)$.
    Choose a real number $\Phi_e$, for each edge $e \in E'$, and a sequence of integers $X^e = \{x_1^e, x_2^e, \dots x_d^e\}$ such that $x_d^e > 1$ for all $e$. 
    Furthermore regard $A^{X^e}(\Phi_e)$ as a map on $\ell^2(U_e)$ with $U_{e} \cap U_{e'} = \emptyset$ if $e \neq e'$ and $U_e \cap V = \emptyset$ for all $e$. Define the linear map $\mathcal R(G,E')$ on $\ell^2\left(V \cup \bigcup_{e \in E'} U_e\right)$ such that 
    \begin{equation}
    \begin{aligned}
    \mathcal R(G, E')  &=\mathcal A(G) - \sum_{e = (u,v) \in E'} \left(
            |u \rangle \langle v| + |v \rangle \langle u | \right) \\ 
            &+  \sum_{e = (u,v) \in E'} A^X(\Phi_e) (|f_{A^X} \rangle\langle u | + |l_{A^X} \rangle \langle v| ) \\
            &+  \sum_{e = (u,v) \in E'} (|u \rangle \langle f_{A^X}| + |v \rangle \langle l_{A^X}| ) A^X(\Phi_e)
          \end{aligned}
    \end{equation}
    If every self--avoiding path in $G$ of sufficient length contains an edge in $E'$, and $\Phi_e \in F^{X^e}$ for each $e \in E'$, then $\mathcal R(G,E')$ has compact localized states and countably many distinct eigenvalues. 
\end{cor}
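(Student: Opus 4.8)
The plan is to deduce the statement from Theorem~\ref{thm:main} applied to the operator $A_P=\mathcal R(G,E')$ on $\ell^2\bigl(V\cup\bigcup_{e\in E'}U_e\bigr)$, whose graph $G':=\mathcal G(\mathcal R(G,E'))$ is the graph obtained from $G$ by deleting each $e=(u,v)\in E'$ and splicing in a copy of the glued tree $\mathcal G(A^{X^e})$ carrying the CCAM weights of $A^{X^e}(\Phi_e)$, with $f_{A^{X^e}}$ joined to $u$ and $l_{A^{X^e}}$ joined to $v$. First I would clear the structural preliminaries: $\mathcal R(G,E')$ is Hermitian with every edge weight of unit modulus, being assembled from the real symmetric $\mathcal A(G)$, the Hermitian CCAMs $A^{X^e}(\Phi_e)$ of Definition~\ref{defn:flux}, and couplings added in conjugate pairs; and, each spliced glued tree being a \emph{finite} graph (assuming, as throughout, that the integers in the $X^e$ are uniformly bounded), $G'$ inherits from $G$ a finite maximum vertex degree, which is the standing hypothesis of Theorem~\ref{thm:main}. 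It then suffices to produce a positive integer $N$ with $\langle a|\mathcal R(G,E')^m|b\rangle=0$ for every $m$ whenever $\mathrm{dist}_{G'}(a,b)\ge N$.

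The sole analytic input is that glued trees at flat values cannot be crossed: since $\Phi_e\in F^{X^e}$, the corollary of Lemma~\ref{lem:travel} (Aharonov--Bohm interference) yields, for each $e\in E'$ and each $j\ge1$,
\begin{equation}
\langle l_{A^{X^e}(\Phi_e)}|\,\bigl(A^{X^e}(\Phi_e)\bigr)^{j}\,|f_{A^{X^e}(\Phi_e)}\rangle=0 ,
\end{equation}
and symmetrically with $f$ and $l$ exchanged; the canonical gauge of Definition~\ref{defn:canonicalgauge} may be used here since the vanishing of such a matrix element is gauge invariant. I would then expand $\langle a|\mathcal R(G,E')^m|b\rangle$ as a weighted sum over length-$m$ walks from $b$ to $a$ in $G'$. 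Because the only edges of $G'$ leaving a copy $U_e$ are $u\sim f_{A^{X^e}}$ and $v\sim l_{A^{X^e}}$, passing to the trace of a walk on $V$ shows that consecutive $V$-vertices of that trace are either equal, joined by an edge of $E\setminus E'$, or --- exactly when the walk enters one gateway of some $\mathcal G(A^{X^e})$, remains in $U_e$, and exits by the other gateway --- the two endpoints of an edge $e\in E'$. If a walk performs no such \emph{complete crossing}, its trace is a walk in $(V,E\setminus E')$ and hence never leaves the connected component of $G\setminus E'$ containing $b$, so deleting its cycles leaves a self-avoiding path in $G$ whose length falls below the ``sufficient length'' threshold of the hypothesis. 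Taking $N$ larger than that threshold --- enlarged to absorb the finite diameters of the spliced glued trees, which also handles the cases where $a$ or $b$ sits inside a glued tree --- therefore forces every walk from $b$ to $a$ with $\mathrm{dist}_{G'}(a,b)\ge N$ to execute at least one complete crossing.

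To finish, I would slice each contributing walk at its \emph{first} complete crossing, writing it uniquely as $w^{\mathrm{pre}}\cdot(u\to g)\cdot(\text{an excursion in }U_e\text{ from }g\text{ to }g')\cdot(g'\to v)\cdot w^{\mathrm{post}}$, where $\{g,g'\}=\{f_{A^{X^e}},l_{A^{X^e}}\}$, $w^{\mathrm{pre}}$ is a walk from $b$ to $u$ with no complete crossing, and $w^{\mathrm{post}}$ is an arbitrary walk from $v$ to $a$. Summing the walk weights within each such class factors the excursion out as $\langle g'|(A^{X^e}(\Phi_e))^{j}|g\rangle$, which vanishes, so the whole sum vanishes and $\langle a|\mathcal R(G,E')^m|b\rangle=0$. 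Theorem~\ref{thm:main} then gives that every eigenstate of $\mathcal R(G,E')$ is a linear combination of compact localized eigenstates. Finally, matching eigenvalues on the two sides of such a decomposition forces every eigenvalue of $\mathcal R(G,E')$ to be the eigenvalue of some compact localized state, hence to lie in the finite spectrum of the restriction of $\mathcal R(G,E')$ to a finite vertex set; as there are only countably many finite vertex sets, $\mathcal R(G,E')$ has only countably many distinct eigenvalues.

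I expect the real difficulty to be combinatorial rather than spectral: making airtight the trace-and-loop-removal passage from the self-avoiding-path hypothesis on $G$ to the behaviour of arbitrary walks on $G'$, and checking that a single $N$ works uniformly once the glued trees $X^e$ (hence their diameters) vary with $e$. The Aharonov--Bohm mechanism itself is already packaged in Lemma~\ref{lem:travel} and its corollary, and the remaining reduction to Theorem~\ref{thm:main} is routine.
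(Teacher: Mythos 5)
Your proof is correct and follows essentially the same route as the paper, which only sketches the argument as a four-step summary (glued trees at flat values cannot be crossed by Lemma~\ref{lem:travel}, so the confinement hypothesis of Theorem~\ref{thm:main} holds, and countability of the spectrum follows from diagonalizing the finite local restriction attached to each of the countably many vertices). Your slicing of each walk at its first complete crossing supplies the combinatorial detail the paper omits, and your observation that the $x_i^e$ (hence the glued-tree diameters and vertex degrees) must be uniformly bounded in $e$ for a single $N$ and the finite-maximum-degree hypothesis of Theorem~\ref{thm:main} to be available is a genuine caveat that the corollary's statement leaves implicit.
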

Corollary \ref{cor:main} can be summarized in the following way: 
\begin{enumerate}
    \item Construct some crystal lattice $G$.
    \item Replace the chosen subset of edges in the crystal with complex weighted glued trees at flat points, such that every sufficiently long self--avoiding path must traverse across some complex weighted glued tree. 
    \item The resulting complex weighted adjacency matrix (or quantum Hamiltonian) has perfectly localized eigenstates because the glued trees are impossible to cross, by \ref{lem:travel}, which permits the use of Theorem \ref{thm:main} since every sufficiently long path away from any vertex must traverse across a glued tree. 
    \item To see that such operators have countably many distinct eigenvalues, simply compute the spectrum of $\mathcal R$ by diagonalizing $\mathcal R$ in the space of states reachable from $v \in V$ by repeated application of $\mathcal R$. There is a finite set of eigenvalues acquired in this way for every element of $V$, which is assumed to be countable. 
\end{enumerate}

In what follows, we will almost entirely restrict our attention to constructions with some number of translation symmetry. This choice is a pragmatic one, since Bloch Hamiltonians are much more easily analyzed than generic operators on $\ell^2(\mathbb Z)$. 
Furthermore, the explicit example we provide is a quasi--one dimensional model, but we emphasize that Corollary \ref{cor:main} applies to lattices in any number of spatial dimensions, and even graphs that are aperiodic. 
Over the course of the paper, we have imposed the requirement that glued trees grown from positive integer sets must terminate in a number greater than one.  
We see now why this demand that $x_d > 1$ is no loss of generality since $x_d = 1$ can be equivalently recast as a redefinition of $E'$ in $\mathcal R(G,E')$. 

\begin{figure}
    \centering
    \begin{subfigure}{0.49\linewidth}
    \centering
    \includegraphics[height=.9\linewidth,angle=0,trim = 5 5 5 5, clip]{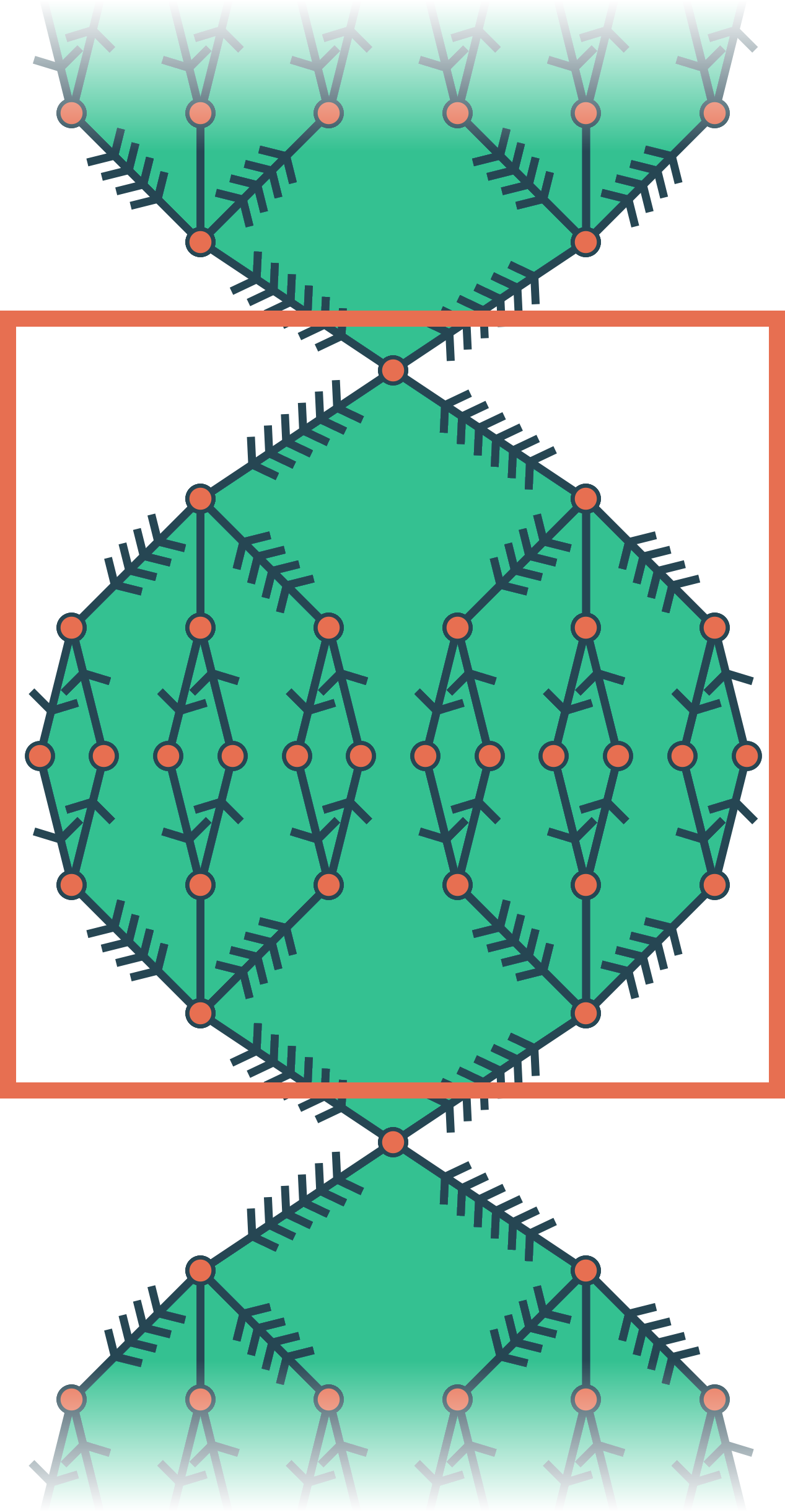}
    \end{subfigure}
    \begin{subfigure}{0.50\linewidth}
    \centering
    \includegraphics[height=.95\linewidth]{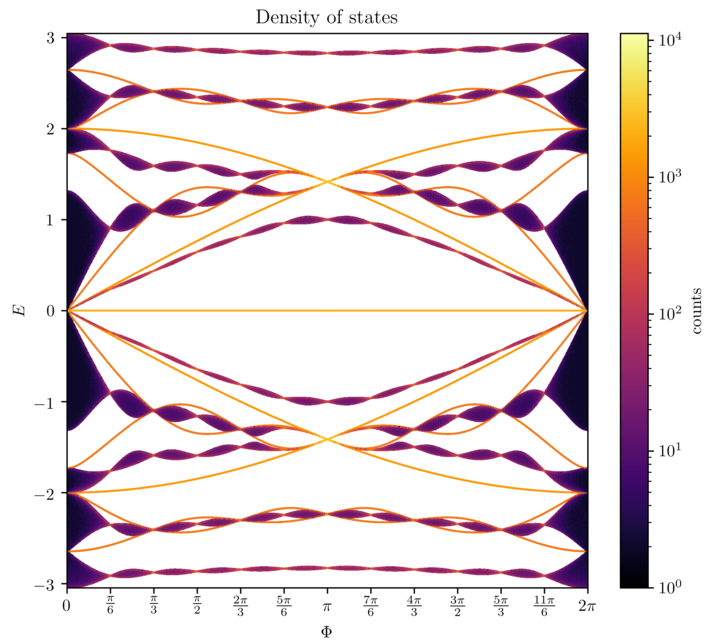}
    
    \end{subfigure}
    \caption{Above we have pictured a lattice that we have proven to have only compact localized eigenstates at flat points, and a plot showing that bands at flat points are themselves flat. \textbf{Left: } Three unit cells of the one dimensional chain of glued trees grown from $X = \{2,3,2\}$ in the canonical gauge. An arrow between a pair of vertices denotes that $\langle i | H | j \rangle = e^{\mathrm i \frac{\Phi}{4}}$. 
    Likewise four and six arrows between a pair of vertices denotes $\langle i| H | j \rangle = e^{\mathrm i\Phi}$ and $\langle i | H | j \rangle = e^{\frac{3}{2}\mathrm i \Phi}$ respectively. 
    \textbf{Right: } Above plotted is the density of states of the Bloch Hamiltonian of the $X = \{2,3,2\}$ one--dimensional chain of glued trees. Flat points $\Phi$ in  $F^{\{2,3,2\}}$ are labeled on the horizontal axis, where we see the total bandwidth vanish, indicating that all bands are dispersion free. }
    \label{fig:1d}
\end{figure}

Construct a graph with vertices given by the integers and edges connecting vertices separated by a distance (in $\mathbb Z$) equal to one. Replace every edge with a glued tree grown by $X = \{x_1, x_2, \dots, x_d\}$. Further demand that all $\Phi_e$ values are equal to $\Phi$. 
If $\Phi \in F^X$, then the system at hand has compact localized states and only flat bands. The simplest example of one such lattice is given by the rhombi--chain lattice discussed in Section \ref{sec:warmup}. 
Another example is given in Figure \ref{fig:1d}. 

While the flatness of bands and localization of eigenstates is implied by Theorem \ref{thm:main}, lattices of this form can be shown to have flat bands without considering eigenstates at all. In particular, it is possible to show that this family of lattices has only flat bands by manipulating the characteristic polynomial of the Bloch Hamiltonian alone. 
In practice, we are interested in properties of eigenstates, but it is nonetheless interesting that a discussion of eigenstates turns out to be unnecessary to show that bands are flat in translation--invariant chains. See Appendix \ref{app:chrpolyafb}.

We also remark that it is possible to derive the eigenvalue spectrum of $p$--nary glued tree lattices at $\Phi = 0$, and at $\Phi = \frac{2\pi}{p}$. Of course $0$ is not a flat point, so there is nontrivial dispersion but we regard any quantitative statements about the particular eigenvalues of these lattices as interesting. In particular, the aforementioned results are mild extensions of Theorem \ref{thm:fluxaf} and \ref{thm:fluxless}, which we again relegate to the appendices.  
Finally, notice that  (\ref{eqn:generating}) implies that, for a given $G$ and $E$, the number of sets $X$ of integers greater than one for which $\mathcal R(G,E)$ has its minimum flat point at $\frac{2\pi}{M}$ is a sequence (of $M$) generated by the left hand side of (\ref{eqn:generating}).

Consider the lattice depicted in Figure \ref{fig:1d} (a). The glued trees in question are grown from the sequence $X = \{2,3,2\}$. 
The Bloch Hamiltonian in the canonical gauge is given by 
\begin{equation}
    H = 
    \begin{gathered}
    \includegraphics[width=.9\linewidth]{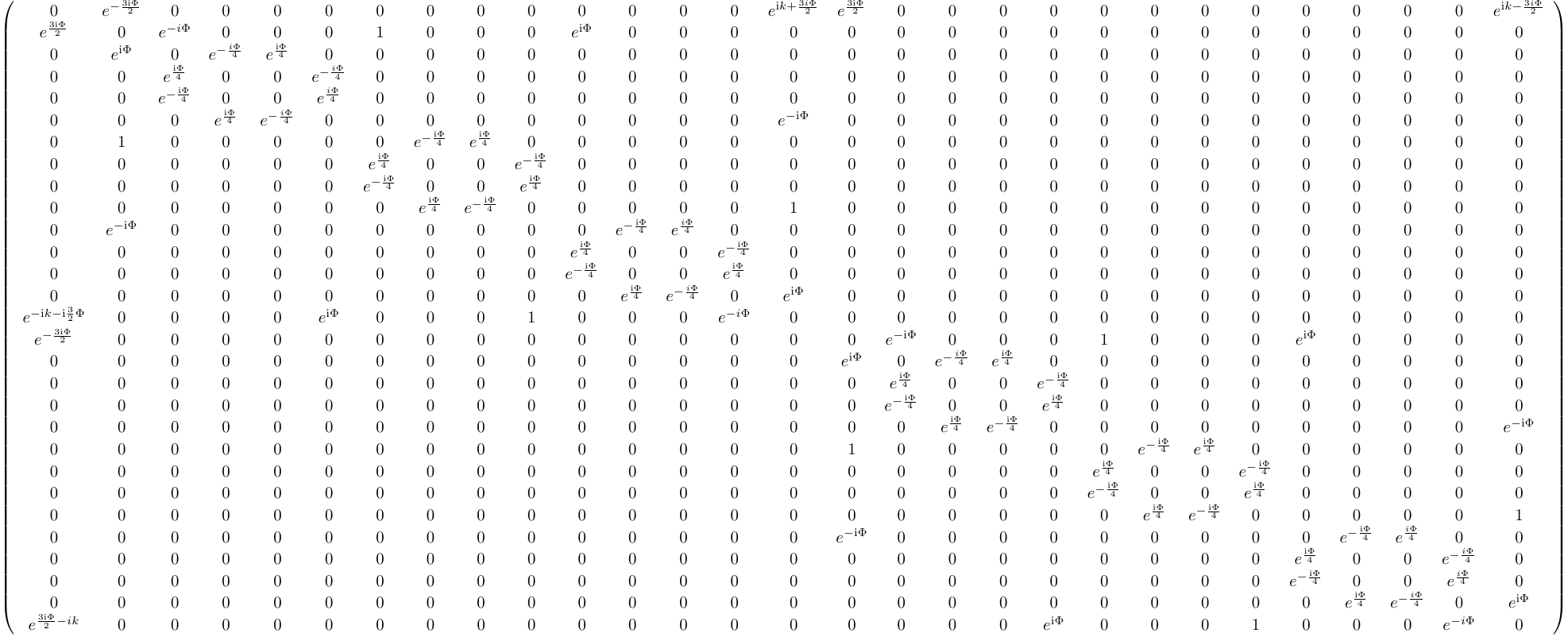}
    \end{gathered}
\end{equation}
Verifying that $H$ has only flat bands whenever $\Phi \in F^{\{2,3,2\}}$ can be accomplished in any number of ways. One might calculate the characteristic polynomial and verify that it is independent of $k$ at flat points, or otherwise one might numerically calculate band structure at many different points and verify that the bands are indeed flat.

\subsubsection{Lotus lattices}

\begin{figure}
\centering
\begin{subfigure}{0.24\linewidth}
    \includegraphics[width=\linewidth,angle=90, trim = 15 15 15 15,clip]{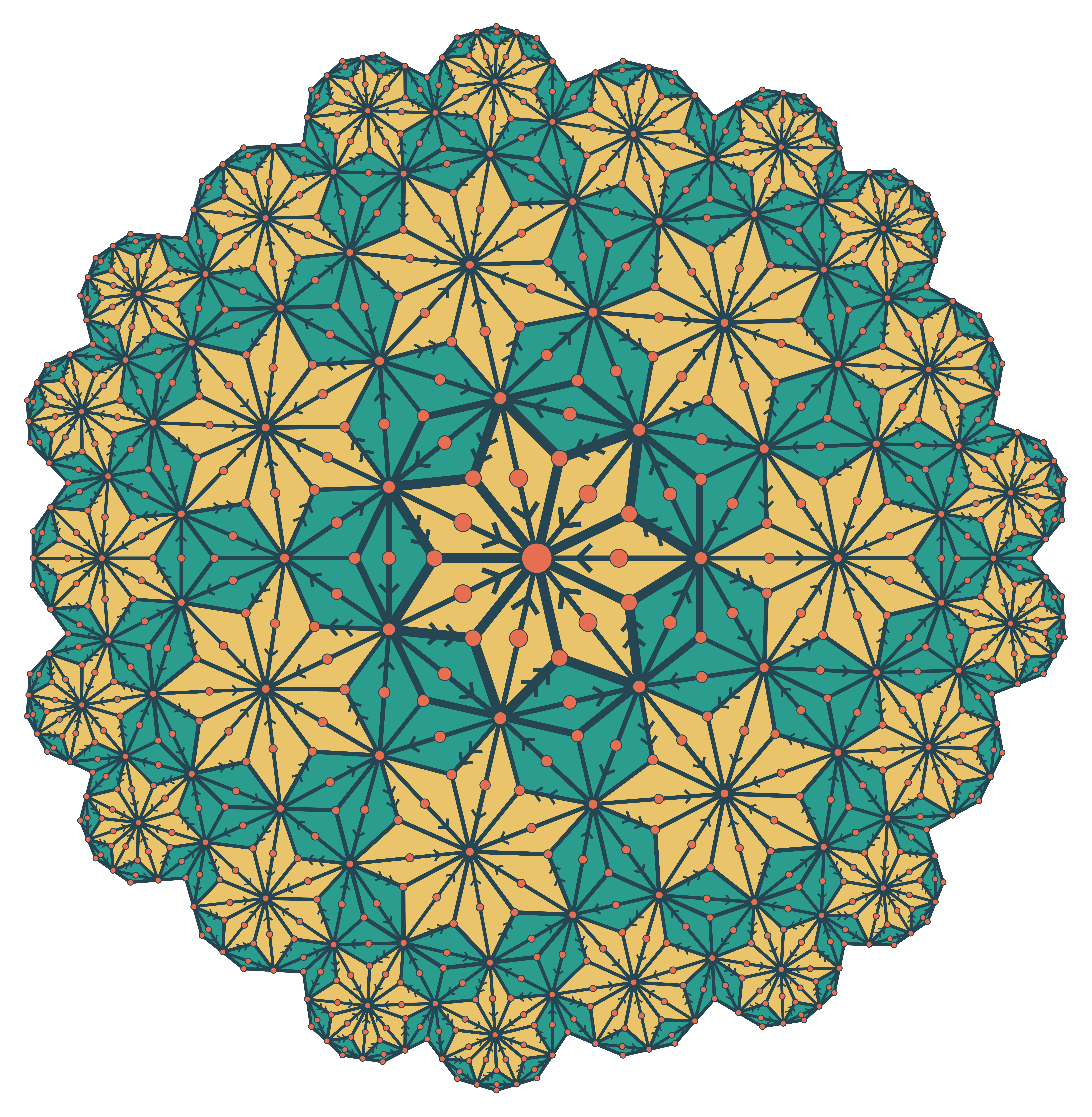}
    \caption{The $\{7,3\}$ lotus lattice generated from $3$--shrubs.}
\end{subfigure}
\begin{subfigure}{0.24\linewidth}
    \includegraphics[width=\linewidth, trim = 15 15 15 15,clip]{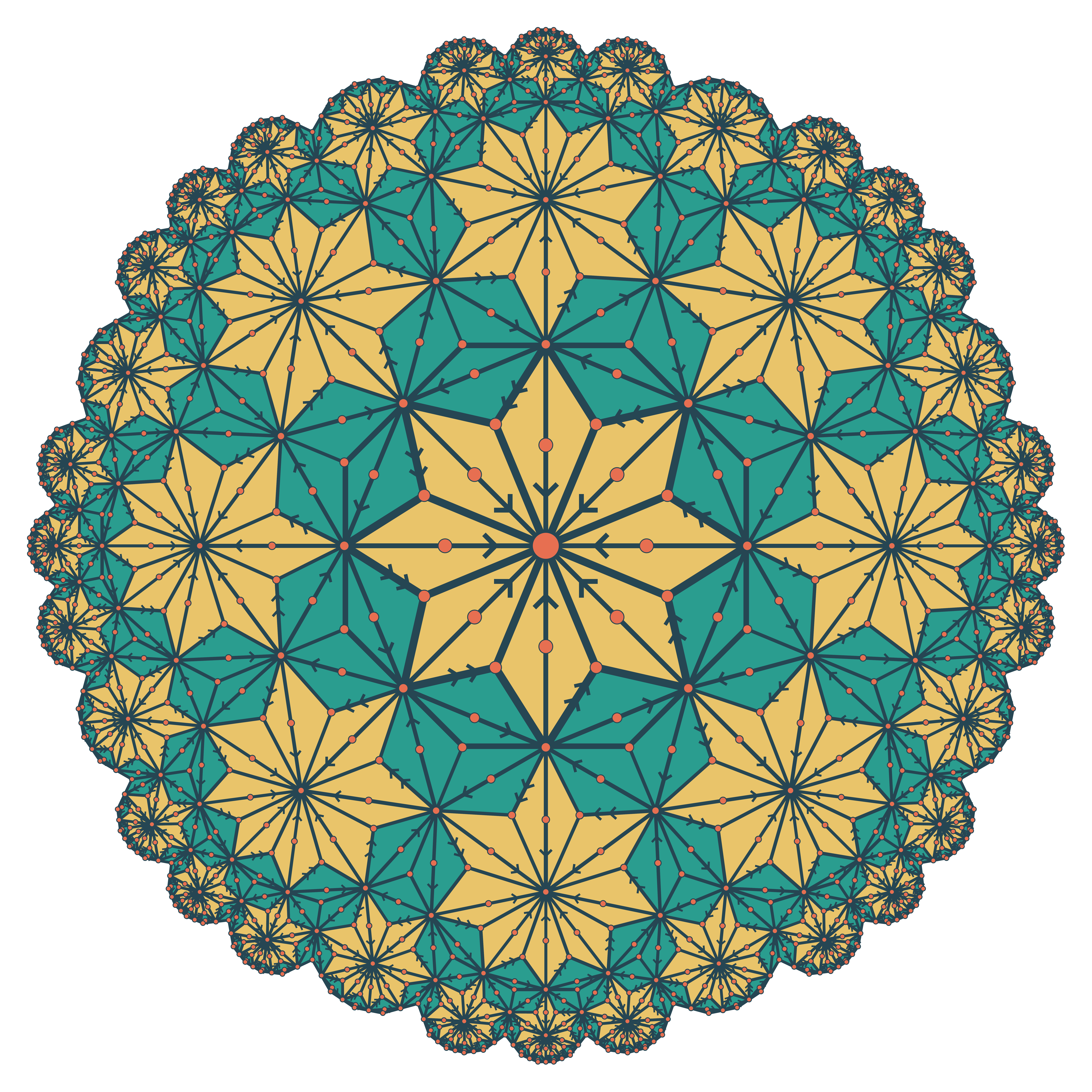}
    \caption{The $\{8,3\}$ lotus lattice generated from $3$--shrubs.}
\end{subfigure}
\begin{subfigure}{0.24\linewidth}
    \includegraphics[width=\linewidth,angle=90, trim = 15 15 15 15,clip]{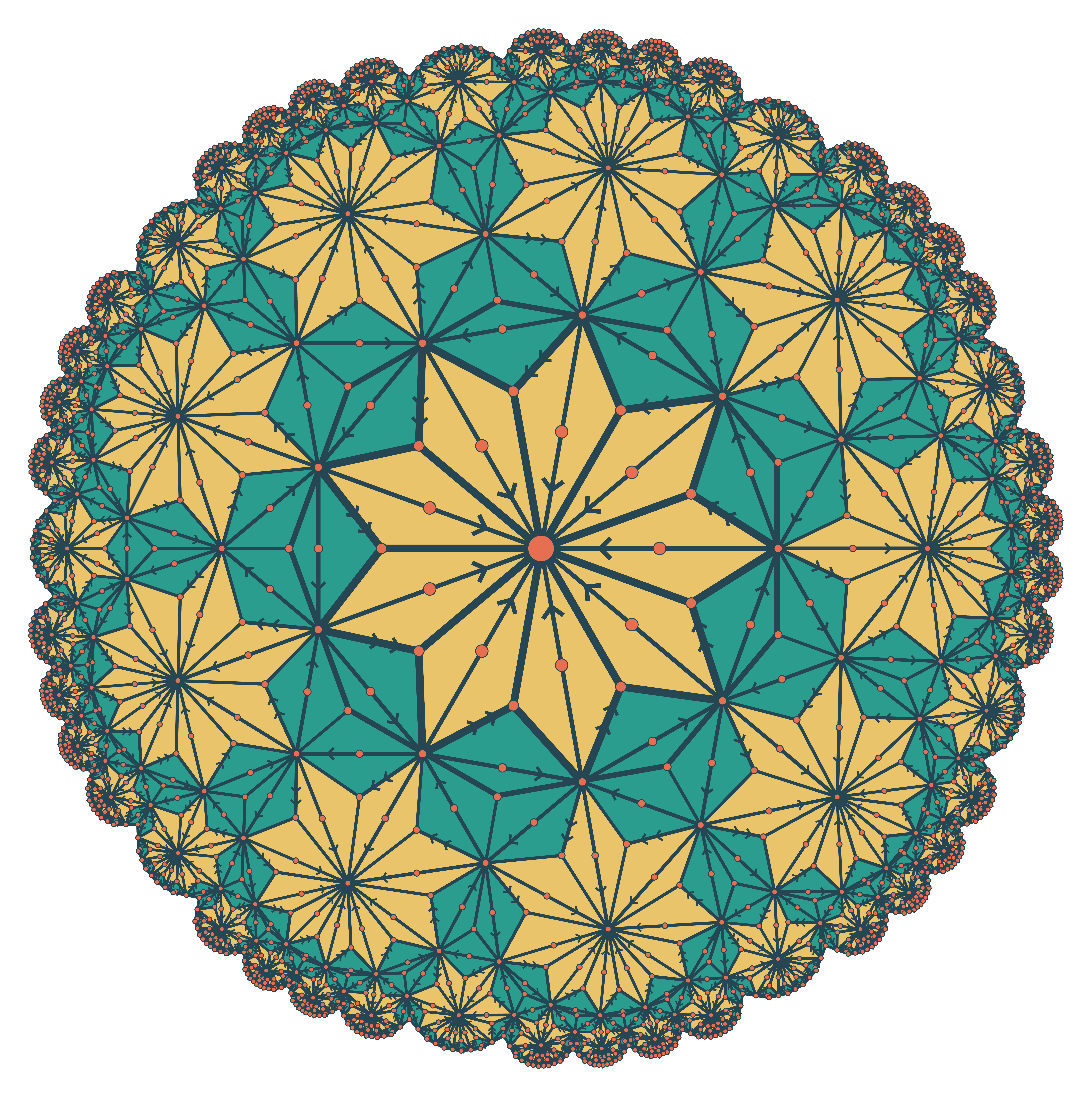}
    \caption{The $\{9,3\}$ lotus lattice generated from $3$--shrubs.}
\end{subfigure}
\begin{subfigure}{0.22\linewidth}
    \includegraphics[width=1.03\linewidth,trim = {.9cm .9cm .9cm .9cm}, clip]{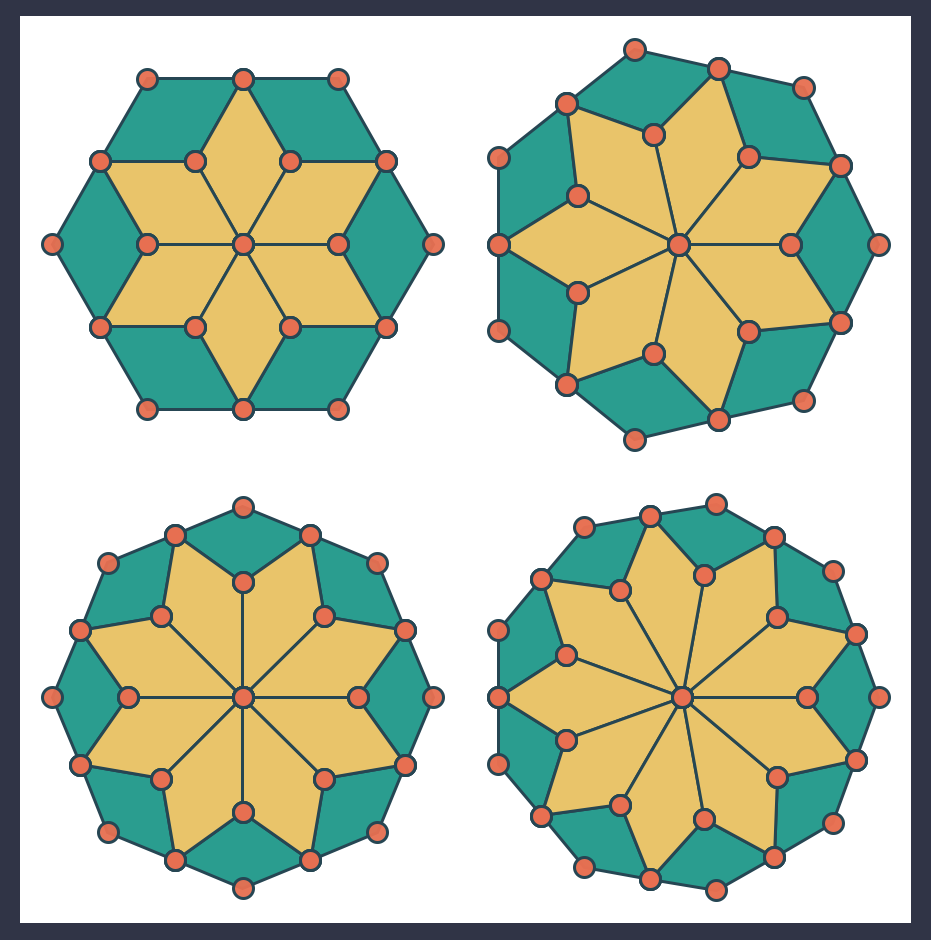}
    \caption{Various lotus tiles generated from $2$--shrubs.}
\end{subfigure}
    \caption{Pictured above are finite chunks of various lotus lattices constructed from $3$--shrubs drawn on the Poincar\'e disk. A dark arrow between a pair of vertices $i$ and $j$ indicates that $\langle i | H |j\rangle =  e^{\mathrm i \Phi}$, while a pair of dark arrows indicates that $\langle i | H | j \rangle = e^{2 \mathrm i\Phi}$. Plaquettes are colored according to the sign of phase wound counterclockwise around their boundary. 
    }
    \label{fig:lotus}
\end{figure}

For each $\{n,3\}$ Sch\"afli symbol with $n \geq 6$, one can construct a graph by replacing regular polygons with ``tiles" constructed from $2n$ individual glued trees (or in the drawings above, shrubs). 
There are, of course, many ways to do this. 
Construct a regular polygon with $n \geq 6$ sides, and place a vertex at every corner. Place an additional vertex at the midpoint of each side. Then, place a vertex at the center of the polygon. Connect\footnote{By ``connect", we mean to identify one of the degree $p$ vertices in $S_p$ with the central vertex and the other vertex of degree $p$ with a vertex at the midpoint of a side.} a $p$--shrub from the central vertex to one lying at the midpoint of a side. Continue in this way, identifying the exterior vertices of adjacent $p$--shrubs. Finally, add vertices and edges in such a way that midpoint vertices that share a face are also connected by a $p$--shrub. 
In this way, one can construct a regular polygon made of rhombi, which are all $p$--shrubs. We will affectionately refer to graphs constructed in this way as \textit{lotus tiles} of $n$ leaves since they evoke the form of the lotus flower. 
We have drawn the first few lotus tiles in Figure \ref{fig:lotus} (d).

A lotus tile of $n$ leaves is made up of $2n$ $p$--shrubs. We can choose a gauge where each face in a given $p$--shrub winds a phase equal to either $\Phi$ or $-\Phi$, in such a way that the total phase wound around the boundary of the lotus tile vanishes. 
The importance of this fact is difficult to overstate; given some shape that tiles either $\mathbb R^2$ or the Poincar\'e disk, it is possible to choose a gauge compatible with some tiling if and only if the total phase around the boundary of the tile vanishes. It, intuitively, is no loss of generality to require that the phase across any edge on the boundary of the tile is also zero. 
This observation is of particular importance to lotus tiles with an odd number of sides. 

We refer lattices constructed by tessellating a space with lotus tiles in this way as \textit{lotus lattices}.
Some lotus lattices constructed from $3$--shrubs are drawn in Figure \ref{fig:lotus}. 
Rather than including a precise definition, we content ourselves with studying a few examples. 
It is straightforward to see how the above lattices could be generalized to $p \neq 3$, or greater depths, or a greater number of sides for the primitive polygon.

Note that Lemma \ref{lem:travel} does not directly apply to lotus lattices since they cannot be constructed by simply replacing the edges on some graph with glued trees. 
Nonetheless, showing that Theorem \ref{thm:main} \emph{does} apply is relatively straightforward. 
Let $H$ be a complex weighted adjacency matrix for one of the lattices drawn in Figure \ref{fig:lotus} in a gauge that is consistent with the coloring of the lattice in question. 
To show that Theorem \ref{thm:main} applies, it is only necessary to show that particles initialized to the central vertex, or to one of the points in the midpoint of a polygon side satisfy
\begin{equation}
    H^2 |\psi \rangle = \deg(\psi) |\psi\rangle,
\end{equation}
so long as $\Phi = \pm\frac{2 \pi}{p}$, and then to note that every vertex is either one such point or connected only to such points. 

Now we are in a position to appreciate that Theorem \ref{thm:main} did not rely upon notions of Bloch's theorem.
It is possible to exploit various Fuschian boosts in order periodically identify the boundary of a finite chunk of lotus lattices, in such a way that states living in a one--dimensional irreducible representation of some such boost are fully captured, doing so requires substantial algebraic technology \cite{hypbloch,hyperblochmath}. 
Because Theorem \ref{thm:main} is purely mechanical, it is trivial to prove that the above pictured lotus lattices have only compact localized eigenstates. 
A proof of the same statement from the perspective of Hyperbolic Bloch theory may amount to an interesting algebraic result, but we defer this to future work. 

One may also wonder about other $\{p,q\}$ tilings of lotus tiles. For $q \neq 3$, there are generally states that are not compact localized. To see that this is the case, construct the $\{4,4\}$ lotus lattice out of any desired $p$--shrub and observe that there are horizontal and vertical lines along which particles may freely traverse. An analogous phenomenon occurs for $\{p,q\}$ lotus lattices with $q > 3$. 
Finally, one may wonder about the $\{6,3\}$ lotus lattice, since it apparently satisfies our criterion. 
It turns out that the $\{6,3\}$ lotus lattice constructed from $2$--shrubs is precisely the dice lattice. Hence, we have found infinitely many generalizations of the dice lattice that we expect to be phenomenologically similar to the dice lattice itself. 
We have drawn one such generalization of the dice lattice in Figure \ref{fig:thrice}.
We will discuss the dice lattice and generalizations thereof in greater detail in the next section.

The attentive reader will also wonder whether a tree grown by an integer sequence $X = \{x_1, \dots, x_d\}$ may also be used to construct lotus lattices. In short, the answer is yes, although only the point $\Phi = \frac{2 \pi}{x_1}$ admit flat bands and only compact localized states. The reason for this is effectively that the eigenstates at other flat points of a glued tree have support on a majority of a tree, rather than exactly half of it. 
This freedom to cross most but not all of a glued tree is sufficient to spoil the perfect confinement necessary for Theorem \ref{thm:main} to hold. 

\begin{figure}
    \centering
        \includegraphics[width=0.7\linewidth,trim= {2cm 1.8cm 3.5cm 5cm},clip]{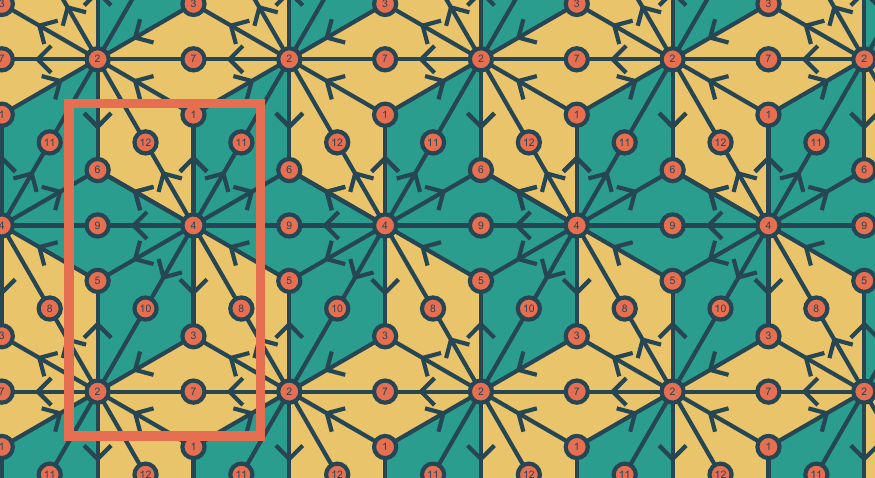}
    \caption{A diagram of the $X = \{3\}$ generalization of the dice lattice. Green and yellow plaquettes are colored according to the sign of the phase wound clockwise around the plaquette. An orange box is drawn around a single unit cell. An arrow on the edge from site $i$ to site $j$ indicates that $\langle i | A| j \rangle = e^{\mathrm i \Phi}$. Note that this lattice is not made of glued trees in the canonical gauge. }
    \label{fig:thrice}
\end{figure}

\subsubsection{Another kind of lotus lattice}
\begin{figure}
    \centering
    \begin{subfigure}{0.24 \linewidth}
        \centering
        \includegraphics[height=\linewidth,angle=0,trim = {.1cm .1cm .9cm .9cm},clip]{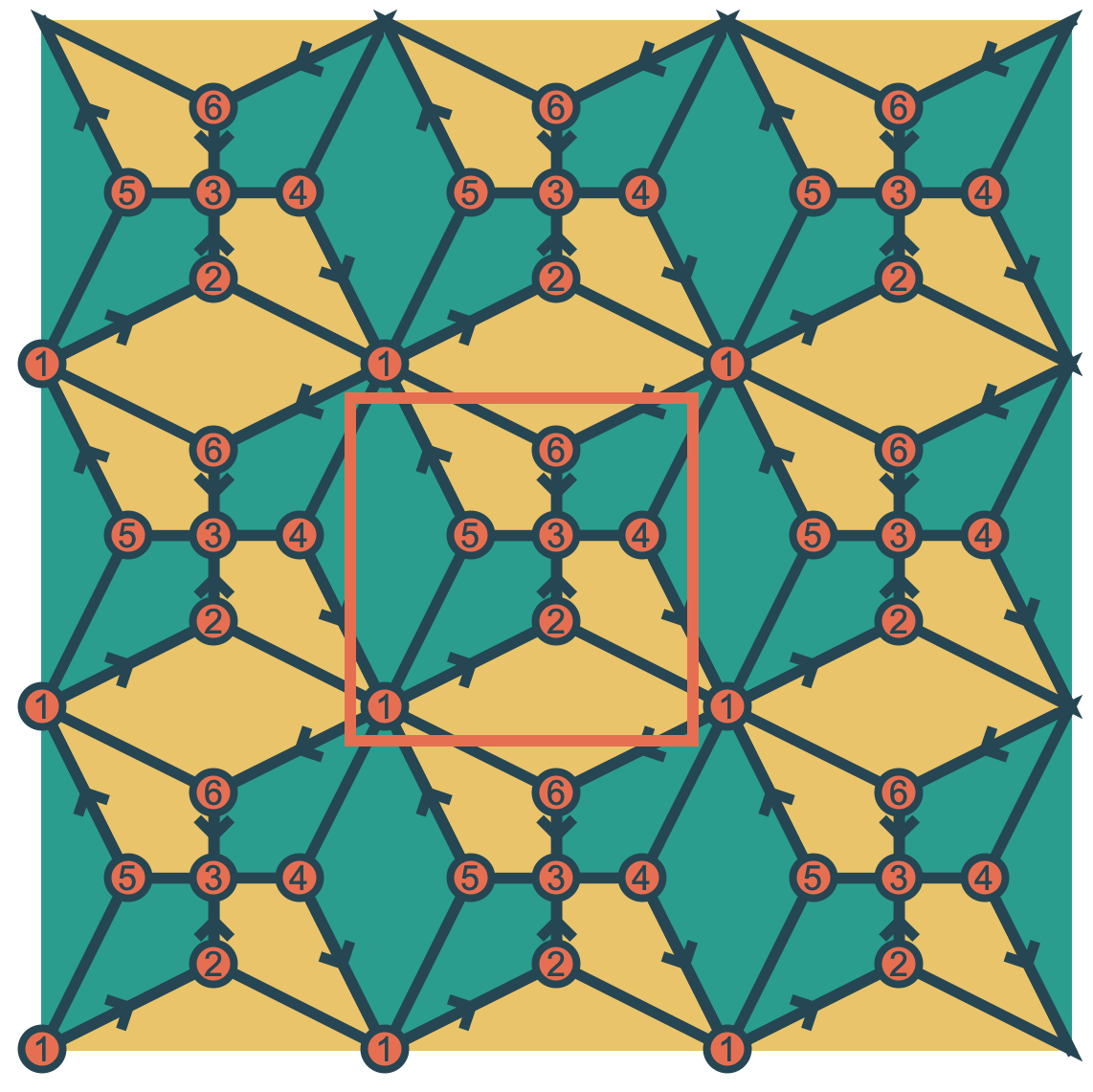}
        \caption{\{4,4\} lotus lattice of the second kind.}
    \end{subfigure}
    \begin{subfigure}{0.24\linewidth}
        \centering 
        \includegraphics[width=\linewidth,trim = 12 12 12 12, clip]{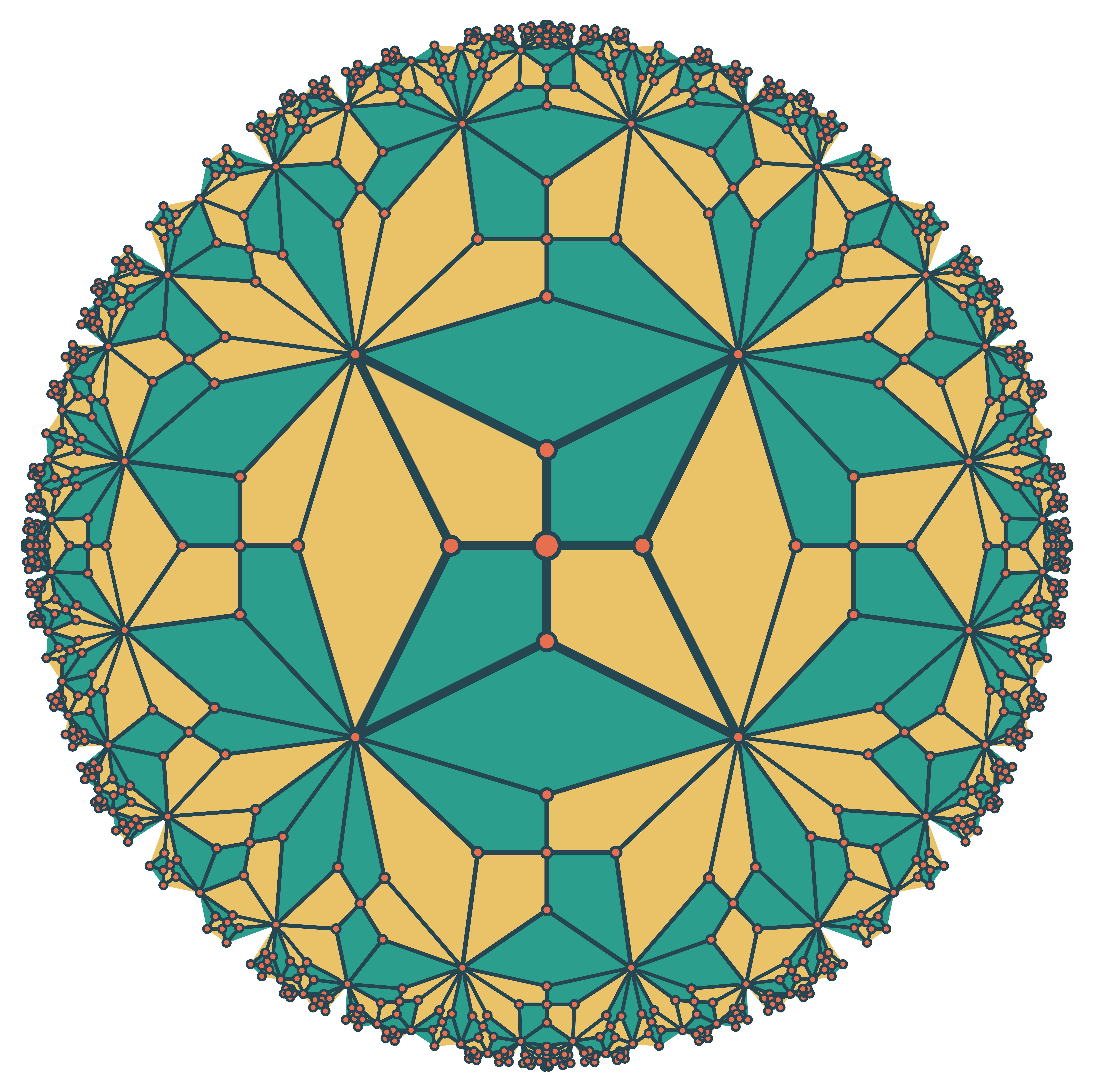}
        \caption{$\{4,6\}$ lotus lattice of the second kind.}
    \end{subfigure}
    \begin{subfigure}{0.24\linewidth}
       \centering
       \includegraphics[width=\linewidth, trim = 12 12 12 12, clip]{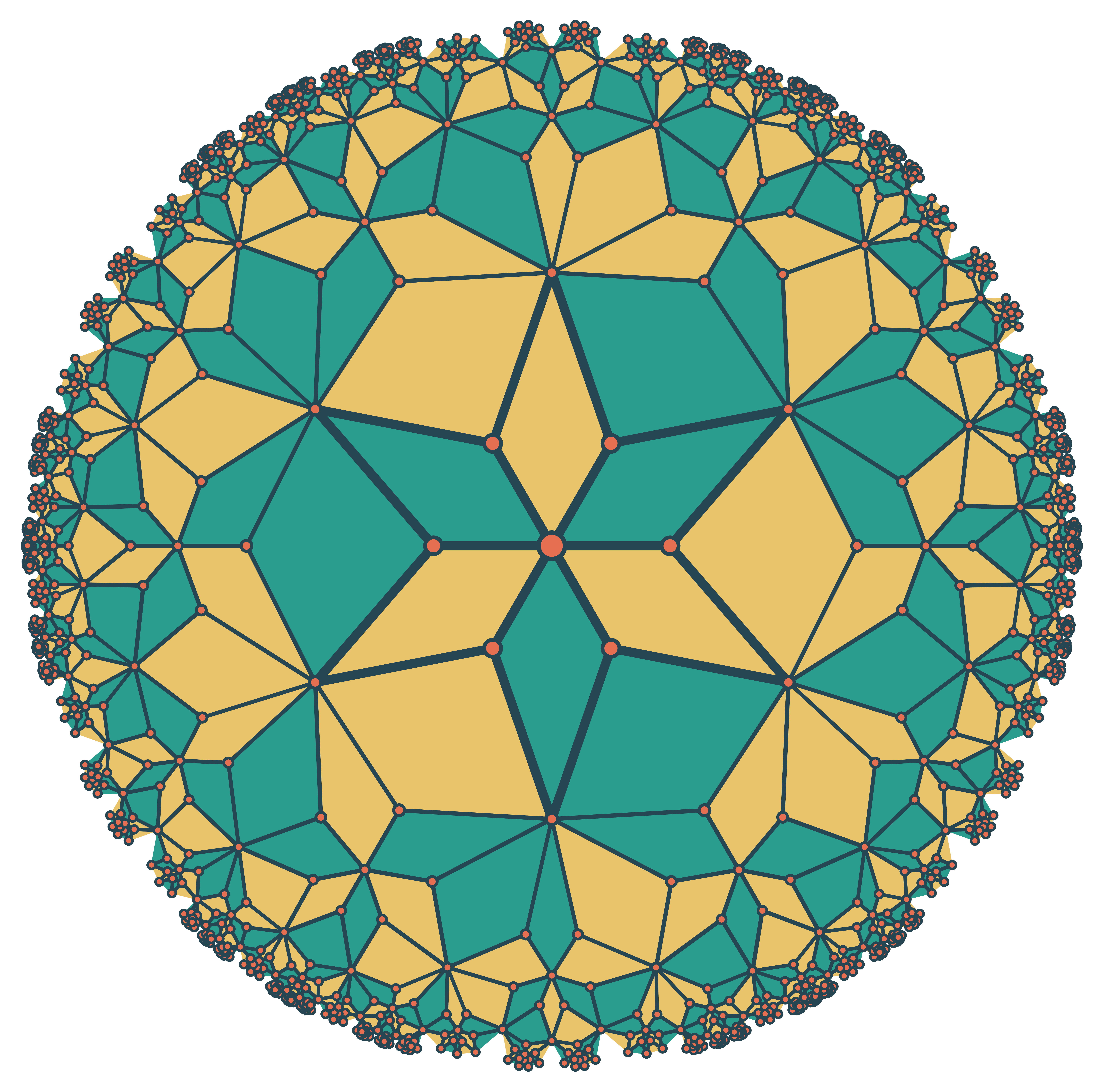}
       \caption{$\{6,4\}$ lotus lattice of the second kind. }
    \end{subfigure}
    \begin{subfigure}{0.24\linewidth}
       \includegraphics[width=\linewidth,trim = 12 12 12 12,clip]{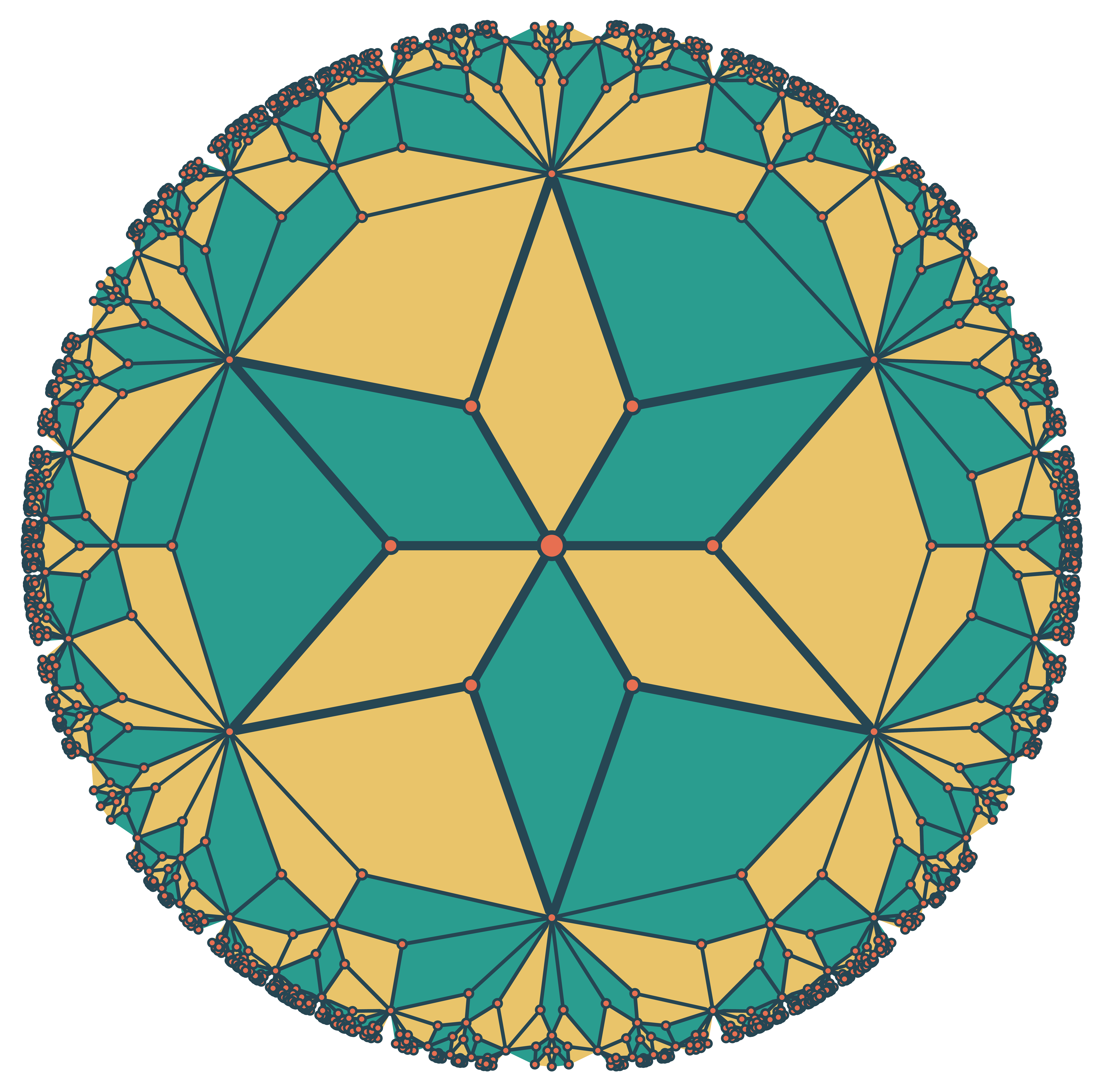} 
       \caption{$\{6,6\}$ lotus lattice of the second kind. }
    \end{subfigure}
    \caption{Again, plaquettes above are colored according to the sign of the phase wound around the boundary of the plaquette. In (a), we have drawn a unit cell boundary in blue, and arrows according to a particular choice of gauge consistent with the coloring presented. }\label{fig:secondkind}
\end{figure}

The dice lattice is a rich physical model. It is known to have a number of interesting properties.
Among these are a flat band structure when edges carry the right complex weights \cite{debnath2024magnonsdicelatticetopological}, Majorana corner modes \cite{majorana}, nontrivial band topology with the appropriate next--nearest--neighbor hopping terms \cite{dicehaldane}. 
Here, we present infinitely many generalizations of the dice lattice that we expect to be similarly rich. 

Consider a regular $2n$--gon. Place vertices at every corner and at the center of the $2n$--gon. Once again, connect a $p$--shrub from the central vertex to each of the corner vertices, and identify a vertex of degree two in one $p$--shrub with a vertex of degree two in the next $p$--shrub clockwise. 
For simplicity, we will restrict our attention to $p = 2$. 
To see the graphs of which we speak, consider Figure \ref{fig:lotus}(e) but remove every vertex of degree $2$ from the drawings therein. 

The corresponding ``stars" can be tiled as regular polyhedra in any $\{2n, p\}$ tiling of space. Of course, not all of these tilings lead to a lattice that can be shown to have only compact localized states without further caveats. We restrict our attention to only $\{2n,2p\}$ tilings of space. 
We name the family of lattices constructed in this way \textit{lotus lattices of the second kind}. 
We have pictured a few of them in Figure \ref{fig:secondkind}. 
It is once again very straightforward to check that Theorem \ref{thm:main} applies to all of the lattices pictured in a suitable gauge. 

Of particular interest to us, however, is the Euclidian lattice in Figure \ref{fig:secondkind} (a), which we have been unable to find in existing literature. 
It has a simple complex weighted adjacency matrix, which gives rise to the Bloch Hamiltonian 
\begin{equation}\label{eqn:new}
H = 
    \begin{pmatrix}
    0 & e^{-\mathrm i k_x} + \omega & 0 & e^{-\mathrm i (k_x + k_y) } + \omega^*  e^{-\mathrm i k_x} & 1 + e^{-\mathrm i k_y } \omega^* & e^{-\mathrm i  k_y} + e^{-\mathrm i (k_x + k_y) } \omega \\ 
    e^{\mathrm i k_x} + \omega^* & 0 & \omega & 0 & 0 & 0 \\
    0 & \omega^* & 0 & 1& 1& \omega^* \\ 
    e^{\mathrm i (k_x + k_y)} + e^{\mathrm i k_x}\omega & 0 & 1& 0 & 0& 0 \\
    1 + e^{\mathrm i k_y} \omega & 0 & 1& 0& 0& 0 \\
    e^{\mathrm i k_y} + e^{\mathrm i (k_x + k_y)} \omega^* & 0 & \omega & 0 & 0 & 0
    \end{pmatrix}
    \end{equation}
$H$ in (\ref{eqn:new}) has a small enough unit cell to be studied in near term quantum simulation experiments, and may have interesting physical properties. 

Why have we restricted to only $\{2n, 2q\}$ lotus lattices of the second kind? 
The reason for this is that a regular $2n$--gons we use for tiling contain $2n$ $p$--shrubs and $n$ halves of a $p$--shrub. In order to admit a choice of gauge that is consistent with desired translation symmetries, the total flux piercing the $2n$--gon must vanish, and the half--shrubs on opposite sides of the $2n$--gon must be pierced by the same flux. From this it follows that one cannot find a gauge in general that is compatible with arbitrary tilings of any $2n+1$--gon. To see that odd--degree tilings give a similar problem, observe that additional constraints on the coloring of plaquettes are imposed by the smallest possible composition of translation symmetries that compose to the identity. 

We have restricted the second argument to be even for simplicity. 
$\{2n,p\}$ lotus lattices of the second kind can be shown to have only compact localized states at shrub flat points even if there does not exist a choice of phase consistent with the periodicity of the adjacency matrix of the graph in question. 
We defer a thorough treatment of such structures to future work on the grounds that it would be inconvenient to introduce the technology required to study such lattices, which are intrinsically quasicrystalline. 
We remark in passing that $\{4n, p\}$ lotus lattices of the second kind \emph{do} generally admit colorings that are consistent with any number of translation symmetries. 
We do not claim to offer an exhaustive list of lotus lattices that have only compact localized states at flat points of constituent shrubs.

\section{Conclusions}
In this paper, we have introduced a broadly applicable framework for the construction and analysis of lattices that, at the level of noninteracting single particle dynamics, have only compact localized eigenstates (and flat bands) induced by Aharonov--Bohm caging. 
For every countable infinite graph, we have introduced one such model for every string of positive integers of finite length corresponding to the replacement of edges with glued trees. 
We have also introduced a procedure whereby tessellations glued trees in space give rise to models with the same exotic properties, almost all of which are noneuclidian lattices. 
In this paper, we do not claim to have enumerated all lattices with Aharonov--Bohm induced locality but it would be interesting to consider whether some such classification is possible. 
Another question of interest is whether there is some connection between localization in this context and localization induced by, say, disorder. 

This paper does not focus on the physical properties of the models we introduce beyond perfect localization, 
but it would be interesting to study (for example) ground state phase diagrams of the lattices we have studied in the presence of on--site interactions or with additional hopping terms, where one might reasonably expect topological bands or various exotic edge states to emerge. 
Physically motivated study along these lines is of particular interest on the grounds that some of the lattices we introduce may be simulable in near--term cold atom or superconducting circuit experiments. 

\section{Data Availability}
The authors declare that the data supporting the findings of this study are available upon reasonable request to the corresponding author.

\section{Acknowledgments}

This work was supported by the National Science Foundation Quantum Leap Challenge Institute for Robust Quantum Simulation (2120757) and the Co-design Center for Quantum Advantage (C$^2$QA) under Contract No. DESC0012704.
We acknowledge useful and interesting conversations with Sarang Gopalakrishnan, Andrew Lucas, Basil Smitham, and Jeronimo Martinez. 

Princeton University Professor Andrew Houck is also
a consultant for Quantum Circuits Incorporated (QCI).
Due to his income from QCI, Princeton University has
a management plan in place to mitigate a potential conflict of interest that could affect the design, conduct and
reporting of this research.

\begin{appendix}
    \section{Glued tree spectra from characteristic polynomial recurrence relations}\label{app:chrpolyrec}
In this appendix we discuss one of two possible ways to derive the eigenvalues of a glued tree. We emphasize that we do not consider lattices of glued trees until Appendix \ref{app:chrpolyafb}.
\begin{lem}\label{lemma:huge}
    Let $X = \{x_1, x_2, \dots, x_d\}$ be a set of positive integers with $x_d > 1$. 
    Define $X_i = \{x_1, x_2, \dots, x_i\}$ for $i \leq n$. 
    As a shorthand, write $Y_i = A^{X_i}(\Phi)$ to denote the CCAM of the glued tree grown by $X_i$. 
    We have a number of identities for each $1 \leq i \leq d$:
    \begin{equation}\label{eqn:claims}
        \begin{aligned}
            C(Y_i;\lambda) \langle f_{Y_i} |  R(Y_i;\lambda)| f_{Y_i}\rangle &=  -C(Y_{i-1};\lambda)^{x_i}(\lambda + x_i \langle l_{Y_{i-1}} |  R(Y_{i-1};\lambda)| l _{Y_{i-1}}\rangle) \\ 
            C(Y_i;\lambda) \langle l_{Y_i} | R (Y_i;\lambda)| l_{Y_i}\rangle &=  -C(Y_{i-1};\lambda)^{x_i}(\lambda + x_i \langle f_{Y_{i-1}} |  R(Y_{i-1};\lambda)| f _{Y_{i-1}}\rangle) \\ 
             C(Y_i;\lambda)\langle l_{Y_i} |   R(Y_i;\lambda)| f_{Y_i}\rangle &= \sigma C(Y_{i-1};\lambda)^{x_i} \langle f_{Y_{i-1}} |  R(Y_{i-1};\lambda)| l_{Y_{i-1}}\rangle  (\langle \omega_i|)^*| \omega_i\rangle
        \end{aligned}
    \end{equation}
    with $\sigma \in \{-1,1\}$.
\end{lem}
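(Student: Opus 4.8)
The plan is to induct on $i$, exploiting the block structure of the canonical gauge in (\ref{eqn:cgrowth}). Recall $Y_i = A^{X_i}(\Phi)$, and work first with $\lambda$ satisfying $\mathrm{Im}\,\lambda \neq 0$, so that every $Y_j - \lambda$ is invertible (the $Y_j$ are Hermitian). I would treat the two outer vertices $f_{Y_i}$ and $l_{Y_i}$ as a two--dimensional ``frame'' and the middle block $B_i := \mathbb I_{x_i}\otimes Y_{i-1}$ as a ``bath,'' so that
\[
 Y_i - \lambda \;=\; \begin{pmatrix} -\lambda\,\mathbb I_2 & V_i \\ V_i^\dagger & B_i - \lambda \end{pmatrix},
\]
where $V_i$, the frame--bath coupling, is read off from (\ref{eqn:cgrowth}): its two rows are $\langle\omega_i|^*\otimes\langle f_{Y_{i-1}}|$ and $\langle\omega_i|\otimes\langle l_{Y_{i-1}}|$. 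The case $i=1$ is just the instance $Y_0 = 0$ (the $1$--by--$1$ zero matrix, with $f_{Y_0}=l_{Y_0}$ and $R(Y_0;\lambda)=-1/\lambda$) of this same recursion, so no separate base case is needed.

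The first step is to compute the $2$--by--$2$ Schur complement onto the frame, $S_i := -\lambda\,\mathbb I_2 - V_i (B_i - \lambda)^{-1} V_i^\dagger$. Here $(B_i - \lambda)^{-1} = \mathbb I_{x_i}\otimes R(Y_{i-1};\lambda)$, and multiplicativity of the Kronecker product collapses the $x_i$ identical copies: the diagonal entries of $V_i (B_i-\lambda)^{-1} V_i^\dagger$ come out as $\langle\omega_i|\omega_i\rangle\,\langle f_{Y_{i-1}}|R(Y_{i-1};\lambda)|f_{Y_{i-1}}\rangle$ and $\langle\omega_i|\omega_i\rangle\,\langle l_{Y_{i-1}}|R(Y_{i-1};\lambda)|l_{Y_{i-1}}\rangle$, while the off--diagonal entry is $\big((\langle\omega_i|)^*|\omega_i\rangle\big)\,\langle f_{Y_{i-1}}|R(Y_{i-1};\lambda)|l_{Y_{i-1}}\rangle$ together with its mirror. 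The two inputs are: the components of $|\omega_i\rangle$ have unit modulus, so $\langle\omega_i|\omega_i\rangle = x_i$ (this is where the $x_i$'s in the first two identities originate), whereas passing from $f$ to $l$ through a single copy carries the weight $(\langle\omega_i|)^*|\omega_i\rangle = \sum_j(\omega_i)_j^2$ — the quantity governed by Lemma \ref{lem:zero} — which is therefore what enters the third identity.

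Next I would translate this into characteristic polynomials and the resolvent. The Schur determinant identity gives $C(Y_i;\lambda) = \det(B_i - \lambda)\,\det S_i = C(Y_{i-1};\lambda)^{x_i}\det S_i$, using $\det(\mathbb I_{x_i}\otimes M) = (\det M)^{x_i}$, while the frame--frame block of $R(Y_i;\lambda) = (Y_i - \lambda)^{-1}$ equals $S_i^{-1}$. Multiplying, the frame--frame block of $C(Y_i;\lambda)\,R(Y_i;\lambda)$ is $C(Y_{i-1};\lambda)^{x_i}(\det S_i)\,S_i^{-1} = C(Y_{i-1};\lambda)^{x_i}\, S_i^{\mathrm A}$, and since $S_i$ is $2$--by--$2$ its adjugate merely interchanges the diagonal entries and negates the off--diagonal ones. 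Reading off the $(f,f)$, $(l,l)$ and $(l,f)$ components of this relation and inserting the entries of $S_i$ computed above yields exactly the three formulas in (\ref{eqn:claims}). Equivalently — the variant that dovetails with the proof of Lemma \ref{lem:travel} — one can run the whole argument at the level of adjugates, applying the matrix determinant lemma to the relevant cofactors of $Y_i - \lambda$; this is manifestly polynomial in $\lambda$ but costs the extra cofactor signs $(-1)^{a+b}$.

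I expect the only real friction to be the sign and conjugation bookkeeping in the cross term. Carried out naively, the off--diagonal entry of $S_i^{\mathrm A}$ produces $\overline{(\langle\omega_i|)^*|\omega_i\rangle}$ times $\langle l_{Y_{i-1}}|R(Y_{i-1};\lambda)|f_{Y_{i-1}}\rangle$ rather than the exact form in (\ref{eqn:claims}); these are reconciled by two observations I would make explicit. First, $(\langle\omega_i|)^*|\omega_i\rangle = \sum_j \exp\!\big[2\mathrm i(1 - 2(j-1)/(x_i-1))\omega_i\big]$ is \emph{real}, since the exponents are symmetric about $0$ as $j$ runs over $1,\dots,x_i$; second, it then follows by the recursion itself (with base case $\langle f_{Y_1}|R(Y_1;\lambda)|l_{Y_1}\rangle$) that $\langle f_{Y_i}|R(Y_i;\lambda)|l_{Y_i}\rangle$ is real for real $\lambda$ in the resolvent set, hence equal to $\langle l_{Y_i}|R(Y_i;\lambda)|f_{Y_i}\rangle$. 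The residual $\pm$ — which is $+1$ on the clean Schur route above — is precisely the $\sigma\in\{-1,1\}$ left undetermined in the statement. Finally, once the identities hold for $\mathrm{Im}\,\lambda\neq 0$, the cleared--denominator forms in (\ref{eqn:claims}) are identities between polynomials in $\lambda$ (matrix elements of $(Y_i-\lambda)^{\mathrm A}$), so they extend to all $\lambda$, in particular throughout the resolvent set.
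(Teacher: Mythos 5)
Your proof is correct, and it arrives at the three identities by a recognizably different packaging than the paper's. The paper works entirely at the level of the adjugate: each matrix element of $\Lambda_i^{\mathrm A}$ is expanded as a cofactor, and the bordered--determinant (matrix determinant) lemma is applied one root vertex at a time, with the off--diagonal cofactor producing a dimension--dependent permutation sign that is simply absorbed into $\sigma$. You instead fold both roots into a single $2\times 2$ Schur complement $S_i$ and read all three identities simultaneously off $C(Y_i;\lambda)\,[R(Y_i;\lambda)]_{\text{frame}} = C(Y_{i-1};\lambda)^{x_i} S_i^{\mathrm A}$. The underlying algebra is equivalent, but your route buys two things the paper leaves implicit: it pins $\sigma$ down to $+1$ with no cofactor--parity bookkeeping, and it forces the conjugation issue in the cross term into the open — the raw computation yields $\overline{(\langle\omega_i|)^*|\omega_i\rangle}\,\langle l_{Y_{i-1}}|R(Y_{i-1};\lambda)|f_{Y_{i-1}}\rangle$ rather than the stated form — which you correctly resolve by observing that $(\langle\omega_i|)^*|\omega_i\rangle$ is real (the exponents are symmetric about zero) and that $\langle f|R|l\rangle = \langle l|R|f\rangle$ at real $\lambda$ by Hermiticity together with an induction down the recursion; the paper's cofactor calculation glosses over this point. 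The price is that you must first work where $Y_i-\lambda$ is invertible and then extend by noting that the cleared forms are polynomial identities in $\lambda$, whereas the adjugate formulation is polynomial from the outset; your closing paragraph handles this correctly. As a bonus, expanding $\det S_i$ in your framework immediately reproduces Lemma \ref{lemma:big}, so the Schur--complement presentation unifies Lemmas \ref{lemma:huge} and \ref{lemma:big} into one computation.
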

\begin{proof}
The identities above are most naturally shown by manipulating the adjugate of $\Lambda_i := Y_i - \lambda$.
Note the identities
    \begin{equation}
        \begin{aligned}
           \text{det}  \begin{pmatrix}
               A & |x \rangle  \\
               \langle y |  & \alpha
           \end{pmatrix} = \alpha \det(A) - \langle y |  A^{\mathrm A} | x\rangle, \\ 
           \begin{pmatrix}
             A & 0 \\
             0 & B 
           \end{pmatrix}^{\mathrm A} = 
           \begin{pmatrix}
               \text{det}(B) A^{\mathrm A} & 0 \\ 
               0 & \text{det}(A) B^{\mathrm A}
           \end{pmatrix}.
        \end{aligned}
    \end{equation}
    From this point, we proceed by direct calculation. 
    Explicitly, 
    \begin{equation}\label{eqn:diag}
    \begin{aligned}
        \langle f_{Y_i} | \Lambda_i^{\mathrm A}|f_{Y_i} \rangle &= \det \begin{pmatrix}
            \mathbb I_{x_i} \otimes \Lambda_{i-1} &  | \omega_i \rangle \otimes | l _{Y_{i-1}} \rangle  \\ 
            \langle \omega_i | \otimes \langle l _{Y_{i-1}} |& -\lambda 
        \end{pmatrix} \\
        &= -\lambda \det \left[\mathbb I_{x_i} \otimes \Lambda_{i-1}\right] - \bra{\omega_i} \otimes \bra{l_{Y_{i-1}}} \left[\mathbb I_{x_i} \otimes \Lambda_{i-1}\right]^{\mathrm A} \ket{\omega_i}\otimes \ket{l_{Y_{i-1}}} \\ 
        &= -\lambda C(Y_{i-1};\lambda)^{x_i} - C(Y_{i-1};\lambda)^{x_i -1} \langle \omega_i| \omega_i \rangle \langle l_{Y_{i-1}} | \Lambda_{i-1}^{\mathrm A} | l_{Y_{i-1}}\rangle.
        \end{aligned}
    \end{equation}
    Now, $\langle \omega_i | \omega_i \rangle = x_i$, and 
    \begin{equation}
       \Lambda_{i-1} ^{\mathrm A} = (Y_{i-1} - \lambda )^{\mathrm A} = C(Y_{i-1} ;\lambda) R(Y_{i-1};\lambda)
    \end{equation}
    so the first line of (\ref{eqn:claims}) is established. 
    The second line of (\ref{eqn:claims}) follows analogously. 

    As for the third line of (\ref{eqn:claims}), we have 
    \begin{equation}\label{eqn:offdiag}
    \begin{aligned}
        \bra{l_{Y_i}}\Lambda_i^{\mathrm A}\ket{f_{Y_i}} &= \det \begin{pmatrix}
            \bra{\omega_i}^*\otimes \bra{f_{Y_{i-1}}} & 0  \\
            \mathbb I_{x_i }\otimes \Lambda_{i-1} & \ket{\omega_i}\otimes \ket{l_{Y_{i-1}}} \\ 
        \end{pmatrix} \\
        &=  
         \sigma\det \begin{pmatrix}
            \mathbb I_{x_i}\otimes \Lambda_{i-1} & \ket{\omega_i}\otimes \ket{l_{Y_{i-1}}} \\ 
            \bra{\omega_i}^*\otimes \bra{f_{Y_{i-1}}} & 0  \\
        \end{pmatrix} \\ 
        &= \sigma  C(Y_{i-1};\lambda)^{x_i-1} \bra{f_{Y_{i-1}}}\Lambda_{i-1}^{\mathrm A} \ket{l_{Y_{i-1}}} (\langle \omega_i|^*) | \omega_i \rangle
        \end{aligned}
    \end{equation}
    with $\sigma = 1$ if $Y_i$ is is an even--dimensional matrix and $\sigma = -1$ otherwise.
    \end{proof}

    Now, we set about establishing that (\ref{eqn:claims}) are significantly simplified by a particular symmetry. 
\begin{defn}[Exchange matrix]\label{defn:exchangematrix}

    Let $N $ be a positive integer and $B$ be an $N$ by $N$ matrix. 
    Define 
    \begin{equation}
        J_B = \sum_{i = 1}^N |i \rangle \langle N - i | 
    \end{equation}
    which we call the \textbf{exchange matrix}.
\end{defn}
We have chosen to subscript the exchange matrix with yet another matrix for the sake of emphasizing that $J_X$ is compatible with $X$ in the sense that they may be multiplied together. This choice is convenient because of various recurrence relations involving matrices of different sizes.  
If $A$ and $B$ are square matrices, note that  
\begin{equation}
    J_{A \otimes B} = J_A \otimes J_B. 
\end{equation}
\begin{cor}\label{cor:symm}
Adopt the hypothesis of lemma \ref{lemma:huge}. 
The following results hold
\begin{enumerate}
    \item If $[Y_i, J_{Y_i}] = 0$, then $\langle l_{Y_i} | \Lambda_{i}^{\mathrm A} | l_{Y_i} \rangle = \langle f_{Y_i} | \Lambda_i^{\mathrm A} |f_{Y_i}\rangle$ and $\langle l_{Y_i} | \Lambda_i^{\mathrm A} | f_{Y_i} \rangle = \langle f_{Y_i} | \Lambda_i^{\mathrm A} | l_{Y_i} \rangle$.
    \item If $[Y_{i-1}, J_{Y_{i-1}}] = 0$ and $ J_{|\omega_i \rangle \langle \omega_i | }|\omega_i \rangle = |\omega_i\rangle$, then $[Y_i, J_{Y_i}] = 0$. 
\end{enumerate}
\end{cor}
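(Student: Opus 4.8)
The plan is to handle the two claims separately, reducing both to elementary properties of the exchange matrix. Throughout I will use that $J_B$ is a real symmetric permutation, so $J_B = J_B^\dagger = J_B^{-1}$ and $J_B^2 = \mathbb{I}$; that it reverses the standard basis, so $J_B|f_B\rangle = |l_B\rangle$ and $J_B|l_B\rangle = |f_B\rangle$; and the Kronecker identity $J_{A\otimes B} = J_A\otimes J_B$ recorded just above the statement. The first claim will be a short algebraic manipulation; the second will be a blockwise computation against the canonical-gauge recursion (\ref{eqn:cgrowth}), and that is where the only real work lies.

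For the first claim, note that $\Lambda_i = Y_i - \lambda\mathbb{I}$ commutes with $J_{Y_i}$ whenever $Y_i$ does, since $J_{Y_i}$ commutes with $\lambda\mathbb{I}$. By the Cayley--Hamilton theorem $\Lambda_i^{\mathrm A}$ is a polynomial in $\Lambda_i$ (with coefficients read off from $C(Y_i;\lambda)$), so $[\,\Lambda_i^{\mathrm A}, J_{Y_i}\,] = 0$ as well; no invertibility or analytic-continuation argument is needed here. The two identities then follow by inserting $J_{Y_i}^2 = \mathbb{I}$ and using $|f_{Y_i}\rangle = J_{Y_i}|l_{Y_i}\rangle$: one gets $\langle f_{Y_i}|\Lambda_i^{\mathrm A}|f_{Y_i}\rangle = \langle l_{Y_i}| J_{Y_i}\Lambda_i^{\mathrm A} J_{Y_i} |l_{Y_i}\rangle = \langle l_{Y_i}| \Lambda_i^{\mathrm A} J_{Y_i}^2 |l_{Y_i}\rangle = \langle l_{Y_i}| \Lambda_i^{\mathrm A} |l_{Y_i}\rangle$, and likewise $\langle l_{Y_i}| \Lambda_i^{\mathrm A} | f_{Y_i}\rangle = \langle f_{Y_i}| J_{Y_i}\Lambda_i^{\mathrm A} | f_{Y_i}\rangle = \langle f_{Y_i}| \Lambda_i^{\mathrm A} J_{Y_i} | f_{Y_i}\rangle = \langle f_{Y_i}| \Lambda_i^{\mathrm A} | l_{Y_i}\rangle$. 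This part is routine.

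For the second claim the plan is to multiply out $Y_i J_{Y_i}$ and $J_{Y_i} Y_i$ in the three-block partition of $Y_i$ from (\ref{eqn:cgrowth}): the outer vertex attached to all the copies of $|f_{Y_{i-1}}\rangle$, the $x_i$ copies of the $Y_{i-1}$-block, and the outer vertex attached to all the copies of $|l_{Y_{i-1}}\rangle$. In this partition $J_{Y_i}$ is anti-block-diagonal: it swaps the two outer vertices and acts on the middle block as $J_{x_i}\otimes J_{Y_{i-1}}$, where $J_{x_i} := J_{\mathbb{I}_{x_i}} = J_{|\omega_i\rangle\langle\omega_i|}$, using $J_{A\otimes B} = J_A\otimes J_B$. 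Carrying out the blockwise products: the $(1,1),(3,3),(1,3),(3,1)$ blocks are zero on both sides; the $(2,2)$ block difference is $J_{x_i}\otimes[\,Y_{i-1},J_{Y_{i-1}}\,]$, which vanishes by the first hypothesis; and the four blocks coupling an outer vertex to the middle block reduce, after cancelling the universal factor $J_{Y_{i-1}}|f_{Y_{i-1}}\rangle = |l_{Y_{i-1}}\rangle$ (and its mirror), to identities of the shape $J_{x_i}|\omega_i\rangle = |\omega_i\rangle^{*}$ and $\langle\omega_i|^{*}J_{x_i} = \langle\omega_i|$. These follow from the second hypothesis $J_{|\omega_i\rangle\langle\omega_i|}|\omega_i\rangle = |\omega_i\rangle$ together with the explicit form of the canonical phase vector in Definition \ref{defn:canonicalphases}: reversing the entries of $|\omega_i\rangle$ conjugates them, so the hypothesis in fact forces $|\omega_i\rangle$ to be real, after which all the required identities hold simultaneously. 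Reassembling the blocks gives $[Y_i, J_{Y_i}] = 0$.

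I expect the only genuine obstacle to be the conjugation bookkeeping in this last step: the recursion (\ref{eqn:cgrowth}) deliberately threads $|\omega_i\rangle$ into the $(2,3)$ and $(3,2)$ blocks but $|\omega_i\rangle^{*}$ into the $(1,2)$ and $(2,1)$ blocks, so one must keep careful track of which side of $J_{Y_i}$ supplies the reversal and check that it carries the ``$f$-side'' coupling vector to the ``$l$-side'' coupling vector with exactly the right conjugation. Everything else is mechanical; in particular, once the base case $[Y_1, J_{Y_1}] = 0$ is verified by the same blockwise identity, the second claim propagates the symmetry up the growth relation to $A^X(\Phi)$, and then the first claim symmetrises the diagonal and off-diagonal resolvent matrix elements entering Lemma \ref{lemma:huge}.
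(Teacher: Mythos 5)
Your proof is correct and follows essentially the same route as the paper: conjugating $\Lambda_i^{\mathrm A}$ by $J_{Y_i}$ for the first claim, and a blockwise comparison against the growth relation (\ref{eqn:cgrowth}) for the second. Two of your choices improve on the paper's version: invoking Cayley--Hamilton to see that $\Lambda_i^{\mathrm A}$ is a polynomial in $\Lambda_i$ is cleaner than the paper's argument from $(J_{Y_i}\Lambda_i^{\mathrm A}J_{Y_i})\Lambda_i = C(Y_i;\lambda)\mathbb I$, which tacitly requires $\Lambda_i$ to be invertible and hence a genericity/continuation step. More substantively, you are right that the blockwise comparison actually demands $J_{\mathbb I_{x_i}}|\omega_i\rangle = |\omega_i\rangle^{*}$ rather than the literal hypothesis $J|\omega_i\rangle = |\omega_i\rangle$; canonical phase vectors satisfy the former at every $\Phi$ but the latter only when real, so your observation simultaneously repairs the misplaced conjugates in the paper's displayed $J_{Y_i}Y_iJ_{Y_i}$ and explains why the corollary, as applied to the canonical gauge, actually goes through.
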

\begin{proof}
Observe that 
\begin{equation}
    J_{Y_i} = \begin{pmatrix}
        0 & 0 & 1 \\
        0 & J_{\mathbb I_{x_i}}\otimes  J_{Y_{i-1}} & 0 \\
        1 & 0 & 0  \\ 
    \end{pmatrix}.
\end{equation}
Then, 

\begin{equation}
    J_{Y_i} Y_i J_{Y_i} = 
    \begin{pmatrix}
        0 & ( \langle \omega_i | J_{\mathbb I_{x_i}} )^*\otimes \langle f_{Y_{i-1}} |  & 0 \\
       J_{\mathbb{I}_{x_i}  }|\omega_i\rangle ^* \otimes| f_{Y_{i-1}}\rangle & \mathbb I_{x_i} \otimes  J_{Y_{i-1}} Y_{i-1} J_{Y_{i-1}} & J_{\mathbb{I}_{x_i} }|\omega_i\rangle \otimes | l_{Y_{i-1}} \rangle \\
        0 &\langle \omega_i | J_{\mathbb I_{x_i }} \otimes \langle l_{Y_{i-1}} |.
    \end{pmatrix}
\end{equation}
Thus, the second point holds. 

    Note that any matrix commutes with its adjugate. 
    Suppose $Y_i$ commutes with $R_{Y_i}$. Then,   $J_{Y_i}$ also commutes with $\Lambda_i$ since every matrix commutes with the identity. 
    Further, 
    \begin{equation}
        J_{Y_i} \Lambda_i J_{Y_i} \Lambda_i^{\mathrm A} = \Lambda_i^{\mathrm A} J_{Y_i} \Lambda_i J_{Y_i}  = C(Y_i;\lambda) \mathbb I.   
    \end{equation}
    Since $R_{Y_i}$ is an involution, we then have that 
    \begin{equation}
        [J_{Y_i} \Lambda_i^{\mathrm A} J_{Y_i}, \Lambda_i] = 0, 
    \end{equation}
    and finally  
    \begin{equation}
        (J_{Y_i} \Lambda_i^{\mathrm A} J_{Y_i} ) \Lambda_i = C(Y_i;\lambda) \mathbb I.
    \end{equation}
    Therefore $J_{Y_i} \Lambda_i^{\mathrm A} J_{Y_i}$ is the adjugate of $\Lambda_i$. In other words, 
    \begin{equation}\label{eqn:adjcon}
        J_{Y_i} \Lambda_i^{\mathrm A} J_{Y_i} = \Lambda_i^{\mathrm A}.
    \end{equation}
    The rest of the result follows from (\ref{eqn:adjcon}) together with the observation that  $J_X |f_X \rangle = |l_X\rangle$.
\end{proof}
It is straightforward to verify that the choice of gauge espoused in Section \ref{sec:gauge} implies that our glued tree indeed commutes with the exchange matrix of the appropriate size. 
\begin{lem}\label{lemma:big}
Adopt the hypothesis and notation of Lemma \ref{lemma:huge}. 
    Then, 
    \begin{equation}\label{eqn:g}
    \begin{aligned}
        C(Y_i;\lambda) C(Y_{i-1};\lambda)^{x_i} = C(Y_{i-1};\lambda)^2 \left[\langle f_{Y_{i-1}}| R(Y_{i-1};\lambda) |f_{Y_{i-1}}\rangle \langle l_{Y_{i-1}} | R(Y_{i-1};\lambda)|l_{Y_{i-1}} \rangle \right. \\
        \left.- \langle f_{Y_{i-1}} | R(Y_{i-1};\lambda) | l_{Y_{i-1}} \rangle \langle l_{Y_{i-1}} |  R(Y_{i-1};\lambda) | f_{Y_{i-1}}\rangle\right].
        \end{aligned}
    \end{equation}
\end{lem}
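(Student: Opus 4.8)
The plan is to eliminate the characteristic polynomial $C(Y_i;\lambda)$ from the three identities in Lemma~\ref{lemma:huge} by multiplying and combining them appropriately, so that all that remains are relations between $C(Y_{i-1};\lambda)$ and matrix elements of the resolvent $R(Y_{i-1};\lambda)$. First I would take the product of the first two lines of (\ref{eqn:claims}), which gives $C(Y_i;\lambda)^2$ times a product of two ``corner-to-corner'' resolvent elements on the left, and $C(Y_{i-1};\lambda)^{2x_i}$ times a product of two factors of the form $\bigl(\lambda + x_i \langle \,\cdot\, | R(Y_{i-1};\lambda) | \,\cdot\, \rangle\bigr)$ on the right. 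Separately I would square the third line of (\ref{eqn:claims}), which yields $C(Y_i;\lambda)^2$ times the product $\langle l_{Y_i}|R(Y_i;\lambda)|f_{Y_i}\rangle\langle f_{Y_i}|R(Y_i;\lambda)|l_{Y_i}\rangle$ on the left (using that $\sigma^2=1$), and $C(Y_{i-1};\lambda)^{2x_i}$ times $\bigl(\langle\omega_i|^*|\omega_i\rangle\bigr)^2$ times the analogous product of off-diagonal corner elements of $R(Y_{i-1};\lambda)$ on the right.

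The next step is to subtract these two relations so that the common factor $C(Y_i;\lambda)^2$ on the left multiplies precisely the bracketed combination
\begin{equation*}
\langle f_{Y_i}|R|f_{Y_i}\rangle\langle l_{Y_i}|R|l_{Y_i}\rangle - \langle f_{Y_i}|R|l_{Y_i}\rangle\langle l_{Y_i}|R|f_{Y_i}\rangle ,
\end{equation*}
which is exactly a $2\times2$ minor of the resolvent matrix restricted to the two corner vertices. By the adjugate identity $(Y_i-\lambda)^{\mathrm A} = C(Y_i;\lambda) R(Y_i;\lambda)$ recalled in Section~\ref{sec:confine}, this $2\times2$ minor of $R(Y_i;\lambda)$ is $C(Y_i;\lambda)^{-2}$ times the corresponding $2\times2$ minor of the adjugate $\Lambda_i^{\mathrm A}$, and Jacobi's identity (the minors-of-the-adjugate / Desnanot--Jacobi relation) identifies that minor of $\Lambda_i^{\mathrm A}$ with $C(Y_i;\lambda)$ times the complementary minor of $\Lambda_i$ itself. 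The upshot is that the left-hand side collapses to a single power $C(Y_i;\lambda)^1$ times a cofactor-type object, which is what produces the asymmetry between the exponent on $C(Y_i;\lambda)$ and on $C(Y_{i-1};\lambda)$ in the claimed identity (\ref{eqn:g}). On the right-hand side, after dividing out $C(Y_{i-1};\lambda)^{2x_i - 2}$ and using Lemma~\ref{lem:zero} together with $\langle\omega_i|\omega_i\rangle = x_i$ to handle the $\bigl(\langle\omega_i|^*|\omega_i\rangle\bigr)^2$ and $(\lambda + x_i\langle\cdot|R|\cdot\rangle)$ terms, the algebra should reorganize into $C(Y_{i-1};\lambda)^2$ times exactly the bracketed $2\times2$ resolvent minor of $Y_{i-1}$ appearing in (\ref{eqn:g}); here Corollary~\ref{cor:symm}, which gives $\langle f|R|f\rangle = \langle l|R|l\rangle$ and $\langle f|R|l\rangle = \langle l|R|f\rangle$ in the canonical gauge, is used to symmetrize and to make the cross terms cancel.

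The step I expect to be the main obstacle is the bookkeeping that turns the left-hand side from $C(Y_i;\lambda)^2$ times a resolvent minor into $C(Y_i;\lambda)$ times the polynomial-level quantity $C(Y_{i-1};\lambda)^{x_i}$ — i.e. correctly invoking the Jacobi identity for minors of the adjugate and tracking the sign $\sigma$ (which depends on the parity of $\dim Y_i$) through the squaring so that it genuinely disappears. A secondary nuisance is keeping the powers of $C(Y_{i-1};\lambda)$ straight: the raw product of identities carries $C(Y_{i-1};\lambda)^{2x_i}$, and one must verify that exactly $C(Y_{i-1};\lambda)^{2x_i-2}$ cancels against factors hidden inside the $\langle\cdot|R(Y_{i-1};\lambda)|\cdot\rangle$ terms (each of which is $C(Y_{i-1};\lambda)^{-1}$ times a cofactor), leaving the clean $C(Y_{i-1};\lambda)^2$ prefactor on the right of (\ref{eqn:g}). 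Everything else is direct substitution and the symmetry input from Corollary~\ref{cor:symm}, which I would record as a short computation rather than belabor. As a sanity check I would verify (\ref{eqn:g}) by hand in the base case $i=1$, where $Y_0 = 0$ and the resolvent elements are explicit, and confirm it reproduces the known shrub characteristic polynomial.
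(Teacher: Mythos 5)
Your overall strategy is genuinely different from the paper's, and the first half of it is sound. The paper proves the lemma ``forwards'': it permutes the two corner rows and columns of $\Lambda_i$ to the front, takes the block determinant over $\mathbb I_{x_i}\otimes\Lambda_{i-1}$, and recognizes the entries of the resulting $2\times 2$ Schur--complement determinant from (\ref{eqn:diag}) and (\ref{eqn:offdiag}). You instead work ``backwards'' from the already-established identities of Lemma \ref{lemma:huge} and collapse the left-hand side with the Desnanot--Jacobi identity: the $2\times 2$ corner minor of $\Lambda_i^{\mathrm A}$ equals $C(Y_i;\lambda)$ times the complementary minor $\det(\mathbb I_{x_i}\otimes\Lambda_{i-1})=C(Y_{i-1};\lambda)^{x_i}$. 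That step is correct (and your remark that Corollary \ref{cor:symm}, or the transposed version of the third identity, is needed to replace $\langle l_{Y_i}|R|f_{Y_i}\rangle^2$ by $\langle l_{Y_i}|R|f_{Y_i}\rangle\langle f_{Y_i}|R|l_{Y_i}\rangle$ is the right fix); it is arguably cleaner than the paper's direct computation.

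The gap is the final step. What your combination actually produces on the right-hand side is
\begin{equation*}
C(Y_{i-1};\lambda)^{2x_i}\Bigl[\bigl(\lambda+x_i\langle l|R|l\rangle\bigr)\bigl(\lambda+x_i\langle f|R|f\rangle\bigr)-\bigl((\langle\omega_i|)^*|\omega_i\rangle\bigr)^2\langle f|R|l\rangle\langle l|R|f\rangle\Bigr]
\end{equation*}
(all resolvents of $Y_{i-1}$), and no reorganization turns this into $C(Y_{i-1};\lambda)^2$ times the bare resolvent minor of $Y_{i-1}$: the $\lambda^2$ and $\lambda x_i\langle\cdot|R|\cdot\rangle$ cross terms survive, the factor $\bigl((\langle\omega_i|)^*|\omega_i\rangle\bigr)^2$ is a nontrivial number at generic $\Phi$ (Lemma \ref{lem:zero} only characterizes when it vanishes, which is not assumed here), and the powers of $C(Y_{i-1};\lambda)$ do not drop from $2x_i$ to $2$. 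The reason you cannot reach the printed right-hand side is that it carries a subscript error: consistency with the paper's own Schur-complement computation, with the downstream rewriting of (\ref{eqn:g}) as $\delta_i\delta_{i-1}=\phi_i^2-\chi_i^2$ (where $\phi_i$ and $\chi_i$ are built from resolvent elements of $Y_i$), and with your own proposed $i=1$ sanity check (where $f_{Y_0}=l_{Y_0}$ forces the printed right-hand side to vanish identically) all require the bracket to involve $R(Y_i;\lambda)$ and the prefactor to be $C(Y_i;\lambda)^2$. Your Jacobi step already proves exactly that corrected identity in one line; you should stop there rather than trying to force the $Y_{i-1}$ form.
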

\begin{proof}
    Again, we proceed via direct calculation. 
    \begin{equation}
        C(Y_i;\lambda) = \det 
       \begin{pmatrix}
           -\lambda & (\langle \omega_i | \otimes \langle f_{Y_{i-1}} |)^*   & 0 \\
          (|\omega_i \rangle \otimes | f_{Y_{i-1}}\rangle )^* & \mathbb I_{x_i} \otimes \Lambda_{i-1} & |\omega_i \rangle \otimes | l_{Y_{i-1}}\rangle \\ 
          0 & \langle \omega_i | \otimes \langle l_{Y_{i-1}} | & -\lambda
       \end{pmatrix} = 
       \det 
       \begin{pmatrix}
           -\lambda & 0 & \langle \omega_i|^* \otimes \langle f_{Y_{i-1}} | \\
           0 &- \lambda & \langle \omega_i| \otimes \langle l_{Y_{i-1}} |  \\ 
           | \omega_i^* \rangle \otimes |f_{Y_{i-1}} \rangle & | \omega_i \rangle \otimes |l_{Y_{i-1}} \rangle & \mathbb I_{p \times p} \otimes \Lambda_{i-1}.
       \end{pmatrix}
    \end{equation}
    Now, using (\ref{eqn:diag}), and (\ref{eqn:offdiag}), we have 
    \begin{equation}
        C(Y_i;\lambda)=
        \Gamma \det \left[
        \begin{pmatrix} -\lambda & 0 \\ 0 & -\lambda\end{pmatrix} - 
        \frac{1}{\Gamma}
        \begin{pmatrix}
           \bra{\omega_i}^*\otimes\bra{f_{Y_{i-1}}} (\mathbb I_{x_i}\otimes \Lambda_{i-1})^{\mathrm A} \ket{\omega_i}^*\otimes \ket{f_{Y_{i-1}}} &  \bra{\omega_i}^*\otimes\bra{f_{Y_{i-1}}} (\mathbb I_{x_i}\otimes \Lambda_{i-1})^{\mathrm A} \ket{\omega_i}\otimes \ket{l_{Y_{i-1}}} \\ 
            \bra{\omega_i}\otimes\bra{l_{Y_{i-1}}} (\mathbb I_{x_i}\otimes \Lambda_{i-1})^{\mathrm A} \ket{\omega_i}^*\otimes \ket{f_{Y_{i-1}}} & 
            \bra{\omega_i}\otimes\bra{l_{Y_{i-1}}} (\mathbb I_{x_i}\otimes \Lambda_{Y_{i-1}})^{\mathrm A} \ket{\omega_i}\otimes \ket{l_{Y_{i-1}}}
        \end{pmatrix}
        \right]
    \end{equation}
    with $\Gamma = \det (\mathbb I_{x_i} \otimes \Lambda_{Y_{i-1}}) = C(Y_{i-1};\lambda)^{x_i}$.
    After using multilinearity of determinants, our result is complete.
\end{proof}

Now we have all of the ingredients necessary to start analyzing the spectrum of glued trees. The recurrence relation therein is most simply written in the following terms: define 
\begin{equation}
    \begin{aligned}
        Y_0 &= 0 \\ 
        \nu_i &=  \frac{1}{\prod_{j = 0}^i C(Y_j;\lambda)^{x_j-1}}\\
        \delta_i &= C(Y_i;\lambda) \nu_i \\ 
        \phi_i  &= \delta_i \langle f_{Y_i}  |  R(Y_i ;\lambda ) | f_{Y_i} \rangle \\ 
        \chi_i &= \delta_i \langle f_{Y_i}  | R(Y_i ;\lambda ) | l_{Y_i} \rangle,
    \end{aligned}
\end{equation}
and observe that (\ref{eqn:claims}) may be used to rewrite (\ref{eqn:g}) as 
\begin{equation}\label{eqn:nonlinrecur}
    \delta_i \delta_{i-1} = \phi_i^2 - \chi_i^2 
\end{equation}
and (\ref{eqn:claims}) itself can be rewritten as 
\begin{equation}\label{eqn:linrecur}
\begin{aligned}
    \phi_i = -\lambda \delta_{i-1} - x_i \phi_{i-1} \\ 
    \chi_i = \sigma (\langle \omega_i | )^* |\omega_i \rangle  \chi_{i-1}
    \end{aligned}
\end{equation}
with, again $\sigma \in \{-1,1\}$. In practice the sign of $\sigma$ is unimportant since it is squared in (\ref{eqn:nonlinrecur}). 
When (\ref{eqn:nonlinrecur}) and (\ref{eqn:linrecur}) admit a simple solution, it is possible to construct the characteristic polynomial of $Y_d = A^X$. 
We have been able to identify two situations where this is possible: the first is where $\Phi = 0$ or equivalently where $\langle \omega_i | (|\omega_i \rangle)^* = x_i$. The second is the case where $\chi_i = 0$ for all $i$. This occurs at $\Phi = \frac{2 \pi}{x_1}$. These two situations lead to Theorems \ref{thm:fluxless} and \ref{thm:fluxaf} respectively.
We address these two cases in turn.
\subsection{$\Phi = 0$}
The system of recurrence relations of interest to us here is given by 
\begin{equation}\label{eqn:simplerec1}
\begin{aligned}
    \delta_i \delta_{i-1} &= \phi_i^2 - \chi_i^2 \\ 
    \phi_i &= - \lambda \delta_{i-1} - x_i \phi_{i-1} \\ 
    \chi_i &= x_i \chi_{i-1}. 
\end{aligned}
\end{equation}
Use the second and third line of (\ref{eqn:simplerec1}) to rewrite the first as 
\begin{equation}\label{eqn:rsolve}
    \delta_i \delta_{i-1} =(\lambda \delta_{i-1} + x_i \phi_{i-1})^2 - x_{i}^2 \chi_{i-1}^2  = \lambda^2 \delta_{i-1}^2 + 2 \lambda x_i \phi_{i-1} \delta_{i-1} + x_i^2 (\phi_{i-1}^2 - \chi_{i-1}^2). 
\end{equation}
The last term in (\ref{eqn:rsolve}) is recognized as the first line of (\ref{eqn:simplerec1}) at $i -1$, so we may cancel the common factor of $\delta_{i-1}$ in all terms and write
\begin{equation}
    \delta_i =  \lambda^2 \delta_{i-1} + 2 \lambda x_i \phi_{i-1} + x_i^2 \delta_{i-2} = \lambda( \lambda \delta_{i-1} + x_i \phi_{i-1} ) + x_i (\lambda \phi_{i-1} + x_i \delta_{i-2}) 
\end{equation}
Evidently, if 
\begin{equation}
    \delta_{i-1} = - \lambda \phi_{i-1} - x_i \delta_{i-2},
\end{equation}
then 
\begin{equation}
    \delta_i = - \lambda \phi_i - x_i \delta_{i-1}. 
\end{equation}
Since this can be imposed by fixing the initial conditions of the recurrence relation, suppose that this is true, and define the sequence of polynomials $\gamma_i$ such that
\begin{equation}
    \gamma_i = \begin{cases}
        \delta_j & i = 2j + 1 \\ 
        \phi_j  & i = 2j.
    \end{cases}
\end{equation}
Evidently, $\gamma_i $ satisfies 
\begin{equation}\label{eqn:simp}
\begin{aligned}
    \gamma_i &= - \lambda \gamma_{i-1} - x_{\lfloor i/2 \rfloor} \gamma_{i-2}, \\ 
    \gamma_0 &= 1, \\ 
    \gamma_{-1} &= 0 \\ 
    \end{aligned}
\end{equation}
which we recognize immediately as the continuant\footnote{We leave it to the reader to verify that the initial conditions in (\ref{eqn:simp}) are correct.}, a recursive relation governing the characteristic polynomial of a tridiagonal matrix. 
It is then possible to express $C(Y_i;\lambda)$ as a product of $\gamma_j$ for $j = 0$ to $j = i$. Doing so immediately yields Theorem \ref{thm:fluxless}.

\subsection{$\Phi = \frac{2 \pi}{x_1}$}
In this case, all of the $\chi_i$ variables are identically zero since $\langle \omega_1 | (|\omega_1 \rangle )^* = 0$. In this case, the system of recurrence relations is given by 
\begin{equation}
    \begin{aligned}
        \delta_i \delta_{i-1} &= \phi_i^2 \\
        \phi_i &= - \lambda \delta_{i-1} - x_i \phi_{i-1}.
    \end{aligned}
\end{equation}
Solving this system of recurrence relations is much easier: simply note that $\phi_i = \sqrt{\delta_i \delta_{i-1}}$ implies that 
\begin{equation}
    \sqrt{\delta_i\delta_{i-1}} = - \lambda \delta_{i-1} - x_i \sqrt{\delta_{i-1} \delta_{i-2}} \iff \sqrt{\delta_i} = - \lambda \sqrt{\delta_{i-1}} - x_i \sqrt{\delta_{i-2}}
\end{equation}
Then, defining 
\begin{equation}
    \gamma_i = \sqrt{\delta_i}, 
\end{equation}
leads to the same continuant relation as above. Proceeding to write $C(Y_i;\lambda)$ in terms of $\delta_j$ (and in turn $\gamma_j^2$)  leads immediately to Theorem \ref{thm:fluxaf}. 

\subsection{$p$--nary trees }
For $p$--nary trees, we can proceed even further since the continuants that arise in this case are of tridiagonal matrices that are also Toeplitz. Such matrices are known to have closed form eigenvalues, so one can produce eigenvalue spectra of glued $p$--nary trees without even needing to diagonalize a tridiagonal matrix. 
\begin{obs}\label{obs:poly}
    Consider the polynomial defined by the recurrence relation 
    \begin{equation}\label{eqn:eigsrecurr}
        \begin{aligned}
            \gamma_n &= -\lambda \gamma_{n-1} - p \gamma_{n-2} \\ 
            \gamma_0 &= 1\\
            \gamma_{-1} &= 0.
        \end{aligned}
    \end{equation}
    The roots of $\gamma_n$ are 
    \begin{equation}
        \lambda = \pm 2 \sqrt{p} \cos(\frac{\pi x}{n+1})
    \end{equation}
    for $x = 1,2,\dots,n$.
    Furthermore, note that 
    \begin{equation}
        \left. \frac{1}{m !} \frac{\partial^m}{\partial \lambda^m } \gamma_n \right|_{\lambda = 0} = \begin{cases}
            0 & n + m \text{ odd} \\ 
            (-p)^{\frac{n - m}{2}} \binom{\frac{n + m}{2}}{\frac{n - m}{2}} & n + m \text{ even}
        \end{cases}
    \end{equation}

\end{obs}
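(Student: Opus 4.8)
The plan is to extract everything from the ordinary generating function of the sequence, $F(z):=\sum_{n\ge 0}\gamma_n z^n$. First I would multiply the recurrence (\ref{eqn:eigsrecurr}) by $z^n$ and sum over $n\ge 0$; because the only boundary data are $\gamma_0=1$ and $\gamma_{-1}=0$, all the shifted sums telescope into a single constant and one finds $(1+\lambda z+p z^2)F(z)=1$, that is $F(z)=(1+\lambda z+p z^2)^{-1}$. Both assertions then follow from this identity.

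For the roots, factor $1+\lambda z+p z^2=(1-\alpha z)(1-\beta z)$, so $\alpha+\beta=-\lambda$ and $\alpha\beta=p$, and partial fractions give $\gamma_n=(\alpha^{n+1}-\beta^{n+1})/(\alpha-\beta)$. For $|\lambda|<2\sqrt p$ the roots are a conjugate pair of modulus $\sqrt p$; writing $\alpha=\sqrt p\,e^{\mathrm i\theta}$, $\beta=\sqrt p\,e^{-\mathrm i\theta}$ gives $\lambda=-2\sqrt p\cos\theta$ and $\gamma_n=p^{n/2}\sin((n+1)\theta)/\sin\theta$, a rescaled Chebyshev polynomial of the second kind, which vanishes precisely at $\theta=\pi x/(n+1)$, $x=1,\dots,n$. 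Since (\ref{eqn:eigsrecurr}) shows $\gamma_n$ is a degree-$n$ polynomial in $\lambda$ with leading term $(-\lambda)^n$, these $n$ distinct numbers $\lambda=-2\sqrt p\cos(\pi x/(n+1))$ exhaust its roots; the overall sign in the statement is immaterial because $\{\cos(\pi x/(n+1)):x=1,\dots,n\}$ is symmetric under negation.

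For the Taylor coefficients, expand $F(z)=\sum_{k\ge 0}(-1)^k(\lambda z+p z^2)^k$ and apply the binomial theorem, $(\lambda z+p z^2)^k=\sum_j\binom{k}{j}p^j\lambda^{k-j}z^{k+j}$; reading off the coefficient of $z^n$ (so $k=n-j$, $0\le 2j\le n$) gives the closed form $\gamma_n=\sum_{0\le 2j\le n}(-1)^{n-j}\binom{n-j}{j}p^j\lambda^{n-2j}$. Since $\tfrac1{m!}\partial_\lambda^m\gamma_n|_{\lambda=0}$ is just the coefficient of $\lambda^m$, the condition $n-2j=m$ has no solution when $n+m$ is odd (coefficient $0$), and otherwise forces $j=(n-m)/2$, $n-j=(n+m)/2$, which yields the stated value after collecting the power of $p$ and the sign. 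I do not expect a genuine obstacle: the only delicate point is the parity/sign bookkeeping in this last substitution — matching $(-1)^{n-j}$ at $j=(n-m)/2$ against the claimed $(-p)^{(n-m)/2}$ — plus the easy-to-overlook check that $\gamma_0=1$, $\gamma_{-1}=0$ are exactly the data making $(1+\lambda z+p z^2)F(z)=1$ hold at orders $z^0$ and $z^1$. A fully inductive derivation of the closed form straight from (\ref{eqn:eigsrecurr}), bypassing power series, is an equivalent fallback.
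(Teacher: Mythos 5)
Your route is genuinely different from the paper's. The paper identifies (\ref{eqn:eigsrecurr}) with the determinant recurrence of an $n\times n$ tridiagonal Toeplitz matrix with diagonal $\lambda$ and off--diagonal $\sqrt p$, cites the known spectrum of such matrices, and writes down a Binet--type closed form; you instead work self--containedly from the generating function $F(z)=(1+\lambda z+pz^2)^{-1}$ and recover the Chebyshev structure $\gamma_n=p^{n/2}\sin((n+1)\theta)/\sin\theta$ directly. The root claim goes through cleanly on your route: the set $\{\pm 2\sqrt p\cos(\pi x/(n+1))\}$ is symmetric under negation, and your degree count shows the $n$ values you exhibit exhaust the roots.

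The one place your argument does not close is precisely the ``delicate sign bookkeeping'' you defer. Your expansion $\gamma_n=\sum_{0\le 2j\le n}(-1)^{n-j}\binom{n-j}{j}p^j\lambda^{n-2j}$ is correct for the recurrence as written (check: $\gamma_1=-\lambda$, $\gamma_3=-\lambda^3+2p\lambda$), but reading off the coefficient of $\lambda^m$ at $j=(n-m)/2$ gives $(-1)^{(n+m)/2}\,p^{(n-m)/2}\binom{(n+m)/2}{(n-m)/2}$, whereas the Observation asserts $(-1)^{(n-m)/2}\,p^{(n-m)/2}\binom{(n+m)/2}{(n-m)/2}$. These differ by $(-1)^m$, which is $-1$ for, e.g., $(n,m)=(1,1)$ or $(3,1)$, so your computation does not ``yield the stated value''; it contradicts it. The discrepancy is a global $\lambda\mapsto-\lambda$ (equivalently $\gamma_n\mapsto(-1)^n\gamma_n$): the paper's own proof in fact solves $\gamma_n=+\lambda\gamma_{n-1}-p\gamma_{n-2}$ (that is the determinant recurrence of the tridiagonal matrix it writes down, and its Binet formula involves the roots of $t^2-\lambda t+p$), so the printed coefficient formula is consistent with that sign convention rather than with (\ref{eqn:eigsrecurr}) as stated. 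To finish honestly you should either report the coefficient formula you actually derived, or note explicitly that the Observation's formula holds for $(-1)^n\gamma_n$; asserting that the signs match as written would be an error.
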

\begin{proof}
    It has long been known \cite{DAFONSECA20017}, that the recurrence relation (\ref{eqn:eigsrecurr}) describes the determinant of the $n$ by $n$ matrix $H$ with elements
    \begin{equation}
        H_{ij} = \begin{cases}
            \lambda & i = j \\
            \sqrt{p} & |i - j| = 1 \\ 
            0 & |i - j| > 1 
        \end{cases}
    \end{equation}
    Furthermore, by direct solution, 
    \begin{equation}
    \begin{aligned}
        \gamma_n &= \frac{1}{2^{n+1}\sqrt{\lambda^2 - 4 p}}  \left( \left[\lambda + \sqrt{\lambda^2 - 4p} \right]^{n+1} - \left[ \lambda - \sqrt{\lambda^2 - 4 p} \right]^{n+1} \right) \\ 
        &= \sum_{l = 0}^{\lfloor \frac{n}{2} \rfloor} \binom{n-l}{l}(-p)^l \lambda^{n - 2 l }.
        \end{aligned}
    \end{equation}
    Note that we have applied the binomial theorem twice. 
\end{proof}

    \section{Proof of flat bands for one dimensional chains from characteristic polynomial recurrence relations}\label{app:chrpolyafb}
    Let $X = \{x_1, x_2, \dots, x_d\}$ be a sequence of positive integers with $x_d > 1$.  Let $X_i = \{ x_j \,:\, j \leq i \}$. 
    Just as we did in Appendix \ref{app:chrpolyrec}, we will use the following shorthand notation: 
    \begin{equation}
        \begin{aligned}
           Y_i := A^{X_i}(\Phi) \\ 
           \Lambda_i := Y_i - \lambda.
        \end{aligned}
    \end{equation}
    We consider the lattice consisting of glued trees chained together by their roots. 
    By exploiting translation invariance, it is possible to compute the entries of the corresponding complex weighted adjacency matrix in the block diagonal space consisting of states with a particular translation eigenvalue. 
    \begin{defn}[Bloch Hamiltonian]\label{defn:latticehopping}
   Take $k \in [0,2 \pi)$  and define 
   \begin{equation}
       \mathcal F(k,\Phi) = \begin{pmatrix}
           0 & \bra{\omega_d}^*\otimes \bra{f_{Y_{d-1}}} + e^{\mathrm i k}\bra{\omega_d}\otimes\bra{l_{Y_{d-1}}} \\
           \ket{\omega_d}^*\otimes\ket{f_{Y_{d-1}}} + e^{-\mathrm i k}\ket{\omega_d}\otimes\ket{l_{Y_{d-1}}} & \mathbb I_{x_d} \otimes Y_{d-1}
       \end{pmatrix}.
   \end{equation}
   We say that $\mathcal F(k,\Phi)$ is the \textbf{Bloch Hamiltonian} of $Y_d = A^X(\Phi)$. 
\end{defn}
Precisely, $\mathcal F(k,\Phi)$ is a single block of the Bose--Hubbard hopping matrix associated with replacing each link on an infinite one--dimensional chain with a glued tree whose adjacency matrix is $A^X(\Phi)$. 

\begin{lem}\label{lem:ham}
Let $\mathcal F(k,\Phi)$ be the Bloch Hamiltonian of some glued tree. We suppress the dependence of $\mathcal F$  on $k$ and $\Phi$ for the sake of cleanliness. 
The characteristic polynomial of $\mathcal F$ satisfies 
\begin{equation}
    C(\mathcal F;\lambda)  =  \lambda C(Y_{d-1};\lambda)^{x_d} + \bra{f_{Y_d}}\Lambda_d^{\mathrm A}\ket{f_{Y_d}} + \bra{l_{Y_d}} \Lambda_d^{\mathrm A} \ket{l_{Y_d}} + e^{\mathrm i k} \bra{l_{Y_d}}\Lambda_d^{\mathrm A}\ket{f_{Y_d}} + e^{-\mathrm  i k} \bra{f_{Y_d}} \Lambda_d^{\mathrm A} \ket{l_{Y_d}}.
\end{equation}
Furthermore, if 
\begin{equation}
    [\mathcal F(Y_d;k), J_{\mathcal F(Y_d;k)}] = 0,
\end{equation}
(as is the case in the canonical gauge) then, by Corollary \ref{cor:symm}, 
\begin{equation}
    C(\mathcal F;\lambda) =  \lambda C(Y_{d-1};\lambda)^{x_d} + 2 \langle f_{Y_d}| \Lambda_d^{\mathrm A} |f_{Y_d} \rangle + 2\cos(k)\langle l_{Y_d} | \Lambda_d^{\mathrm A} |f_{Y_d} \rangle
\end{equation}
\end{lem}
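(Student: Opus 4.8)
The plan is to compute $C(\mathcal F;\lambda)=\det(\mathcal F - \lambda)$ by a block-cofactor expansion analogous to the one used in Lemma \ref{lemma:huge}, then invoke the exchange-matrix symmetry from Corollary \ref{cor:symm} to collapse the off-diagonal terms into a single cosine. First I would write $\mathcal F-\lambda$ in the $2\times 2$ block form dictated by Definition \ref{defn:latticehopping}, with the scalar block $-\lambda$ in the top-left corner (the ``root'' vertex), the big block $\mathbb I_{x_d}\otimes\Lambda_{d-1}$ in the bottom-right, and the coupling vector $\ket{v_k}:=\ket{\omega_d}^*\otimes\ket{f_{Y_{d-1}}} + e^{-\mathrm i k}\ket{\omega_d}\otimes\ket{l_{Y_{d-1}}}$. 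Applying the determinant identity already recorded in Lemma \ref{lemma:huge}, namely $\det\!\begin{pmatrix} A & \ket{x}\\ \bra{y} & \alpha\end{pmatrix} = \alpha\det(A) - \bra{y}A^{\mathrm A}\ket{x}$, gives
\begin{equation}
    C(\mathcal F;\lambda) = -\lambda\,\det(\mathbb I_{x_d}\otimes\Lambda_{d-1}) - \bra{v_k}(\mathbb I_{x_d}\otimes\Lambda_{d-1})^{\mathrm A}\ket{v_k},
\end{equation}
except that the top-left corner sits in the opposite position from the convention in that identity, so the sign bookkeeping must be done carefully; I expect the net result to match $\lambda C(Y_{d-1};\lambda)^{x_d} + \dots$ after tracking an overall sign from the permutation that moves the scalar row/column.

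Next I would expand $\bra{v_k}(\mathbb I_{x_d}\otimes\Lambda_{d-1})^{\mathrm A}\ket{v_k}$ bilinearly into four pieces. Using $(\mathbb I_{x_d}\otimes\Lambda_{d-1})^{\mathrm A} = C(Y_{d-1};\lambda)^{x_d-1}\,\mathbb I_{x_d}\otimes\Lambda_{d-1}^{\mathrm A}$ (the block-adjugate identity from Lemma \ref{lemma:huge}) together with $\langle\omega_d|\omega_d\rangle = x_d$ and $(\langle\omega_d|)^*|\omega_d\rangle$ for the cross terms, the $\ket{f}\!\bra{f}$ and $\ket{l}\!\bra{l}$ contributions reproduce $\bra{f_{Y_d}}\Lambda_d^{\mathrm A}\ket{f_{Y_d}}$ and $\bra{l_{Y_d}}\Lambda_d^{\mathrm A}\ket{l_{Y_d}}$ via equations (\ref{eqn:diag}) and its analogue, while the $e^{\pm\mathrm i k}$ cross terms reproduce $e^{\mathrm i k}\bra{l_{Y_d}}\Lambda_d^{\mathrm A}\ket{f_{Y_d}}$ and $e^{-\mathrm i k}\bra{f_{Y_d}}\Lambda_d^{\mathrm A}\ket{l_{Y_d}}$ via (\ref{eqn:offdiag}) — this is precisely why the edge matrix elements of $\Lambda_d^{\mathrm A}$, rather than of $\Lambda_{d-1}^{\mathrm A}$, show up. The term $-\lambda\det(\mathbb I_{x_d}\otimes\Lambda_{d-1}) = -\lambda C(Y_{d-1};\lambda)^{x_d}$, after the overall sign is fixed, becomes $+\lambda C(Y_{d-1};\lambda)^{x_d}$, giving the first displayed identity of the lemma.

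For the second identity I would simply invoke Corollary \ref{cor:symm}: in the canonical gauge $[Y_d, J_{Y_d}]=0$, and since $\mathcal F$ is built from $Y_d$ by the same block template with a $k$-dependent phase that respects the same reflection, one checks $[\mathcal F, J_{\mathcal F}]=0$ (this is the only genuinely new verification, and it follows from the second point of Corollary \ref{cor:symm} applied with $|\omega_d\rangle$ replaced by the $k$-twisted coupling, whose modulus pattern is palindromic). Corollary \ref{cor:symm} then gives $\bra{l_{Y_d}}\Lambda_d^{\mathrm A}\ket{l_{Y_d}} = \bra{f_{Y_d}}\Lambda_d^{\mathrm A}\ket{f_{Y_d}}$ and $\bra{l_{Y_d}}\Lambda_d^{\mathrm A}\ket{f_{Y_d}} = \bra{f_{Y_d}}\Lambda_d^{\mathrm A}\ket{l_{Y_d}}$, so the four boundary terms collapse to $2\bra{f_{Y_d}}\Lambda_d^{\mathrm A}\ket{f_{Y_d}} + 2\cos(k)\bra{l_{Y_d}}\Lambda_d^{\mathrm A}\ket{f_{Y_d}}$, completing the proof.

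\textbf{Anticipated main obstacle.} The bookkeeping of the overall sign and of the $\sigma\in\{-1,1\}$ that already appears in Lemma \ref{lemma:huge} — i.e.\ making sure the scalar-block cofactor expansion is set up with the root vertex in a consistent position and that the permutation sign is tracked correctly so that the leading term comes out as $+\lambda C(Y_{d-1};\lambda)^{x_d}$ rather than with a spurious minus. The linear-algebra content is routine given the identities already proved; the risk is purely in sign conventions and in confirming that $\mathcal F$ inherits the exchange symmetry from $Y_d$ in the canonical gauge.
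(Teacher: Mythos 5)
Your proposal is correct and follows essentially the same route as the paper: both expand the determinant bilinearly in the coupling vector into four pieces, identify each with a matrix element of $\Lambda_d^{\mathrm A}$ via (\ref{eqn:diag}) and (\ref{eqn:offdiag}), and then collapse the boundary terms using Corollary \ref{cor:symm} (the paper applies multilinearity of the determinant first and the bordered-determinant identity second, you do it in the opposite order, which is the same computation). Your anticipated sign obstacle for the leading term is actually a non-issue: moving the scalar block from one diagonal corner to the other permutes rows and columns by the same permutation, so the bordered-determinant identity holds unchanged, and the $+\lambda C(Y_{d-1};\lambda)^{x_d}$ arises because each of the two diagonal quadratic-form pieces contributes an extra $+\lambda C(Y_{d-1};\lambda)^{x_d}$ through (\ref{eqn:claims}), overwhelming the initial $-\lambda C(Y_{d-1};\lambda)^{x_d}$.
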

\begin{proof}
    Use the multilinearity of determinants along with Lemma \ref{lemma:huge} and its intermediate steps.

   \begin{equation}\label{eqn:mutlilinear}
   \begin{aligned}
       \det\Bigl[\mathcal F(k,\Phi) - \lambda \Bigr] &= \det\begin{pmatrix}
           -\lambda & \bra{\omega_d}^*\otimes \bra{f_{Y_{d-1}}} + e^{\mathrm i k}\bra{\omega_d}\otimes\bra{l_{Y_{d-1}}} \\
           \ket{\omega_d}^*\otimes\ket{f_{Y_{d-1}}} + e^{-\mathrm i k}\ket{\omega_d}\otimes\ket{l_{Y_{d-1}}} & \mathbb I_{x_d} \otimes \Lambda_{d-1} 
       \end{pmatrix} \\
       &=\det \begin{pmatrix}
           -\lambda & \bra{\omega_d}^*\otimes \bra{f_{Y_{d-1}}}  \\
           \ket{\omega_d}^*\otimes\ket{f_{Y_{d-1}}}  & \mathbb I_{x_d} \otimes \Lambda_{d-1} 
       \end{pmatrix} +  
       \det\begin{pmatrix}
           0 & \bra{\omega_d}\otimes\bra{l_{Y_{d-1}}} \\
           \ket{\omega_d}\otimes\ket{l_{Y_{d-1}}} & \mathbb I_{x_d} \otimes \Lambda_{d-1}
       \end{pmatrix}  \\ 
       &+e^{\mathrm i k}\det \begin{pmatrix}
            0& \bra{\omega_d}\otimes\bra{l_{Y_{d-1}}} \\
           \ket{\omega_d}^*\otimes\ket{f_{Y_{d-1}}}  & \mathbb I_{x_d} \otimes \Lambda_{d-1}
       \end{pmatrix} + 
       e^{-\mathrm i k}
       \det
       \begin{pmatrix}
           0 & \bra{\omega_d}^*\otimes \bra{f_{Y_{d-1}}}  \\
           \ket{\omega_d}\otimes\ket{l_{Y_{d-1}}} & \mathbb I_{x_d} \otimes \Lambda_{d-1}
       \end{pmatrix}   \\ 
           \end{aligned}
   \end{equation}
   by the multilinearity of determinants. 
   The first term in (\ref{eqn:mutlilinear}) can be recognized as (\ref{eqn:diag})
   and likewise (\ref{eqn:diag}) implies
   \begin{equation}
       \begin{aligned}
       \det\begin{pmatrix}
           0 & \bra{\omega_d}\otimes\bra{l_{Y_{d-1}}} \\
           \ket{\omega_d}\otimes\ket{l_{Y_{d-1}}} & \mathbb I_{x_d} \otimes \Lambda_{d-1}
       \end{pmatrix}   &= -
      \langle \omega_d | \otimes \langle l_{Y_{d-1}} | (\mathbb I_{x_d} \otimes \Lambda_{d-1})^{\mathrm A} |\omega_d \rangle \otimes |l_{Y_{d-1}} \rangle  \\
      &=   \langle f_{Y_d} | \Lambda_d^{\mathrm A} | f_{Y_d} \rangle + \lambda C(Y_{d-1} ;\lambda)^{x_d}.
      \end{aligned}
   \end{equation}
   Finally, the second line of (\ref{eqn:mutlilinear}) can be rewritten in the desired form by comparison with (\ref{eqn:offdiag}).
\end{proof}
    Lemma \ref{lem:travel} then immediately implies that 
    \begin{equation}
        \frac{\mathrm d}{\mathrm d k} C(\mathcal F(k,\Phi);\lambda) = 0
    \end{equation}
    for all values of $\lambda$ so long as $\Phi \in F^X$, which is sufficient to show that all of the eigenvalue bands of $\mathcal F$ are independent of $k$.
    
    \section{Glued tree spectra from Eigenstate construction}\label{app:chrpolystat}

    We now examine the eigenvectors of glued tree adjacency matrices. 
    Let $X = \{x_1, x_2, \dots, x_d\}$ be a set of positive integers with $x_d > 1$. We concern ourselves with the eigenvectors of $A^X$. 
    We produce the eigenvectors of $X$ by using Gram--Schmidt orthogonalization. 
Define the states
\begin{equation}
\begin{aligned}
    |\varphi_0 \rangle &= |f_{A^X}\rangle \\
    |\varphi_i \rangle &= \frac{1}{\sqrt{x_i} }\left(\mathbb I - \sum_{j = 0}^{i-1} |\varphi_j \rangle \langle \varphi_j | \right) A^X |\varphi_{i-1} \rangle.
\end{aligned}
\end{equation}
The observant reader may notice that 
\begin{equation}
    |\varphi_i\rangle = \frac{1}{\left(\prod_{j = 1}^{i} x_i\right)^{\frac{1}{2}}}\sum_{\text{dist}(l,0) = i} |l \rangle.
\end{equation}
where $\text{dist}(u,v)$ denotes graph distance in $\mathcal G(A^X)$. 
Furthermore, 
\begin{equation}
    A^X |\varphi_i \rangle  = \sqrt{x_i} |\varphi_{i-1} \rangle + \sqrt{x_{i+1}} |\varphi_{i+1}\rangle
\end{equation}
for $i = 1,2,\dots, d-1$. 
For $i = d+1, d + 2,\dots 2d - 1$, 
\begin{equation}
    A^X |\varphi_i \rangle  = \sqrt{x_{ 2d - i + 1}} |\varphi_{i-1} \rangle + \sqrt{x_{2d - i}} |\varphi_{i+1}\rangle
\end{equation}
For the remaining values of $i$, 
\begin{equation}
\begin{aligned}
    A^X|\varphi_0 \rangle &= \sqrt{x_1} |\varphi_1 \rangle  \\ 
    A^X|\varphi_{d} \rangle &= \sqrt{x_d} |\varphi_{d-1}\rangle + \sqrt{x_d} |\varphi_{d+1}\rangle\\
    A^X|\varphi_{2d} \rangle &= \sqrt{x_1} |\varphi_{2d - 1} \rangle  \\ 
    \end{aligned}
\end{equation}
The Hilbert space of vectors spanned by $|\varphi_i\rangle$ states is thus closed. 

Then, choose $|\varphi^m_1\rangle$ for $m = 1, 2, \dots , x_1-1$ to be an orthonormal set of vectors that spans the compliment of $|\varphi_1\rangle$. Define 
\begin{equation}
    |\varphi_i^m \rangle = \frac{1}{\sqrt{x_i} }\left(\mathbb I - \sum_{j = 0}^{i-1} |\varphi_j^m \rangle \langle \varphi_j^m | \right) A^X |\varphi_{i-1}^m \rangle.
\end{equation}
for every such vector, yielding yet another $x_1 - 1$ Hilbert spaces closed under the application of $A^X$. 
Repeating this procedure until it no longer produces new ``nice" subspaces provides a unitary $U$ satisfying Theorem \ref{thm:fluxless}

This technique of identifying Hilbert spaces closed under application of a matrix that are of dimension linear in the depth of a glued tree is possible for glued trees at $\Phi = 0$. This technique, sadly, is not effective at producing spectra for complex weighted glued trees because interference effects break many of the symmetries present in $A^X$ (namely various permutation symmetries corresponding to the freedom to draw children of a given node in any order). 
\end{appendix}

\bibliography{bibliography}
\end{document}